\documentclass{article}
\usepackage[utf8]{inputenc}
\usepackage[margin = 3.5cm]{geometry} 

\usepackage{authblk} 
\usepackage{graphicx} 
\usepackage{hyperref}
\usepackage{url}
\usepackage{amsmath}
\usepackage{amsfonts}
\usepackage{mathrsfs}
\usepackage{amsthm}
\usepackage{amssymb}
\usepackage{mathtools}
\usepackage{subcaption}
\usepackage[linesnumbered,ruled]{algorithm2e}
\usepackage{comment}
\usepackage{multirow}
\usepackage{accents}
\usepackage{tabularx}
\usepackage{booktabs}
\usepackage{cleveref}

\newtheorem{theorem}{Theorem}
\newtheorem{corollary}[theorem]{Corollary}
\newtheorem{lemma}[theorem]{Lemma}
\newtheorem{definition}[theorem]{Definition}
\newtheorem{remark}[theorem]{Remark}

\newcommand{\ILA}{\text{ILA}}
\newcommand{\ULA}{\text{ULA}}
\newcommand{\rand}{\text{RAND}}
\DeclareMathOperator*{\argmin}{arg\,min}

\usepackage{xcolor}

\newcommand{\regret}{\textsc{Regret}}
\newcommand{\val}{\textsc{Val}}

\title{Privacy, Prediction, and Allocation} 

\author[1]{Ben Jacobsen}
\author[2]{Nitin Kohli\thanks{NK was supported by a grant from the Gates Foundation.}}
\affil[1]{Department of Computer Sciences, University of Wisconsin-Madison}
\affil[2]{Center for Effective Global Action, UC Berkeley}

\date{}

\begin{document}

\maketitle


\begin{abstract}
Algorithmic predictions are increasingly used to inform the allocation of scarce resources. 
The promise of these methods is that, through machine learning, they can better identify the people who would benefit most from interventions. 
Recently, however, several works have called this assumption into question by demonstrating the existence of settings where simple, unit-level allocation strategies can meet or even exceed the performance of those based on individual-level targeting.
Separately, other works have objected to individual-level targeting on privacy grounds, leading to an unusual situation where a single solution, unit-level targeting, is recommended for reasons of both privacy and utility.
Motivated by the desire to fully understand the interplay of privacy and targeting levels, we initiate the study of aid allocation systems that satisfy \textit{differential privacy}, synthesizing existing works on private optimization with the economic models of aid allocation used in the non-private literature. 
To this end, we investigate private variants of both individual and unit-level allocation strategies in both stochastic and distribution-free settings under a range of constraints on data availability. 
Through this analysis, we provide clean, interpretable bounds characterizing the tradeoffs between privacy, efficiency, and targeting precision in allocation.
\end{abstract}

\section{Introduction}
\label{sec:intro}

Public institutions, humanitarian organizations, and private firms increasingly use data to design and deliver interventions across populations. 
Notable examples can be found across such varied domains as education~\cite{bruce2011track,perdomo_difficult_2023}, humanitarian response \cite{aiken2022machine, aiken2023program, kohli2024privacy, kohli_enabling_2024}, 
housing assistance~\cite{mashiat2025pays}, and financial inclusion~\cite{bjorkegren2020behavior,blumenstock2023big, kohli_enabling_2024}.
A common theme across these applications is the growing use of personal and behavioral data to decide how resources and benefits should be allocated to improve welfare outcomes.


When designing such interventions, the level at which aid is targeted has far-reaching policy implications~\cite{povertyactionlab2022targeting, worldbank2022revisiting}. Two choices are typical \cite{shirali_allocation_2024}:
individual-level allocation (ILA) draws on detailed data about specific people -- either directly observed or predicted -- to assign benefits one person at a time. It is often praised for precision and efficiency~\cite{kleinberg2018human} but criticized for its reliance on sensitive personal data~\cite{raviv2024citizens}. Unit-level allocation (ULA) instead uses aggregated data about groups, such as geographic areas or demographic clusters, to allocate resources collectively. It requires less personal information and may offer greater inclusivity, but this comes at the cost of coarser allocation precision \cite{aiken2024targeting}.
 
Privacy concerns arise at all levels of this process
~\cite{data_how_nodate,luvison_low-cost_2024, wang_not_2023}, but the focus of this work is on the privacy of the allocation itself. 
For example, when allocations are based on sensitive characteristics such as poverty status, the mere fact that an individual receives an anti‑poverty benefit discloses that characteristic, which has been shown to provoke shame or jealousy among peers
and to expose individuals to additional, well‑documented privacy risks 
\cite{kahn2025expanding}. A promising approach to address this 
problem is to use \textit{differential privacy} (DP), a rigorous mathematical framework for limiting the information that an algorithm leaks about its input. 
But although a number of prior works have investigated private variants of ILA from an optimization perspective~\cite{hsu_private_2016,hsu2016jointly,chen_noisy_nodate}, and there are examples of systems that have employed DP for both ULA \cite{kohli2024privacy} and ILA  \cite{kohli_enabling_2024} in specific applications, there is  limited understanding on the fundamental tradeoffs between privacy and efficiency in allocation systems more broadly. 
In particular, it is not clear to what extent the addition of privacy requirements should influence how administrators navigate the fundamental tension between spending their limited budget on collecting more data vs. helping more people, which has been much discussed in recent works~\cite{shirali_allocation_2024,perdomo_relative_nodate,fischer-abaigar_value_2025}.

In this paper, we help to close this gap by systematically analyzing the tradeoffs between ILA and ULA across different levels of data availability, subject to DP constraints. As a warm-up, we first consider an idealized setting with full access to each individual's welfare where the only challenge is to satisfy DP (\Cref{sec:direct}). We then turn our attention to more challenging settings where administrators have limited access to individual welfare scores and must balance paying for greater data access with allocating greater amounts of aid. 
In \Cref{sec:sampling}, we consider an administrator that can pay to sample welfare scores directly, and design strategies with robust worst-case bounds on regret even when those scores are fixed arbitrarily.
Then, in \Cref{sec:learning}, we introduce distributional assumptions by considering the setting (increasingly common in practice~\cite{aiken2022machine, aiken2023program,kohli_enabling_2024}) where an administrator has access to auxiliary information (features) on the whole population and samples individual welfare scores to train a predictive model. Building on sample complexity bounds for DP-SCO~\cite{feldman2020private}, we investigate when and how approaches based on statistical modeling can improve the efficiency of private allocations.

For each setting we consider, we derive clean, interpretable bounds describing the asymptotic regimes where one strategy outperforms the other (Sections \ref{sec:sampling-regimes} and \ref{sec:learning-regimes}). 
Perhaps surprisingly, we find that the choice of DP parameter has little impact on the optimal choice of strategy: we present simple, practical DP algorithms for ULA and ILA whose regret relative to non-private baselines vanishes asymptotically. 
Beyond the level of privacy, we find that ULA is more efficient than ILA when budgets are small, data is expensive, average welfare is low, inequality is high, and individual welfare is difficult to predict. 
These results are consistent with and extend those of other recent works~\cite{shirali_allocation_2024,perdomo_difficult_2023,perdomo_relative_nodate}, indicating that many of the defining features of ILA and ULA carry over when privacy constraints are introduced. 
Taken together, our results provide the first general framework for navigating tradeoffs between privacy, efficiency, and targeting precision in the design of systems for fair allocation.

\section{Preliminaries}
\label{sec:prelim}

\subsection{Differential Privacy}
\label{sec:dp}

\textit{Differential privacy} (DP) is a mathematical notion of privacy designed for statistical tasks~\cite{dwork2006calibrating}. Informally speaking, a randomized mechanism satisfies DP if changing one individual's data doesn't change the distribution of the mechanism's outputs by too much. Our privacy guarantees are stated in terms of zCDP, but can be easily translated to R\'enyi DP or approximate DP using standard results~\cite{bun2016concentrated}.

\begin{definition}[$\psi$-zero-Concentrated Differential Privacy \cite{bun2016concentrated}]\label{def:zCDP}
    A mechanism $M: \mathcal{X}^n \to \mathcal{Y}$ is $\psi$-zCDP
    if, for all $x, x' \in \mathcal{X}^n$ differing on a single entry and all $\alpha \in (1, \infty)$, we have
        $D_\alpha(M(x) \Vert M(x')) \leq \psi \alpha$,
    where $D_\alpha(M(x) \Vert M(x'))$ is the $\alpha$-R\'enyi divergence between the distribution of $M(x)$ and the distribution of $M(x')$.
\end{definition}

Definition \ref{def:zCDP} turns out to be slightly too strict for computing allocations~\cite{hsu_private_2016}: if the output of our algorithm is a full allocation, then satisfying DP implies that the aid allocated to an individual cannot depend too much on that same individual's welfare, which is clearly incompatible with effective targeting. It is therefore standard to use a slight relaxation of DP called Joint DP~\cite{kearns2015robustmediatorslargegames}. Essentially, instead of publishing our full allocation, we separately tell each individual whether or not they were chosen to receive aid. Satisfying Joint DP then requires that changing one individual's data won't change the joint distribution of \textit{everybody else's} outputs by too much, but the original individual's output can change arbitrarily. 


\begin{definition}[Joint $\psi$-zero-Concentrated Differential Privacy; Adapted from ~\cite{kearns2015robustmediatorslargegames}]
\label{def:Joint-zCDP}
    Let $M: \mathcal{X}^n \to \mathcal{Y}^n$ be a randomized mechanism that produces $n$ outputs $ y = (y_1,\ldots,y_n)$ when provided $x = (x_1,\ldots, x_n)$. Let $M(x)_{-i}$ denote the output of $M(x)$ without the $i^{th}$ component. We say that $M$ satisfies Joint $\psi$-zCDP if for all $i \in [n]$, all $x, x' \in \mathcal{X}^n$ differing in their $i^{th}$ entry, and all $\alpha \in (1, \infty)$, we have
    $D_\alpha(M(x)_{-i} \Vert M(x')_{-i}) \leq \psi \alpha$.
\end{definition}

A classic result from Joint DP is the \textit{Billboard Lemma}, which allows us to reduce the task of constructing Joint DP mechanisms to standard DP mechanisms.

\begin{lemma}[Billboard Lemma; Adapted from \cite{hsu_private_2016} Lemma 1]\label{lem:billboard} 
    Let $M: \mathcal{X}^n \to \mathcal{Y}$ satisfy $\psi$-zCDP. Then for any function $f: \mathcal{Y} \times \mathcal{X} \to \mathcal{Z}$, the mechanism $M': \mathcal{X}^n \to \mathcal{Z}^n$ defined by $M'(x)_i = f(M(x), x_i)$ satisfies Joint $\psi$-zCDP.
\end{lemma}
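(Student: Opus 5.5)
Fix $i \in [n]$ and two datasets $x, x'$ that differ only in their $i$-th entry. The plan is to show that $M'(x)_{-i}$ is obtained from $M(x)$ by a map that does not depend on $x_i$, and then to invoke the data-processing inequality for R\'enyi divergence.

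First, define the map $g_{x_{-i}} : \mathcal{Y} \to \mathcal{Z}^{n-1}$ by
\[
g_{x_{-i}}(y) = \bigl(f(y, x_j)\bigr)_{j \neq i}.
\]
This map depends only on the entries $x_j$ with $j \neq i$. Since $x_j = x'_j$ for every $j \neq i$, we have $g_{x_{-i}} = g_{x'_{-i}}$. Call this common map $g$; it is deterministic and fixed once $x_{-i}$ is fixed. By construction of $M'$ we then have $M'(x)_{-i} = g(M(x))$ and $M'(x')_{-i} = g(M(x'))$, with the same function $g$ in both.

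Second, apply the data-processing (post-processing) inequality for R\'enyi divergence. For every $\alpha \in (1,\infty)$ and every measurable map $g$ (and more generally any Markov kernel),
\[
D_\alpha\bigl(g(P) \,\Vert\, g(Q)\bigr) \le D_\alpha(P \,\Vert\, Q).
\]
Take $P$ and $Q$ to be the distributions of $M(x)$ and $M(x')$. Because $x$ and $x'$ are neighbouring datasets and $M$ is $\psi$-zCDP (Definition~\ref{def:zCDP}), this gives
\[
D_\alpha\bigl(M'(x)_{-i} \,\Vert\, M'(x')_{-i}\bigr) \le D_\alpha\bigl(M(x) \,\Vert\, M(x')\bigr) \le \psi \alpha .
\]

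Third, since $i$, the neighbouring pair $x, x'$, and $\alpha$ were all arbitrary, this is exactly the condition in Definition~\ref{def:Joint-zCDP}, so $M'$ satisfies Joint $\psi$-zCDP.

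The only step with real content is the observation that the $i$-th input $x_i$ affects only the $i$-th output through $f$. The joint-privacy definition discards that output, so every remaining coordinate is a post-processing of $M(x)$ alone. The data-processing inequality for R\'enyi divergence is standard (for instance, it is used in \cite{bun2016concentrated}), and I would cite it rather than re-prove it. If $f$ were randomized, the same argument goes through by treating $g$ as a Markov kernel, provided $f$'s internal randomness is independent of $M$ and of $x_i$.
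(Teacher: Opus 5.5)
Your proof is correct. The paper gives no proof of its own and only cites Lemma 1 of \cite{hsu_private_2016}, whose argument is the same as yours: $M'(x)_{-i}$ is a post-processing of $M(x)$ by a map fixed by $x_{-i}$ alone, so the data-processing inequality for R\'enyi divergence carries the $\psi$-zCDP bound over to Definition~\ref{def:Joint-zCDP}.
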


\subsection{Related Work}
\label{sec:related-work}

\subsubsection{Allocation Based on Prediction}

There is now a substantial body of work analyzing the promises and pitfalls of predictive systems for resource allocation.
Some papers emphasize the benefits of predictive models in improving the efficiency and targeting of limited resources~\cite{kleinberg2018human, aiken2023program,bruce2011track,aiken2024targeting,mashiat2025pays}. Others have studied the theoretical tradeoffs of prediction-guided allocation~\cite{casacuberta2026good,johnson_what_2022,shirali_allocation_2024,perdomo_relative_nodate} and empirically investigated the value of prediction in allocation across different contexts \cite{fischer-abaigar_value_2025, perdomo_difficult_2023,fischerabaigar2026empiricallyunderstandingvalueprediction}. 

A distinct line of works has criticized the concept of predictive optimization on normative grounds, arguing contra~\cite{kleinberg2018human} that such systems often fail on their own terms
~\cite{wang2024against,barabas2018interventions,liu2023reimagining}. In parallel, a growing body of research has suggested that many aspects of life trajectories may be fundamentally difficult to predict regardless of the method used~\cite{salganik_measuring_2020,lundberg_origins_2023,kim_is_2023}. 


Most relevant to our study is the framework proposed by Shirali et al.~\cite{shirali_allocation_2024}, who characterize the conditions necessary for ILA to outperform ULA. They develop a simple mathematical model for aid allocation and find that ILA only outperforms ULA when inequality across units is low and the budget is high. 
We extend their framework through a detailed examination of the impact of data constraints on efficiency, including constraints motivated by privacy.

\subsubsection{Differentially Private Allocation and Matching}

Joint DP (Definition \ref{def:Joint-zCDP}) was first formalized by \cite{kearns2015robustmediatorslargegames} in the context of mechanism design for mediated games, but it was quickly observed that the same definition could also be used to compute private matchings and allocations. The pioneering work in this direction is \cite{hsu_private_2016}, which studied a setting where $n$ agents each have private valuations over (subsets of) $k$ goods.
Subsequent works have extended these results to a wide range of matching and convex optimization problems under a variety of assumptions~\cite{hsu2016jointly,chen_noisy_nodate,dinitz2025differentially}.
The main feature that distinguishes these works from our own is that they all begin with the assumption that agents' valuations are known in advance, whereas we study tradeoffs between different strategies for learning those valuations.
Our work can therefore be seen as a bridge between existing theoretical works on Joint DP allocation and the growing non-private literature, described above, that examines the tradeoffs inherent to targeted aid allocation.


\subsection{Our Basic Model}
\label{sec:model}

Our work expands on the model of allocation presented in Shirali et al \cite{shirali_allocation_2024}. Our main goal in this section is to introduce the key terms and concepts of this model; the motivation and limitations of our modeling choices are discussed in greater detail in \Cref{sec:discussion}.

To start out, we assume that our population is divided $M$ disjoint units $U_1, \ldots, U_M$, each with population $N$, giving a total population size of $P=MN$. Each individual $i \in [P]$ has a welfare $w_i \in [0,1]$, and our goal is to maximize the sum of all welfare scores. To this end, we can pay a fixed cost of $c$ to allocate aid to an individual, subject to a budget constraint of $B$. 

We use $\tau_i$ to denote the \textit{treatment effect} of aiding individual $i$, defined as the marginal increase in their welfare after our intervention. Following \cite{shirali_allocation_2024}, we assume that $\tau_i = \min(1, w_i + \delta_w) - w_i$ for some fixed $\delta_w$, i.e. treatment effects are homogeneous and welfare is capped at 1. We say that an individual $i$ is ``low-welfare'' if $w_i \leq 1-\delta_w$ and ``high-welfare'' otherwise.

Deciding on an allocation can be seen as a matching problem between an administrator with $k = B/c$ indivisible, exchangeable goods and $P$ potential recipients, where the value of allocating aid to individual $i$ is $\tau_i$. 
We represent allocations with index sets, and define the \textit{value} of an allocation with index set $\mathcal{I}$ given a welfare vector $w$ as $\textsc{Val}(\mathcal{I}, w) \coloneq \sum_{i \in \mathcal{I}} \tau_i$. This quantity represents the total improvement in welfare that occurs because of our intervention. To evaluate a particular allocation, we define its \textit{regret} by how far it falls short of the best value we could have achieved:
\begin{equation}
    \regret(\mathcal{I}, w;k) = \max_{|\mathcal{J}|=k} \val(\mathcal{J}, w) - \val(\mathcal{I}, w)
\end{equation}
Abusing notation, we will also define the regret of an algorithm $A$ that outputs an allocation by $\regret(A, w;k) \coloneq \mathbb{E}_{\mathcal{I} \sim A(w)}[\regret(\mathcal{I}, w;k)]$, and likewise for $\val(A, w)$. The \textit{normalized regret} is defined as $\overline\regret (\cdot, w; k) \coloneq \frac{1}{\delta_w} \regret(\cdot, w;k)$. In some contexts, we will consider welfare scores drawn i.i.d. from some distribution $\mathcal{D}$, in which case the expectation is taken over the randomness of both our algorithm and the input.

When discussing specific algorithms, we annotate a strategy with a hat when it is derived from estimated welfare scores, as in $\widehat \ILA$, and with a tilde when it satisfies DP, as in $\widetilde \ULA$. The exact nature of the estimation or privacy protection should always be clear from context.

\subsubsection{Individual-Level Allocation (ILA)}

ILA aids individuals in ascending order by welfare. Let $s(i)$ denote the index of the individual with the $i$th lowest welfare. Then the value of the optimal non-private allocation is given by 
$\val(\ILA, w) \coloneq \sum_{i=1}^k \tau_{s(i)}$. This allocation is exactly optimal and so has zero regret by definition, but is also clearly not privacy-preserving. 

To remedy this, we design a variant of ILA that satisfies Joint DP (Definition \ref{def:Joint-zCDP}). We begin by sorting individuals in ascending order by either true welfare scores (if they are known) or by an estimate of the probability $\eta_i \coloneq \mathbb{P}[w_i > 1-\delta_w]$ (if $w$ must be estimated). We then learn a global decision rule $f: [0, 1] \to \lbrace 0, 1\rbrace$ subject to central DP and output a single bit $f(w_i)$ (resp. $f(\hat\eta_i)$) to each individual indicating whether they will receive aid.

To match the non-private baseline, we want $f$ to be a good approximation of the non-private decision rule $f(w) \approx \mathbb{I}[w \leq w_{s(k)}]$, taking care to handle situations where scores may be highly concentrated or tied. The main technical challenge here is to ensure that our global budget constraint is still satisfied with high probability, given that each individual allocation decision must be made locally. When analyzing the regret of private variants of ILA, we will make frequent use of the following lemma:
\begin{lemma}
\label{lem:ila-helper}
    Let $\hat \eta_i$ denote an estimate for the probability that $w_i \leq 1-\delta_w$, $\eta_i$ be the true probability, and let $\hat s(j)$ be the index of the person with the $j$th lowest estimated probability. Suppose that our allocation only aids $k' \leq k$ individuals based on ascending estimated probabilities of having low welfare. Then we have:
    \begin{equation}
    \overline\regret(\widehat \ILA, w;k) \leq (k-k') + k' \bar \eta_{k'} + \sum_{i \in \mathcal{I}} |\hat \eta_{i} - \eta_{i}|
    \end{equation}
    where $\mathcal{I} \coloneq \lbrace i = s(j) = \hat s(j') \mid (j\leq k' < j') \lor (j' \leq k' < j) \rbrace$ is the set of individuals wrongly included or excluded from our allocation because of estimation error, and $\bar \eta_{k'} \coloneq \frac{1}{k'} \sum_{i=1}^{k'} \eta_{s(i)}$ is the probability that a randomly selected individual among those with the $k'$ lowest conditional probabilities of having high welfare does, in fact, have high welfare.
\end{lemma}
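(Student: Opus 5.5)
The plan is to upper bound the benchmark by its trivial maximum, lower bound the value of our allocation person by person, and then compare the set we actually aid with the ideal set of size $k'$ through an exchange argument.

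\textbf{Reducing to a sum of probabilities.} Since $\tau_i \leq \delta_w$ for every $i$, the optimal value satisfies $\max_{|\mathcal{J}|=k}\val(\mathcal{J},w) \leq k\delta_w$, in expectation as well. For the value of our allocation, note that a low-welfare individual has $\tau_i = \delta_w$ and every individual has $\tau_i \geq 0$. So each aided individual $i$ contributes at least $\delta_w(1-\eta_i)$ in expectation, where $\eta_i$ is the probability that $i$ has high welfare; this is the reading under which $\bar\eta_{k'}$ is described in the statement. Let $\hat A = \{\hat s(1),\dots,\hat s(k')\}$ be the aided set. Dividing by $\delta_w$ gives
\[
\overline\regret(\widehat\ILA,w;k) \leq k - \sum_{i\in\hat A}(1-\eta_i) = (k-k') + \sum_{i\in \hat A}\eta_i .
\]
If the regret is evaluated against the realized optimum rather than $k\delta_w$, the same bound holds, because the realized optimum is also at most $k\delta_w$.

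\textbf{Comparing to the ideal set.} Let $A=\{s(1),\dots,s(k')\}$ be the $k'$ individuals with the smallest true $\eta_i$. Then $\sum_{i\in A}\eta_i = k'\bar\eta_{k'}$, so it remains to show
\[
\sum_{i\in\hat A}\eta_i - \sum_{i\in A}\eta_i \leq \sum_{i\in\mathcal{I}}|\hat\eta_i-\eta_i| .
\]
Expanding the left side, the terms from $A\cap\hat A$ cancel, leaving $\sum_{i\in\hat A\setminus A}\eta_i - \sum_{j\in A\setminus\hat A}\eta_j$. The two differences $\hat A\setminus A$ and $A\setminus\hat A$ have equal size, and their union is exactly $\mathcal{I}$.

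\textbf{The exchange argument.} Fix any bijection $\pi:\hat A\setminus A\to A\setminus\hat A$. For $i\in\hat A\setminus A$ and $j=\pi(i)$, we have $i\in\hat A$ and $j\notin\hat A$, so $\hat\eta_i \leq \hat\eta_j$. Hence
\[
\eta_i-\eta_j = (\eta_i-\hat\eta_i) + (\hat\eta_i-\hat\eta_j) + (\hat\eta_j-\eta_j) \leq |\eta_i-\hat\eta_i| + |\hat\eta_j-\eta_j| .
\]
Summing over the pairs, each element of $\mathcal{I}$ appears exactly once, which yields the claimed bound.

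The only delicate points are bookkeeping. One is handling ties in the sorted orders $s$ and $\hat s$; ties are harmless under any fixed tie-breaking rule, since only $\hat\eta_i\leq\hat\eta_j$ is used. The other is making sure the direction of $\eta$ (probability of high welfare) matches the per-person value bound $\delta_w(1-\eta_i)$, since the statement's wording about which probability $\eta_i$ denotes is inconsistent.
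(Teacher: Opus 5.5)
Your proof is correct and follows essentially the same route as the paper. The paper phrases your exchange step as an LP perturbation argument: optimality of the $\hat\eta$-sorted allocation $\tilde x^*$ gives $\langle \tilde x^* - x^*, \eta\rangle \le \langle \tilde x^* - x^*, \eta - \hat\eta\rangle \le \sum_{i\in\mathcal{I}} |\eta_i - \hat\eta_i|$, after which it adds $k'\bar\eta_{k'}$ and $k-k'$ just as you do. Your pairwise bijection is a coordinatewise version of that inner-product inequality, and your reading of $\eta_i$ as the probability of high welfare matches the paper's proof.
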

\begin{remark}
    In \Cref{sec:direct} and \Cref{sec:sampling}, we treat welfare scores as fixed, in which case $\eta_i \in \lbrace 0,1 \rbrace$ and $\bar \eta_{k'}$ is typically equal to 0. We consider probabilistic welfare in \Cref{sec:learning}.
\end{remark}

\subsubsection{Unit-Level Allocation (ULA)}
\label{sec:ula}

ULA targets aid allocation to the worst-off units without distinguishing between individuals within the same unit. Specifically, let $\rho_j = \frac{1}{|U_j|}\sum_{i \in U_j} \mathbb{I}[w_i > 1-\delta_w]$ denote the probability that a randomly selected individual in unit $j$ has high welfare. With no privacy constraints, ULA greedily allocates aid to everybody in the units with the lowest $\rho_j$ scores until its budget is exhausted. To avoid complications from rounding, we permit ULA to allocate resources uniformly at random within the last chosen unit whenever $k$ is not evenly divisible by $N$. We denote the total number of units treated by $K = \lceil k/N\rceil$.

Letting $T_j$ denote the average treatment effect within unit $j$ and using $s(j)$ to denote the index of the unit with the $j$th lowest value in $\rho$, the value of the non-private allocation is:
\begin{equation}
    \val(\ULA, w) \coloneq \sum_{j=1}^{\lfloor k/N \rfloor} NT_{s(j)} + (k \text{ mod } N) T_{s(\lfloor k/N\rfloor+1)} \geq k \left( \frac{1}{K} \sum_{j=1}^{K} T_{s(j)} \right)
\end{equation}
In Lemma 4.1 of \cite{shirali_allocation_2024}, the authors show that $\sum_{j=1}^{K} T_{s(j)} \geq \sum_{j=1}^{K} \delta_w(1-\rho_{s(j)})$. This then implies that $\overline\regret(\ULA, w; k) \leq k \bar\rho_{K}$, where $\bar\rho_{K} \coloneq \frac{1}{K} \sum_{j=1}^{K} \rho_{s(j)}$. 

The private variant of ULA operates in the same way, except that it first computes $\widetilde\rho$, a $\psi$-zCDP estimate of vector of unit profiles $\rho$, and then allocates aid in ascending order based on this estimate. 
The following lemma allows us to decompose the worst-case regret of private ULA into the worst-case regret of non-private ULA plus the error from privacy:
\begin{lemma}\label{lem:ula-helper}
    Let $x_j^*$ denote the number of individuals in unit $j$ that receive aid under $\ULA(w)$, and let $\tilde x_j^*$ denote the same for $\widetilde\ULA(w)$. Then for any $p, q \in [1, \infty]$ such that $1/p + 1/q = 1$, we can upper-bound the worst-case normalized regret of the allocation based on $\widetilde\rho$ by:
    \begin{equation}
        \overline\regret(\widetilde \ULA, w;k)  \leq 
        k\bar\rho_K +
        \lVert x^* - \tilde x^*\rVert_p \lVert \widetilde\rho - \rho\rVert_q
    \end{equation}
    In particular, we can choose between any of the following bounds for this privacy term: 
    \begin{equation}
        \begin{cases}
           2\min(k,P-k) \max_j |\widetilde\rho_j - \rho_j| & p=1, q=\infty \\
           \sqrt{2\min(k,P-k)N \sum_{j=1}^M (\widetilde\rho_j - \rho_j)^2}   & p=2, q=2 \\
           N \sum_{j=1}^M |\widetilde\rho_j - \rho_j|  &p = \infty, q=1
        \end{cases}
    \end{equation}
\end{lemma}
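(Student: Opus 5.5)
We split the regret of $\widetilde\ULA$ into the regret of the non-private allocation $\ULA$ plus the loss incurred by switching from $x^*$ to $\tilde x^*$. Write $\val$ of any unit-level allocation $x$ (with $\sum_j x_j = k$ and $0\le x_j\le N$) as $\sum_j x_j T_j$. Then
\[
\regret(\widetilde\ULA,w;k) = \regret(\ULA,w;k) + \sum_j (x^*_j - \tilde x^*_j) T_j .
\]
The first term is at most $\delta_w k\bar\rho_K$ by the discussion preceding the lemma, which invokes Lemma 4.1 of \cite{shirali_allocation_2024}. It remains to control $\sum_j (x^*_j-\tilde x^*_j)T_j$ in units of $\delta_w$. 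The obstacle is that $x^*$ greedily orders units by $\rho$, not by $T$. We therefore work with the proxy $\delta_w(1-\rho_j)$, which lower-bounds the per-unit value used in the Shirali et al.\ bound. More carefully, we redo that bound directly: the normalized regret of any allocation $x$ is at most $k - \sum_j x_j(1-\rho_j) = \sum_j x_j\rho_j$. This holds because every treated low-welfare person yields exactly $\delta_w$, so $\val(x)\ge \delta_w\sum_j x_j(1-\rho_j)$ up to within-unit randomization, while the optimum is at most $\delta_w k$.

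With this proxy, the normalized regret of $\widetilde\ULA$ is at most $\sum_j \tilde x^*_j\rho_j$. We decompose it as
\[
\sum_j \tilde x^*_j\rho_j = \sum_j x^*_j\rho_j + \sum_j(\tilde x^*_j - x^*_j)\rho_j .
\]
Since $x^*$ fills the lowest-$\rho$ units, $\sum_j x^*_j\rho_j \le k\bar\rho_K$. This holds with equality up to the fractional last unit, which is handled by the averaging inequality $\val(\ULA,w)\ge k\cdot\frac1K\sum T_{s(j)}$; in any case it is the term the lemma names. For the second term we use the optimality of $\tilde x^*$ for $\widetilde\rho$. The greedy allocation in ascending $\widetilde\rho$ order minimizes $\sum_j x_j\widetilde\rho_j$ over the feasible polytope $\{0\le x_j\le N,\ \sum x_j=k\}$, so $\sum_j(\tilde x^*_j - x^*_j)\widetilde\rho_j\le 0$. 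Adding this to the second term gives
\[
\sum_j(\tilde x^*_j - x^*_j)\rho_j \le \sum_j(\tilde x^*_j - x^*_j)(\rho_j-\widetilde\rho_j) \le \lVert x^*-\tilde x^*\rVert_p\lVert \widetilde\rho-\rho\rVert_q
\]
by Hölder's inequality. This step, using the exchange or linear-programming optimality of greedy under $\widetilde\rho$, is the key idea. The main care needed is checking that greedy with the random fractional last unit is indeed the LP minimizer in expectation.

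For the three special cases we bound $\lVert x^*-\tilde x^*\rVert_p$ using only feasibility. Since both allocations sum to $k$, $\sum_j (x^*_j-\tilde x^*_j)_+ = \sum_j(\tilde x^*_j - x^*_j)_+$. Each side is at most $k$, since it is bounded by $\sum_j x^*_j$, and also at most $P-k$, since it is bounded by $\sum_j (N - \tilde x^*_j)$. Hence $\lVert x^*-\tilde x^*\rVert_1\le 2\min(k,P-k)$. Each coordinate satisfies $|x^*_j-\tilde x^*_j|\le N$, so $\lVert\cdot\rVert_\infty\le N$. Finally $\lVert v\rVert_2^2\le \lVert v\rVert_\infty\lVert v\rVert_1\le 2N\min(k,P-k)$. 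Plugging these into the Hölder bound with $(p,q)=(1,\infty),(2,2),(\infty,1)$ gives the three displayed expressions.
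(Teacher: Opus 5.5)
Your proposal is correct and is essentially the paper's argument: bound the normalized regret of any unit-level allocation by $\sum_j x_j\rho_j$, use LP optimality of the greedy $\tilde x^*$ under $\widetilde\rho$ to get $\langle \tilde x^* - x^*, \widetilde\rho\rangle \le 0$, and finish with H\"older (your opening decomposition via $T_j$ is never used and can be dropped). The one small difference is in the special-case norms. Your interpolation $\lVert v\rVert_2^2 \le \lVert v\rVert_\infty\lVert v\rVert_1 \le 2N\min(k,P-k)$ actually yields the stated constant, whereas the paper's triangle-inequality route ($\lVert x^*\rVert_2 + \lVert\tilde x^*\rVert_2 \le 2\sqrt{kN}$, and symmetrically about $\vec N$) on its own only gives $2\sqrt{\min(k,P-k)N}$ rather than $\sqrt{2\min(k,P-k)N}$.
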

\begin{remark}
    Lemma \ref{lem:ula-helper} generalizes the connection between $\val(\ULA, w)$ and error in unit-level statistics presented in~\cite[Appendix D.1]{shirali_allocation_2024}, which corresponds to the case of $p=\infty, q=1$.
\end{remark}

\subsubsection{Random Allocation}

Finally, we will also frequently consider the baseline strategy of allocating aid to individuals sampled uniformly without replacement from the population (RAND), which may be preferable to either ILA or ULA when privacy or budgetary constraints are too strict to permit any form of effective targeting.

\begin{lemma}\label{lem:random-helper}
    Define $\bar\rho_M = \frac{1}{M} \sum_{j=1}^M \rho_j$, and let $X$ be a hypergeometric random variable with population size $P$, number of samples $k$, and number of successes $P(1-\bar\rho_M)$. Then 
    $$ \overline\regret(\rand, w; k) \leq \mathbb{E}[k - X] = k\bar\rho_M$$
\end{lemma}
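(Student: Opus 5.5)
The plan is to bound the regret of RAND by comparing, individual by individual, what RAND gains against the best possible gain $k\delta_w$, and then take expectations.

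\emph{Reduction to counting low-welfare recipients.} For every $\mathcal{J}$ with $|\mathcal{J}| = k$ we have $\tau_i \le \delta_w$ for each $i$, so $\max_{|\mathcal{J}|=k}\val(\mathcal{J}, w) \le k\delta_w$. For the allocation $\mathcal{I}$ chosen by RAND, every low-welfare individual ($w_i \le 1-\delta_w$) has $\tau_i = \delta_w$, and every high-welfare individual has $\tau_i \ge 0$. Hence
$$\val(\mathcal{I}, w) \ge \delta_w \cdot |\{i \in \mathcal{I} : w_i \le 1-\delta_w\}|.$$
Let $X$ denote this count. Combining the two bounds gives $\regret(\mathcal{I}, w; k) \le \delta_w (k - X)$, that is, $\overline\regret(\mathcal{I}, w;k) \le k - X$ pointwise for every realization of $\mathcal{I}$.

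\emph{Identifying the distribution of $X$.} RAND draws $k$ individuals uniformly without replacement from the $P$ members of the population. The number of low-welfare individuals in the whole population is $\sum_j N(1-\rho_j) = P - N\sum_j \rho_j = P(1-\bar\rho_M)$, using $P = MN$. So $X$ is exactly hypergeometric with population $P$, $k$ draws, and $P(1-\bar\rho_M)$ successes, as in the statement.

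\emph{Taking expectations.} Averaging the pointwise bound over RAND's randomness gives $\overline\regret(\rand, w;k) \le \mathbb{E}[k-X]$. The hypergeometric mean is $\mathbb{E}[X] = k \cdot P(1-\bar\rho_M)/P = k(1-\bar\rho_M)$, so $\mathbb{E}[k-X] = k\bar\rho_M$.

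The only step requiring care is the first: the optimum is bounded by $k\delta_w$ rather than computed exactly, and high-welfare recipients are credited with zero gain. Both are one-sided relaxations in the correct direction, so the argument has no real obstacle.
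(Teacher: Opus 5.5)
Your proposal is correct and takes the same route as the paper: bound the optimum by $k\delta_w$, bound RAND's value below by $\delta_w X$ with $X$ the number of low-welfare recipients, and use that $X$ is hypergeometric. Beyond the paper's short proof, you also check explicitly that the success count is $P(1-\bar\rho_M)$ and compute the mean $\mathbb{E}[k-X]=k\bar\rho_M$.
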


\subsubsection{The Role of Inequality}
\label{sec:gini}

Prior works \cite{shirali_allocation_2024,perdomo_difficult_2023} have identified inequality between units as a major factor determining the (in)efficiency of ULA. One way to understand the connection is to think of inequality between units as a measure of how \textit{informative} unit membership is; inequality is high exactly when unit membership is a useful proxy for welfare.
The following lemma formalizes this intuition by quantifying inequality in terms of the \textit{Gini coefficient} of the unit profile vector, which was also studied in~\cite{shirali_allocation_2024}:

\begin{lemma}\label{lem:gini-baseline}
    Let $\rho \in [0,1]^M$ be a unit profile vector with Gini coefficient $G$, defined as $G = \frac{1}{2M^2\bar\rho_M} \sum_{i, j} |\rho_i - \rho_j| = \frac{1}{M^2 \bar\rho_M} \sum_{1 \leq i \leq j \leq M} (\rho_{s(j)} - \rho_{s(i)})$. Then
        $\bar\rho_M - \bar\rho_K \geq \frac{GM\bar\rho_M}{M-1} > G\bar\rho_M$
\end{lemma}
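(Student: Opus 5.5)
The plan is to rewrite both sides of the inequality in terms of the nonnegative gaps of the sorted profile vector and then compare the two expansions coefficient by coefficient. Write $\rho_{(1)} \leq \cdots \leq \rho_{(M)}$ for the sorted entries $\rho_{s(1)}, \ldots, \rho_{s(M)}$, and set $d_j \coloneq \rho_{(j+1)} - \rho_{(j)} \geq 0$ for $j = 1, \ldots, M-1$. A pair $i < l$ contributes $\rho_{(l)} - \rho_{(i)} = \sum_{i \leq j < l} d_j$, and gap $d_j$ is counted by exactly $j(M-j)$ such pairs. Hence the definition in the statement gives
\begin{equation*}
GM^2\bar\rho_M = \sum_{j=1}^{M-1} j(M-j)\, d_j ,
\qquad\text{so}\qquad
\frac{GM\bar\rho_M}{M-1} = \sum_{j=1}^{M-1} \frac{j(M-j)}{M(M-1)}\, d_j .
\end{equation*}

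Next I will expand the left-hand side in the same way. Writing each $\rho_{(l)} = \rho_{(1)} + \sum_{j<l} d_j$, the $\rho_{(1)}$ terms cancel in $\bar\rho_M - \bar\rho_K$. Gap $d_j$ then appears with coefficient $(M-j)/M$ in $\bar\rho_M$ and with coefficient $(K-j)_+/K$ in $\bar\rho_K$. So $\bar\rho_M - \bar\rho_K = \sum_j w_j d_j$, where
\begin{equation*}
w_j = \frac{j(M-K)}{MK} \quad (j < K), \qquad w_j = \frac{M-j}{M} \quad (j \geq K).
\end{equation*}
Because every $d_j \geq 0$, it is enough to show $w_j \geq j(M-j)/(M(M-1))$ for every $j$.

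The main obstacle is the dependence on $K$, and I would settle it first. For $K = 1$ only the second case of $w_j$ occurs. The required inequality $(M-j)/M \geq j(M-j)/(M(M-1))$ reduces to $j \leq M-1$, which always holds. In the $M=2$ case this gives equality, matching a direct check. For larger $K$, however, the comparison fails: at $K = M$ the left side $\bar\rho_M - \bar\rho_K$ is identically zero while $G$ can be positive. So I read the statement as the most-targeted case $K = 1$, i.e.\ $\bar\rho_M - \min_j \rho_j$. For general $K$, the same gap computation yields the weaker bound $\bar\rho_M - \bar\rho_K \geq \min_j \bigl(w_j M(M-1)/(j(M-j))\bigr) \cdot GM\bar\rho_M/(M-1)$. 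Its leading factor is $(M-1)/K$ times a quantity that degrades as $K \to M$. I would record this general form in case the later sections need it.

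The final strict inequality $GM\bar\rho_M/(M-1) > G\bar\rho_M$ is immediate from $M/(M-1) > 1$ when $G\bar\rho_M > 0$. In the degenerate case $G = 0$ both sides are equal, so I will state the strict version under $G > 0$.
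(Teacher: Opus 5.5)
Your gap decomposition is correct. It is in fact a rigorous form of the paper's argument: the paper fixes $\bar\rho_M$ and $\bar\rho_K$, maximizes $G$, and asserts ``by inspection'' that the maximizer has $M-1$ equal units and one higher unit. In your notation this is a linear program over $d\ge 0$ with the single constraint $\sum_j w_j d_j=\bar\rho_M-\bar\rho_K$. Its optimum concentrates on the gap maximizing $j(M-j)/w_j$, and the paper is claiming that gap is $j=M-1$.

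The genuine problem in your proposal is that you abandon the coefficient comparison before finishing it. For $j<K$, the required inequality $w_j\ge j(M-j)/(M(M-1))$ is equivalent to $(M-K)(M-1)\ge K(M-j)$. The right side is largest at $j=1$, where the condition reduces to $K\le M/2$. Combined with the $j\ge K$ case you already did, this proves the lemma for every $K\le M/2$, not just $K=1$.

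Equivalently, your general factor $\min_j w_jM(M-1)/(j(M-j))$ equals $\min\{1,(M-K)/K\}$ for $1\le K\le M$. So nothing is lost until $K$ passes $M/2$. Your description of this factor as ``$(M-1)/K$ times a degrading quantity'' misses the cap at $1$ that comes from $j=M-1$. Restricting to $K=1$ would also discard the regime in which the paper applies the lemma, where $K=\lceil k/N\rceil$ grows with the budget.

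Your suspicion about large $K$ is nonetheless right, and it exposes a real flaw in the paper: the statement with no restriction on $K$ is false. For any $K>M/2$, take $\rho=(0,1,\ldots,1)\in[0,1]^M$. Then $G=1/M$, so $GM\bar\rho_M/(M-1)=1/M$, while $\bar\rho_M-\bar\rho_K=(M-K)/(MK)<1/M$.

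This is exactly where the paper's ``by inspection'' step fails. For $M/2<K<M$, the ratio $j(M-j)/w_j$ is maximized at $j=1$ (one low unit, the rest equal), not at $j=M-1$. So the paper's extremal configuration is not the maximizer, and its proof is valid only for $K\le M/2$. The correct statement is the lemma under the hypothesis $K\le M/2$, which your method proves once the $j<K$ case is completed. As you note, the final inequality is strict only when $G>0$.
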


Recall that the worst-case regret of ULA scales linearly with $\bar\rho_K$, while the worst-case regret of random allocation scales with $\bar\rho_M$. Lemma \ref{lem:gini-baseline} can therefore be interpreted as either a lower bound on the marginal value of ULA relative to \rand, or, by rearranging, as an upper-bound on the marginal regret of ULA relative to ILA.

\section{Private Allocation when Welfare is Directly Observable}\label{sec:direct}

As a warm-up, we assume that the true welfare scores of all individuals are already known to the administrator. In this simple setting, the only challenge is to ensure that the final allocation we select satisfies Joint DP without exceeding our budget constraints.

\subsection{ILA}

\begin{algorithm}
\KwIn{
\begin{itemize}
    \item Vector of welfare scores $w =[w_{1}, \ldots, w_{n}] \in [0,1]^n$
    \item Privacy parameter $\psi > 0$; bin width $\theta > 0$; noise parameter $s\geq 0$; aid budget $k$; \\failure probability $\beta$
\end{itemize}
}
\vspace{-2mm}
Sample noisy welfare scores $\hat w_i \sim \textsc{Unif}[w_i-s, w_i+s]$ for $1 \leq i \leq n$\\
Initialize interval $\mathscr{R} = [-s, 1+s]$ and define $B_1 = [\ell_1, r_1] = [-s, -s+\theta]$. \\
Recursively define bins $B_j = (\ell_j, r_j] = (r_{j-1}, r_{j-1}+\theta]$ for $2 \leq j \leq (1+2s)/\theta$. \\
Compute bin counts $C_j = \sum_{i=1}^n \mathbb{I}[\hat w_i \in B_j]$ \\
Let $\tilde{S}_1, \ldots, \tilde{S}_{(1+2s)/\theta} = \texttt{PrivatePartialSums}\left([ C_j ]_{j=1}^{(1+2s)/\theta}; ~\psi\right)$ (see Appendix \ref{app:private-partial_sums})\\
Let $j^* = \min \left\lbrace j ~ \Big| ~ \tilde S_j + \frac{1}{\sqrt{\psi}}
\left(1 + \frac{\log(\frac{1+2s}{\theta}) + \gamma_E}{\pi}\right)\left(\sqrt{\log\left(\frac{1+2s}{\theta}\right)} + \sqrt{\log\frac{2}{\beta}}\right) \geq k \right\rbrace$ \\
\textbf{Return} the triple $(\hat w_i, \ell_{j^*}, \mathbb{I}[\hat w_i \leq \ell_{j^*}])$ to each individual $i \in [n]$
\caption{Joint $\psi$-zCDP ILA}
\label{alg:ila-direct}
\end{algorithm}

Our strategy for privatizing ILA is presented in \autoref{alg:ila-direct}; we take as input $n$ welfare scores from the population to compute a private estimate of the empirical CDF (in this section, we set $n = P$; in other sections, $n \le P$). Using this, we derive a high-probability lower bound on $w_{s(k)}$, the maximum welfare treated by the optimal allocation. Our own allocation assigns aid to everybody whose (possibly noisy) welfare falls below this bound. 
The following results show that \autoref{alg:ila-direct} is private and provide high probability upper and lower bounds on the number of individuals treated:

\begin{lemma}\label{lem:ila-privacy}
    \autoref{alg:ila-direct} satisfies Joint $\psi$-zCDP for any valid $w, \psi, \theta, s,$ and $k$.
\end{lemma}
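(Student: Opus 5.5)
The plan is to put \autoref{alg:ila-direct} in the form required by the Billboard Lemma (Lemma \ref{lem:billboard}). We will isolate a single centrally-private object $M(w)$ and show that each individual's output is a deterministic function of $M(w)$ and that individual's own (randomized) data. The natural choice is $M(w) = (\tilde S_1,\ldots,\tilde S_{(1+2s)/\theta})$, or equivalently $M(w) = \ell_{j^*}$, since $j^*$ is a deterministic post-processing of the $\tilde S_j$. The constants in the threshold depend only on the public parameters $s,\theta,\psi,\beta,k$.

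The first step handles the noisy scores. The perturbation $\hat w_i = w_i + \text{Unif}[-s,s]$ uses independent randomness for each $i$. We will view the mechanism as acting on the dataset $\hat w = (\hat w_1,\ldots,\hat w_n)$ of already-perturbed scores. Changing $w_i$ to $w_i'$ changes only the distribution of $\hat w_i$, and leaves the joint distribution of $(\hat w_{-i})$ untouched.

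The second step is privacy of the central computation. Changing a single entry $\hat w_i$ moves one unit of count from one bin to another, or leaves the counts unchanged. So the histogram $C$ changes in at most two coordinates by $\pm 1$, and every prefix sum changes by at most $1$. We then invoke the guarantee of \texttt{PrivatePartialSums} from Appendix \ref{app:private-partial_sums}, which we assume is stated as $\psi$-zCDP for neighboring count vectors of exactly this type. This gives $\psi$-zCDP for $\tilde S$ as a function of $\hat w$, for every fixed realization of $\hat w_{-i}$. Post-processing then yields $\psi$-zCDP for $\ell_{j^*}$.

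The third step applies the Billboard Lemma with $f(\ell, \hat w_i) = (\hat w_i, \ell, \mathbb{I}[\hat w_i \le \ell])$. This gives Joint $\psi$-zCDP with respect to the perturbed data. To transfer this to $w$, we condition on $\hat w_{-i}$, which has the same law under $w$ and $w'$. Conditional on $\hat w_{-i}$, the outputs of $j\neq i$ are functions of $(\ell_{j^*}, \hat w_{-i})$, and $\ell_{j^*}$ is a $\psi$-zCDP function of a dataset that differs only in entry $i$, now random rather than fixed. For the random entry $i$, the output distribution is a mixture over $\hat w_i$. We handle this through joint convexity of $e^{(\alpha-1)D_\alpha}$, or more simply by quasi-convexity, which gives $D_\alpha(\text{mixture}_1\|\text{mixture}_2) \le \sup$ over pairs of components. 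Integrating over the common $\hat w_{-i}$ then preserves the bound, again by convexity.

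The main obstacle is this final step: making rigorous that the Rényi bound survives both the local randomization of $\hat w_i$ and the averaging over $\hat w_{-i}$. Since the pairwise bound $\psi\alpha$ holds uniformly over every pair of values of $\hat w_i$ and every common $\hat w_{-i}$, the convexity argument should close it. A secondary point to check is that the sensitivity assumed by \texttt{PrivatePartialSums} matches the change of at most $\pm1$ to each prefix sum.
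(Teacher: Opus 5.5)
Your proposal is correct and follows the paper's route: \texttt{PrivatePartialSums} is $\psi$-zCDP, the threshold $\ell_{j^*}$ is post-processing of its output, and the Billboard Lemma then gives Joint $\psi$-zCDP. Your coupling and quasi-convexity argument for passing from the perturbed scores $\hat w$ back to $w$ is valid extra rigor that the paper skips, since it treats individual data as first entering at Step 5; one caveat on your sensitivity check is that, for a factorization-based mechanism, moving one score from bin $a$ to bin $b$ changes the count vector by $e_b - e_a$, i.e.\ two stream entries, which can cost a constant factor relative to the standard one-entry guarantee, and that per-prefix-sum changes of at most $1$ are not the relevant quantity --- the paper likewise absorbs this into its black-box claim.
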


\begin{lemma}[\autoref{alg:ila-direct} does not exceed budget whp.]\label{lem:ila-direct-budget}
    Fix $w,\psi, \theta, s,$ and $k$, and let $\ell_{j^*}$ be the threshold returned by \autoref{alg:ila-direct}. Then with probability at least $1-\beta/2$, $\sum_{i=1}^n \mathbb{I}[\hat w_i \leq \ell_{j^*}] \leq k$.
\end{lemma}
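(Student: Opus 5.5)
The plan is to reduce the budget guarantee to a uniform high-probability error bound for \texttt{PrivatePartialSums}, and then use the definition of $j^*$ together with the bin structure.

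\textbf{Counting identity.} Write $S_j = \sum_{j' \le j} C_{j'}$ for the true partial sums. Since the bins $B_1, B_2, \ldots$ partition $[-s, 1+s]$ left to right, every noisy score $\hat w_i$ lies in exactly one bin. The left endpoint $\ell_{j^*}$ equals $r_{j^*-1}$, the right end of bin $j^*-1$ (for $j^*\ge 2$). Hence
\[
\sum_{i=1}^n \mathbb{I}[\hat w_i \le \ell_{j^*}] = S_{j^*-1}.
\]
When $j^*=1$ the count is $\#\{i : \hat w_i \le -s\}$. This is at most the number of ties at exactly $-s$, which has probability zero under the uniform noise, or we define $S_0=0$. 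So it suffices to show $S_{j^*-1} \le k$ with probability at least $1-\beta/2$.

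\textbf{Error bound from the appendix.} Next I would invoke the accuracy guarantee for \texttt{PrivatePartialSums} from Appendix~\ref{app:private-partial_sums}. This is presumably a tree/factorization mechanism with Gaussian noise at privacy level $\psi$. The guarantee should say that, with probability at least $1-\beta/2$, simultaneously for all $j \le (1+2s)/\theta$,
\[
|\tilde S_j - S_j| \le \Delta,
\qquad
\Delta := \frac{1}{\sqrt{\psi}}\left(1 + \frac{\log(\frac{1+2s}{\theta}) + \gamma_E}{\pi}\right)\left(\sqrt{\log\left(\frac{1+2s}{\theta}\right)} + \sqrt{\log\frac{2}{\beta}}\right).
\]
The factor $1 + (\log T + \gamma_E)/\pi$ matches the max-column-norm bound for the square-root factorization of the lower-triangular all-ones matrix. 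The factor $\sqrt{\log T} + \sqrt{\log(2/\beta)}$ comes from a Gaussian maximal inequality over $T=(1+2s)/\theta$ coordinates, i.e.\ a union bound or Borell–TIS giving $\sigma(\sqrt{2\log T} + \sqrt{2\log(2/\beta)})$ with constants absorbed. If the appendix only states a per-coordinate tail, I would first union-bound over the $T$ coordinates.

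\textbf{Minimality of $j^*$.} On this event, $j^*$ is the minimal index with $\tilde S_{j} + \Delta \ge k$. Therefore, if $j^* \ge 2$, the index $j^*-1$ fails the condition, so $\tilde S_{j^*-1} + \Delta < k$. This gives
\[
S_{j^*-1} \le \tilde S_{j^*-1} + \Delta < k.
\]

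\textbf{Main obstacle.} The hardest step is confirming that the appendix's uniform error bound is exactly $\Delta$ at failure probability $\beta/2$, so that the threshold in the algorithm is precisely the confidence width. I would also handle the edge case where the set defining $j^*$ is empty; this cannot occur on the good event if $k \le n$, since $S_T = n$. Finally, the probability is over both the uniform noise and the privacy noise. The bound holds conditionally on any realization of $\hat w$, so it holds unconditionally.
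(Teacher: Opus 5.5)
Your proposal is correct and follows the paper's proof essentially step for step. It uses a simultaneous $(1-\beta/2)$-probability error bound on the private partial sums, which the paper, like you, simply asserts from the \texttt{PrivatePartialSums} guarantee plus Lemma~\ref{lem:gaussian-concentration}; then minimality of $j^*$ gives $S_{j^*-1} \le \tilde S_{j^*-1} + \Delta < k$, and the left-open bins make the count below $\ell_{j^*}$ exactly $S_{j^*-1}$. One small caveat, which the paper's proof also overlooks: your ``probability zero'' dismissal of the case $j^*=1$ only holds for $s>0$. With $s=0$ the threshold $\ell_1=0$ admits every individual with $w_i=0$, so if more than $k$ scores equal $0$ then on the good event $j^*=1$ and the budget is exceeded; declaring $S_0=0$ by convention does not change that actual count.
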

\begin{theorem}[Conditions for \autoref{alg:ila-direct} to achieve low regret.]\label{thm:ila-direct-helper}
    Fix $\psi, \theta, s,$ and $k$, and let $\ell_{j^*}$ be the threshold returned by \autoref{alg:ila-direct}. Define the random variable $Y_{i,j} = \mathbb{I}[\hat w_i \in B_j]$. If (1) $\forall j$, $\lbrace Y_{i,j} \rbrace_{i=1}^P$ are jointly independent and (2) $\exists p \in [0,1]$ such that $\forall i,j$, $\mathbb{E}[Y_{i,j}] \leq p$, then with probability at least $1-\beta/2$, $k-\sum_{i=1}^P \mathbb{I}[\hat w_i \leq \ell_{j^*}]$ is at most:
    \begin{equation}
        ks + np +
    \left(\frac{2}{\sqrt{\psi}}
         \left(1 + \frac{\log(\frac{1+2s}{\theta}) + \gamma_E}{\pi}\right) \\
         + \sqrt{np}
    \right)\left(\sqrt{\log\frac{1+2s}{\theta}} + \sqrt{\log\frac{2}{\beta}}\right)
    \end{equation}
\end{theorem}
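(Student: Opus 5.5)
The plan is to compare the number of people actually treated with the true (non-noisy) partial sums of the bin counts. Write $m = (1+2s)/\theta$ for the number of bins, $S_j = \sum_{j' \le j} C_{j'}$ for the true partial sums, and
\[
E = \frac{1}{\sqrt{\psi}}\left(1 + \frac{\log m + \gamma_E}{\pi}\right)\left(\sqrt{\log m} + \sqrt{\log\tfrac{2}{\beta}}\right)
\]
for the slack added in the definition of $j^*$. Since $\ell_{j^*} = r_{j^*-1}$ and the bins are disjoint and consecutive, the number treated is exactly $\sum_i \mathbb{I}[\hat w_i \le \ell_{j^*}] = S_{j^*-1}$. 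The only exception is the boundary case $j^* = 1$, which has to be handled separately. The target bound therefore splits into a privacy-error part, $2E$, and a single-bin-mass part, $np + \sqrt{np}(\cdots)$.

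\textbf{Step 1 (privacy error).} I will invoke the accuracy guarantee of \texttt{PrivatePartialSums} from Appendix \ref{app:private-partial_sums}, used the same way as in the proof of Lemma \ref{lem:ila-direct-budget}. It should give an event of probability $1-\beta/4$ (or whatever share of $\beta$ the constants allow) on which $\max_j |\tilde S_j - S_j| \le E$ simultaneously for all $j$. On this event, the definition of $j^*$ gives $\tilde S_{j^*} + E \ge k$, and hence $S_{j^*} \ge k - 2E$. This produces the $\frac{2}{\sqrt{\psi}}(\cdots)(\cdots)$ term.

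\textbf{Step 2 (one bin's mass).} Since $S_{j^*-1} = S_{j^*} - C_{j^*}$, it suffices to upper-bound $C_{j^*}$. The index $j^*$ is random and depends on the data, so I will bound $\max_j C_j$ by a union bound over all $m$ bins. For each fixed $j$, hypotheses (1) and (2) make $C_j$ a sum of independent Bernoullis with total mean at most $np$. A Bernstein/Chernoff bound then gives
\[
C_j \le np + \sqrt{np}\left(\sqrt{\log m} + \sqrt{\log\tfrac{2}{\beta}}\right) + (\text{lower-order term})
\]
with failure probability at most $\beta/(4m)$ per bin. The form $\sqrt{\log m} + \sqrt{\log(2/\beta)}$ comes from $\sqrt{\log(m/\beta')} \le \sqrt{\log m} + \sqrt{\log(1/\beta')}$.

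Combining the two steps gives $k - S_{j^*-1} \le 2E + \max_j C_j$. The remaining additive slack in the statement, $ks$, is what I will use to absorb the linear (Poisson-regime) term in Bernstein's inequality and the boundary case $j^*=1$. If that absorption does not go through directly, I will bound those pieces separately. The case $j^*=1$ should be excluded on the good event unless $k \lesssim 2E + C_1$, in which case the claim is trivial.

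I expect the main obstacle to be the probability bookkeeping. The privacy event and the concentration event must together fail with probability at most $\beta/2$, while producing exactly the constants $\log(2/\beta)$ and $\sqrt{np}$ in the statement. The cleanest route is a careful Chernoff form $\sqrt{2np\log(\cdot)}$ together with the lower-order term, which I expect to fit under the $ks$ slack. I would check the exact form of the partial-sums guarantee first, to see how much of $\beta$ it consumes.
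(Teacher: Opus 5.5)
Your reduction matches the paper's. The number treated is $S_{j^*-1}=S_{j^*}-C_{j^*}$, the partial-sums event gives $S_{j^*}\ge \tilde S_{j^*}-E\ge k-2E$, and what remains is a bound on one bin count, taken uniformly over the $m$ bins.

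The gap is in how you finish Step 2. The term $ks$ is not spare slack. In the paper it is a separate cost: the regret from jittering scores by up to $s$ when treatment effects are $1$-Lipschitz. In particular it is zero when $s=0$. That is exactly the setting of Corollary~\ref{cor:ila-direct-iid}, where $n=P$, $p=\gamma\theta$ and $\theta=1/(P\pi\sqrt\psi)$. So $np=\gamma/(\pi\sqrt\psi)=O(1)$ while $\log m\approx\log P$. In that regime the linear Bernstein term, of order $\log(m/\beta)$, is the dominant part of $\max_j C_j$, not a lower-order correction, and nothing in the statement can absorb it. Even in Corollary~\ref{cor:ila-direct-adv}, $ks=1/(\pi\sqrt\psi)$ is only a constant.

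Worse, the target of your Step 2 is false in this regime. For roughly uniform bins with $np=O(1)$ (e.g.\ uniform welfare, $n=\Theta(m)$), the maximum load is of order $\log m/\log\log m\gg\sqrt{\log m}$. So no correct bound on $\max_j C_j$ can have the form $np+\sqrt{np}\,(\sqrt{\log m}+\sqrt{\log(2/\beta)})$. Your fallback of bounding the pieces separately therefore proves a different inequality. It carries an extra $O(\log(m/\beta))$ term and $\sqrt{2np}$ in place of $\sqrt{np}$; the paper's own derivation also produces $\sqrt{2np}$.

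The paper reaches the stated shape by a step your route cannot reproduce. It treats each bin count as a Gaussian with mean at most $np$ and variance at most $np(1-p)$, then applies Lemma~\ref{lem:gaussian-concentration} to the maximum over bins. That is an approximation, not a stochastic domination: binomial upper tails with small mean decay like $e^{-t\log t}$, slower than $e^{-t^2/(2np)}$. So your Chernoff route is the sound one, but it establishes the modified bound rather than the stated one. For fixed $\psi$ and $\beta$ the extra term is $O(\log P)=o(\log^{3/2}P)$, so the downstream corollaries are unaffected.

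Two smaller points.
\begin{itemize}
\item The case $j^*=1$ needs no slack: with $S_0=0$, the inequality $k-S_{j^*-1}\le 2E+C_{j^*}$ holds verbatim.
\item You cannot run the privacy event at level $1-\beta/4$. The radius $E$ is hard-coded in \autoref{alg:ila-direct} with $\log(2/\beta)$, and a larger deviation radius would break $S_{j^*}\ge k-2E$. The paper spends $\beta/2$ on that event (the same one as in Lemma~\ref{lem:ila-direct-budget}) and another $\beta/2$ on the bin counts. Its argument therefore really gives failure probability $\beta$, which is what the sentence after the theorem uses.
\end{itemize}
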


Combining the preceding two results, we have that with probability at least $1-\beta$, \autoref{alg:ila-direct} simultaneously stays under budget and satisfies our regret bound.
The next two corollaries combine Theorem \ref{thm:ila-direct-helper} and Lemma \ref{lem:ila-helper} to bound the regret of \autoref{alg:ila-direct} against both stochastic and oblivious adversaries when run on the full population:

\begin{corollary}[Stochastic adversary regret]\label{cor:ila-direct-iid}
    If $s=0$ and welfare scores are sampled i.i.d. from a distribution with maximum density $\gamma$, then the conditions of Theorem \ref{thm:ila-direct-helper} are satisfied with $p=\gamma\theta$. Under these assumptions, choosing $\theta=\frac{1}{P\pi\sqrt{\psi}}$ yields:
    \begin{equation}
        \overline\regret(\widetilde\ILA, w; k) \leq (1+o(1)) \cdot \left[ \frac{2\log^{3/2}P + \gamma + 2\log P\sqrt{\log(2/\beta)}}{\pi \sqrt{\psi}} \right]
    \end{equation}
\end{corollary}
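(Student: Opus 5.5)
We verify the two hypotheses of Theorem \ref{thm:ila-direct-helper}, substitute the stated parameters, and then convert the shortfall bound into a regret bound through Lemma \ref{lem:ila-helper}.

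\textbf{Checking the hypotheses.} With $s=0$ there is no perturbation, so $\hat w_i = w_i$ and $Y_{i,j} = \mathbb{I}[w_i \in B_j]$. Since the $w_i$ are i.i.d., for each fixed $j$ the indicators $\{Y_{i,j}\}_{i=1}^P$ are functions of independent variables, hence jointly independent; this is condition (1). Each bin has length $\theta$ and the density is at most $\gamma$, so $\mathbb{E}[Y_{i,j}] = \mathbb{P}[w_i \in B_j] \le \gamma\theta$. This is condition (2) with $p=\gamma\theta$.

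\textbf{Bounding the shortfall.} Plug $n=P$, $s=0$, $p=\gamma\theta$ and $\theta = 1/(P\pi\sqrt{\psi})$ into Theorem \ref{thm:ila-direct-helper}. Then $\log(1/\theta) = \log P + \log(\pi\sqrt{\psi}) = (1+o(1))\log P$, treating $\psi$ as fixed or at least $\log(1/\psi) = o(\log P)$. The individual terms become:
\begin{itemize}
\item $ks = 0$;
\item $np = P\gamma\theta = \gamma/(\pi\sqrt{\psi})$;
\item $\sqrt{np} = \sqrt{\gamma/(\pi\sqrt\psi)}$, which is lower order than $\frac{2}{\sqrt\psi}\cdot\frac{\log P}{\pi}$;
\item $\frac{2}{\sqrt\psi}\bigl(1+\frac{\log(1/\theta)+\gamma_E}{\pi}\bigr) = (1+o(1))\frac{2\log P}{\pi\sqrt\psi}$.
\end{itemize}
Multiplying the last term by $\sqrt{\log(1/\theta)} + \sqrt{\log(2/\beta)}$ gives $(1+o(1))\frac{2\log^{3/2}P + 2\log P\sqrt{\log(2/\beta)}}{\pi\sqrt\psi}$. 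Adding $np$ yields exactly the bracketed expression, so with probability at least $1-\beta/2$ the number of people treated falls short of $k$ by at most this amount.

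\textbf{Converting to regret.} Apply Lemma \ref{lem:ila-helper} with $\hat\eta_i=\eta_i$: welfare is observed exactly and the threshold rule treats people in ascending true-welfare order, so the estimation-error set $\mathcal{I}$ contributes nothing. Moreover $\bar\eta_{k'}=0$, since the treated set is the $k'$ lowest-welfare individuals; any high-welfare person among them would likewise appear in the optimal allocation, so this is at worst absorbed. Hence $\overline\regret \le k - k'$, which is the shortfall already bounded. Lemma \ref{lem:ila-direct-budget} guarantees $k'\le k$ with probability $1-\beta/2$, so both events hold with probability $1-\beta$.

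\textbf{Main obstacle: the failure event.} The statement is an expected-regret bound, while the argument so far is high-probability. On the failure event (probability $\beta$), normalized regret is at most $k$, contributing $\beta k$. With $\beta$ polynomially small (e.g.\ $\beta\le 1/P$) this is $O(1)$ and can be absorbed into the $(1+o(1))$ factor. Alternatively, we would interpret the corollary as holding with probability $1-\beta$, since the $\beta$ dependence appears explicitly. The remaining care is in justifying that the lower-order terms, including $\gamma_E$, the $1$, $\log(\pi\sqrt\psi)$ and $\sqrt{np}$, are genuinely $o(\cdot)$ of the leading term $\log^{3/2}P/\sqrt\psi$ as $P\to\infty$.
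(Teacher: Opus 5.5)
Your proposal is correct and takes the paper's route: check the two hypotheses of Theorem \ref{thm:ila-direct-helper} with $p=\gamma\theta$, substitute $n=P$, $s=0$, $\theta=1/(P\pi\sqrt{\psi})$ so that $np=\gamma/(\pi\sqrt{\psi})$ and the $\sqrt{np}$ cross terms are lower order, and turn the shortfall $k-k'$ into normalized regret via Lemma \ref{lem:ila-helper}; your observation that the treated set lies inside an optimal $k$-set is a cleaner justification for dropping $k'\bar\eta_{k'}$ than the paper's ``typically equal to 0''. On your ``main obstacle'', the paper intends the high-probability reading: the bound holds with probability at least $1-\beta$, by the union bound over Lemma \ref{lem:ila-direct-budget} and Theorem \ref{thm:ila-direct-helper}, rather than in expectation.
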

\begin{corollary}[Oblivious adversary regret]\label{cor:ila-direct-adv}
    If $s>0$ and $w$ is arbitrary, then the conditions of Theorem \ref{thm:ila-direct-helper} are satisfied with $p=\frac{\theta}{2s}$. Choosing $s = \frac{1}{k\pi\sqrt{\psi}}$ and $\theta = \frac{2s\log^{3/2} P}{P\pi \sqrt{\psi}}$ yields: 
    \begin{align}
        \overline\regret(\widetilde\ILA, w; k) \leq (1+o(1)) \cdot \left[ \frac{3\log^{3/2}P + 2\log P\sqrt{\log(2/\beta)}}{\pi \sqrt{\psi}} \right]
    \end{align}
\end{corollary}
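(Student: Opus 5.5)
We prove Corollary \ref{cor:ila-direct-adv} in three steps: verify the hypotheses of Theorem \ref{thm:ila-direct-helper}, convert its shortfall bound into a regret bound via Lemma \ref{lem:ila-helper}, and substitute the chosen $s,\theta$.

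\textbf{Step 1: hypotheses.} Since $w$ is fixed obliviously, the only randomness in $\hat w_i$ is the independent uniform noise on $[w_i-s,w_i+s]$. Hence for each $j$ the indicators $Y_{i,j}=\mathbb{I}[\hat w_i\in B_j]$ are jointly independent, which is condition (1). A bin of width $\theta$ meets an interval of length $2s$ with probability at most $\theta/(2s)$, so condition (2) holds with $p=\theta/(2s)$. With $n=P$, Theorem \ref{thm:ila-direct-helper} bounds the shortfall $k-k'$ with probability $1-\beta/2$, where $k'=\sum_i\mathbb{I}[\hat w_i\le \ell_{j^*}]$.

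\textbf{Step 2: reduction to regret.} Lemma \ref{lem:ila-direct-budget} gives $k'\le k$ with probability $1-\beta/2$, so a union bound makes both events hold with probability $1-\beta$. We then apply Lemma \ref{lem:ila-helper} with $\eta_i\in\{0,1\}$, taking the allocation ordered by the noisy scores $\hat w_i$.
\begin{itemize}
\item The term $k'\bar\eta_{k'}$ vanishes in the relevant regime.
\item The misordering term $\sum_{i\in\mathcal I}|\hat\eta_i-\eta_i|$ is the main obstacle, because noise of size $s$ can swap individuals near the threshold.
\end{itemize}
I would handle the misordering term by working with treatment effects directly instead of the $\eta$ representation. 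Each chosen individual has true welfare at most $\ell_{j^*}+s$, while the optimal set excludes only welfares at least $w_{s(k)}$. Because $\tau$ is $1$-Lipschitz in $w$, swapping one individual costs at most about $2s$ in normalized units, or, more crudely, the cost is absorbed into a shortfall of $ks$. In fact the $ks$ term of Theorem \ref{thm:ila-direct-helper} already appears to encode this noise-induced loss. So I would argue that the normalized regret is at most $(k-k')$ plus $O(ks)$, and that the theorem's bound already dominates this total.

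\textbf{Step 3: substitution.} Set $s=1/(k\pi\sqrt\psi)$, which gives $ks=1/(\pi\sqrt\psi)$. Set $\theta=2s\log^{3/2}P/(P\pi\sqrt\psi)$, which gives $np=P\theta/(2s)=\log^{3/2}P/(\pi\sqrt\psi)$.

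Next, $\log\frac{1+2s}{\theta}=\log P+O(\log\log P+\log k+\log\psi^{-1})$, which is $(1+o(1))\log P$ in the asymptotic regime where $k$ and $\psi^{-1}$ are polynomial in $P$. Then $1+\frac{\log(\cdot)+\gamma_E}{\pi}=(1+o(1))\frac{\log P}{\pi}$.

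The privacy term therefore equals $(1+o(1))\frac{2\log P}{\pi\sqrt\psi}\bigl(\sqrt{\log P}+\sqrt{\log(2/\beta)}\bigr)$. This contributes $\frac{2\log^{3/2}P+2\log P\sqrt{\log(2/\beta)}}{\pi\sqrt\psi}$.

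The remaining terms are lower order or match the constant:
\begin{itemize}
\item $np$ contributes one more $\log^{3/2}P/(\pi\sqrt\psi)$, raising the leading coefficient from $2$ to $3$.
\item $\sqrt{np}\cdot\sqrt{\log P}=O(\log^{5/4}P/\psi^{1/4})$ is $o(\cdot)$ relative to the leading terms, assuming $\psi$ is bounded.
\item $ks=1/(\pi\sqrt\psi)$ is $o(\cdot)$ relative to the leading terms.
\end{itemize}
Collecting these gives the stated bound. Checking that $\psi$ is bounded and that $k$ and $\psi^{-1}$ are polynomial in $P$, which makes the $o(1)$ claims uniform, is routine bookkeeping.
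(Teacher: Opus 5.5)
Your route is the one the paper intends: check conditions (1) and (2) with $p=\theta/(2s)$, convert the shortfall into regret using that treatment effects are $1$-Lipschitz and nonincreasing in welfare, then substitute $s$ and $\theta$. The paper gives no separate derivation beyond saying it combines Theorem~\ref{thm:ila-direct-helper} with Lemma~\ref{lem:ila-helper}. Steps 1 and 2 are essentially fine. Your pairing argument (a chosen $a$ and an unchosen $b$ satisfy $w_a\le \ell_{j^*}+s<w_b+2s$) is actually cleaner than going through $\bar\eta_{k'}$. One small correction: a swap costs at most $2s$ in \emph{unnormalized} regret, so $2s/\delta_w$ after normalization. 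The noise therefore contributes $2k's/\delta_w\le 2/(\delta_w\pi\sqrt{\psi})$, which is still lower order for fixed $\delta_w$.

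The genuine gap is in Step 3. You assert $\log\frac{1+2s}{\theta}=(1+o(1))\log P$ whenever $k$ and $\psi^{-1}$ are polynomial in $P$, but if $k$ is polynomial in $P$ then $\log k=\Theta(\log P)$, not $o(\log P)$. With the prescribed parameters,
\[
\frac{1+2s}{\theta}=\Bigl(1+\frac{k\pi\sqrt{\psi}}{2}\Bigr)\frac{\pi\sqrt{\psi}\,P}{\log^{3/2}P}.
\]
So for constant $\psi$ and $k=\Theta(P^{a})$ you get $\log\frac{1+2s}{\theta}=(1+a+o(1))\log P$. The same computation then yields $(1+o(1))\frac{(2(1+a)^{3/2}+1)\log^{3/2}P+2(1+a)\log P\sqrt{\log(2/\beta)}}{\pi\sqrt{\psi}}$. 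In the regime $k=\Theta(P)$ that the paper emphasizes, the coefficients are $4\sqrt2+1$ and $4$, not $3$ and $2$.

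This is not removable bookkeeping. Keeping both $ks$ and $np$ at $O(\log^{3/2}P/\sqrt{\psi})$ forces $s\lesssim \log^{3/2}P/(k\sqrt{\psi})$ and $\theta\lesssim s\log^{3/2}P/(P\sqrt{\psi})$. That means at least on the order of $kP\psi/\log^{3}P$ bins, whose logarithm is about $\log P+\log k$. The paper appears to reach its constants through the same approximation $\log\frac{1+2s}{\theta}\approx\log P$. So the stated constants follow only under an extra hypothesis such as $\log\bigl[(1+k\pi\sqrt{\psi}/2)\pi\sqrt{\psi}\bigr]=o(\log P)$, e.g.\ $k=P^{o(1)}$ with $\psi$ constant. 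Without it, your argument gives an $O(\log^{3/2}P/\sqrt{\psi})$ bound with the larger constants above.
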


\subsection{ULA}

\begin{algorithm}
\KwIn{
\begin{itemize}
    \item Vector of welfare scores $w =[w_{1}, \ldots, w_{n}] \in [0,1]^n$
    \item Partition of $[n]$ into units $\mathcal{U} = \lbrace U_1, \ldots, U_M \rbrace$
    ; privacy parameter $\psi > 0$; aid budget $k$
\end{itemize}
}
\vspace{-2mm}
Sample $\widetilde \rho_j \sim \mathcal{N}(\rho_j, \frac{1}{2|U_j|\psi})$; 
Initialize $x_1, \ldots, x_M = 0$ \\
\While{$k > 0$} {
    Choose $j = \argmin \lbrace \widetilde \rho_j \mid U_j \in \mathcal{U} \rbrace$ \\
    Allocate $x_{j} = \min(|U_{j}|, k)$; \\
    Update budget $k = k - x_{j}$; Update remaining units $\mathcal{U} = \mathcal{U} \setminus  \lbrace U_{j} \rbrace$
}
\textbf{Return} The unit-level allocation $(x_1, \ldots, x_M)$
\caption{$\psi$-zCDP ULA with Public Unit Membership}
\label{alg:ula-direct}
\end{algorithm}

Our strategy for privatizing ULA is comparatively straightforward. The replace-one sensitivity of the unit profile vector $\rho$ is $1/N$, and so by Lemma \ref{lem:gaussian-mechanism}, adding Gaussian noise with variance $\sigma^2=(2\psi N^2)^{-1}$ suffices to satisfy $\psi$-zCDP. Denote the private unit profile vector by $\widetilde\rho$. Using Lemma \ref{lem:ula-helper} alongside standard results about the mean and concentration of various norms of Gaussian vectors, we derive the following unified bound (setting $n=P$ for this section):

\begin{theorem}\label{thm:ula-direct}
    Fix $\psi > 0$. Then, with probability at least $1-\beta$, we have that:
    \begin{equation}
        \overline\regret(\widetilde\ULA, w; k) \leq k\bar\rho_K + 
        \frac{1}{N\sqrt{\psi}} \cdot \min \begin{cases}
            2\min(k, P-k)\sqrt{\log(2M)} + \sqrt{\log(1/\beta)}\\
            \sqrt{\min(k, P-k)P} + \sqrt{\log(1/\beta)} \\
            \frac{P}{\sqrt{\pi}} + \sqrt{M\log(1/\beta)} 
        \end{cases}
    \end{equation}
\end{theorem}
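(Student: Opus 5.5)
The plan is to start from Lemma \ref{lem:ula-helper}, which already splits the normalized regret into the non-private term $k\bar\rho_K$ and a privacy term. That privacy term has three forms, depending on which norm of the noise vector $Z \coloneq \widetilde\rho - \rho$ we control. Since every unit has size $N$, Algorithm \ref{alg:ula-direct} draws $Z \sim \mathcal{N}(0, \sigma^2 I_M)$ with $\sigma = (N\sqrt{2\psi})^{-1}$. The first step is to bound each of the three norms $\lVert Z\rVert_\infty$, $\lVert Z\rVert_2$ and $\lVert Z\rVert_1$ with high probability. Each bound should be its expectation plus a Gaussian concentration term.

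The key tool is Gaussian concentration for Lipschitz functions (Borell--TIS). Each of the three norms is a Lipschitz function of a standard Gaussian vector: $\lVert \cdot \rVert_\infty$ and $\lVert \cdot\rVert_2$ are $1$-Lipschitz, and $\lVert\cdot\rVert_1$ is $\sqrt{M}$-Lipschitz. So with probability at least $1-\beta$, each norm is at most its mean plus $\sigma L\sqrt{2\log(1/\beta)}$, where $L$ is the Lipschitz constant. The means are:
\begin{itemize}
\item $\mathbb{E}\lVert Z\rVert_\infty \le \sigma\sqrt{2\log(2M)}$, by the standard maximal inequality for $2M$ Gaussians;
\item $\mathbb{E}\lVert Z\rVert_2 \le \sigma\sqrt{M}$, by Jensen;
\item $\mathbb{E}\lVert Z\rVert_1 = M\sigma\sqrt{2/\pi}$, exactly.
\end{itemize}
Multiplying by $\sigma\sqrt 2 = 1/(N\sqrt\psi)$ gives:
\begin{itemize}
\item $\lVert Z\rVert_\infty \le \frac{1}{N\sqrt\psi}\bigl(\sqrt{\log(2M)}+\sqrt{\log(1/\beta)}\bigr)$;
\item $\sqrt{N}\lVert Z\rVert_2 \le \frac{1}{N\sqrt\psi}\bigl(\sqrt{P/2}+\sqrt{N\log(1/\beta)}\bigr)$;
\item $N\lVert Z\rVert_1 \le \frac{1}{N\sqrt\psi}\bigl(P/\sqrt{\pi} + N\sqrt{M\log(1/\beta)}\bigr)$.
\end{itemize}

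The next step is to substitute each bound into the matching case of Lemma \ref{lem:ula-helper}:
\begin{itemize}
\item $p=1,q=\infty$ multiplies the first bound by $2\min(k,P-k)$;
\item $p=2,q=2$ multiplies $\lVert Z\rVert_2$ by $\sqrt{2\min(k,P-k)N}$, giving $\frac{1}{N\sqrt\psi}\sqrt{\min(k,P-k)}\bigl(\sqrt{P}+\sqrt{2N\log(1/\beta)}\bigr)$;
\item $p=\infty,q=1$ gives the third line directly.
\end{itemize}
The proof then takes the minimum over the three cases. The minimum is legitimate because, for a single realization of $Z$, all three bounds hold simultaneously on the intersection of the three events. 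Alternatively, one can pay a union bound by replacing $\beta$ with $\beta/3$, which only changes constants.

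The main obstacle I expect is matching the exact form of the $\sqrt{\log(1/\beta)}$ terms in the statement. My computations produce them multiplied by factors such as $\min(k,P-k)$, $\sqrt{\min(k,P-k)N}$ or $N$, whereas the statement writes them with coefficients that look smaller. I would first check whether the displayed statement intends these factors to be absorbed, for example by reading the first case as $2\min(k,P-k)\bigl(\sqrt{\log(2M)}+\sqrt{\log(1/\beta)}\bigr)$. If not, I would track the constants through the Lipschitz concentration carefully to pin down the exact form. The remaining work, computing the Gaussian means and checking the Lipschitz constants, is routine.
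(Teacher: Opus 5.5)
Your plan is the paper's own proof. The paper specializes Lemma~\ref{lem:ula-helper} to i.i.d.\ Gaussian noise. It bounds the three expected norms via the maximal inequality over $2M$ Gaussians, Jensen, and the folded-normal mean. It then adds Lipschitz Gaussian concentration with constants $1$, $1$, $\sqrt{M}$. Your arithmetic matches.

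The mismatch you flag is real, and it originates in the paper. Its proof stops at $\lVert\widetilde\rho-\rho\rVert_q\le\mathbb{E}\lVert\widetilde\rho-\rho\rVert_q+\sigma\sqrt{2\log(1/\beta)}$ (times $\sqrt{M}$ for $q=1$). Pushing this through H\"older gives exactly your coefficients $2\min(k,P-k)$, $\sqrt{2\min(k,P-k)N}$ and $N$ on the $\beta$-dependent terms, and the printed statement drops them. No constant-tracking will recover the printed form, because for the realized allocation it is false. Take two units with $\rho=(0,0.3)$, $N=k=10$, $\psi=1$, $\beta=10^{-3}$. The ranking flips with probability $\Phi(-3)>\beta$, and then the normalized regret is $3$. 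That exceeds the first printed case ($\approx 2.62$), and hence the minimum.

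If you instead read $\overline\regret(\widetilde\ULA,w;k)$ as the paper formally defines it, namely as an expectation over the algorithm's randomness, the printed inequality does hold. Your expectation bounds alone give it via $\mathbb{E}\min_i X_i\le\min_i\mathbb{E}X_i$, where $X_1,X_2,X_3$ are the three H\"older bounds, and the $\beta$-terms are just slack.

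One small slip: the intersection of three probability-$(1-\beta)$ events only has probability $1-3\beta$. You need neither it nor a union bound. The three right-hand sides are deterministic, so the minimum is attained at a fixed case, and that case's single event suffices.
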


\section{Accounting for the Cost of Sampling Welfare Scores}\label{sec:sampling}


We next consider a setting where the administrator must pay an upfront cost of $c_2n$ to observe the welfare of $n$ individuals, reflecting the cost of e.g. running an RCT or conducting a survey. This is in on top of the baseline cost of $ck'$ to allocate aid to $k'$ individuals, which introduces an exploration/exploitation-style tradeoff: we can improve the efficiency by collecting more data, but because our budget is fixed, this comes at the cost of aiding fewer people. 

\subsection{ILA}\label{sec:ila-sampling}

For the remainder of this section, we fix $k = B/c$ and define the ratio $\lambda = c_2/c$. Our goal is to select parameters $n,k'$ to minimize our regret relative to the best $k$-allocation, subject to the budget constraint that $ck' + c_2n = B \Rightarrow k' + \lambda n = k$. We derive matching upper and lower bounds for this problem, uncovering a sharp phase-transition: depending on how large $\lambda$ is, it is optimal to either allocate uniformly at random, or else sample at least $k/2$ people.

\begin{theorem}\label{thm:ila-sample-budget}
    There exists a strategy for ILA with sampling with worst-case expected regret
    \begin{equation}
        \overline\regret(\ILA_\lambda, w; k) \leq 
    \begin{cases}
        \frac{P\lambda k}{P\lambda+k} + \sqrt{\frac{\min(Pk, P^2\lambda)}{16(P\lambda+k)}} & \lambda < \frac{P-k}{P} \\
        k\left(1 - \frac{k}{P}\right) & \lambda \geq \frac{P-k}{P}
    \end{cases}
    \end{equation}

    Moreover, any strategy for ILA with sampling must suffer worst-case regret of at least:
    \begin{equation}
    \overline\regret(\ILA_\lambda, w; k) \geq
    \begin{cases}
        \frac{(P\lambda - \frac{1}{4}) k}{P\lambda + k} & \lambda < \frac{P-k}{P}\\
        k\left(1-\frac{k}{P}\right) & \lambda \geq \frac{P-k}{P}
    \end{cases}
\end{equation}
\end{theorem}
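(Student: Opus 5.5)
The upper bound comes from a simple sample-then-allocate strategy with a tuned sample size $n$. The lower bound comes from Yao's principle applied to a random hard instance with exactly $k$ low-welfare individuals. The phase transition at $\lambda = (P-k)/P$ is the point where the sampling strategy's leading regret term equals the regret of $\rand$.

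\textbf{Upper bound.} Fix $n$ and $k' = k - \lambda n$. The strategy samples $n$ individuals uniformly without replacement from the population and observes their welfare exactly. It aids up to $k'$ of the sampled low-welfare individuals. If fewer than $k'$ are found, it spends the remaining budget on individuals drawn uniformly from the unsampled population.

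Let $L$ be the number of low-welfare individuals and $X$ the number found in the sample, so $X$ is hypergeometric with mean $nL/P$ and variance at most $n/4$. Normalized regret is at most $\min(k,L)$ minus the expected number of low-welfare individuals aided. The worst case is $L \approx k$:
\begin{itemize}
\item When $L \gg k$, random filling is nearly as good as targeting.
\item When $L \ll k$, the sample finds most of them.
\end{itemize}
I would verify this by writing the regret as a function of $L$ and checking monotonicity on each side of $k$.

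At $L = k$, ignoring fluctuations, the regret is about $\max(\lambda n,\, k - nk/P)$. Balancing $k - \lambda n = nk/P$ gives $n = kP/(P\lambda + k)$ and regret $P\lambda k/(P\lambda + k)$. The fluctuation correction is $\mathbb{E}[(k' - X)^+]$ above the mean shortfall. I would bound it by $\sqrt{\mathrm{Var}(X)}$ up to constants via Jensen, using $\mathrm{Var}(X) \le \frac{n}{4}\min(1, \frac{P-n}{P})$. After substituting $n$, this should produce $\sqrt{\min(Pk, P^2\lambda)/(16(P\lambda + k))}$; the constant $1/16$ needs care.

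For $\lambda \ge (P-k)/P$, I would instead use $n = 0$, i.e.\ $\rand$. \Cref{lem:random-helper} gives regret $k\bar\rho_M$. The worst case over $L$ is $\max_L \mathbb{E}[\min(k,L) - \text{hits}] = k(1 - k/P)$, attained at $L = k$. The algebra check
\[
\frac{P\lambda k}{P\lambda + k} \ge k\Bigl(1 - \frac{k}{P}\Bigr) \iff \lambda \ge \frac{P-k}{P}
\]
confirms which strategy is preferable in each case.

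\textbf{Lower bound.} Let the adversary choose a uniformly random set of exactly $k$ low-welfare individuals, and bound the Bayes regret of any deterministic strategy. By exchangeability, adaptive choice of whom to sample gives no advantage, so only the number $n$ of samples matters; since every sample costs $\lambda$ aid slots, sampling more than needed only hurts. Given $X$ found lows, the Bayes-optimal allocation aids $\min(X, k')$ known lows. It spends the remaining $(k' - X)^+$ slots on unsampled individuals, each of whom is low with probability $(k - X)/(P - n)$.

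The expected regret is therefore
\[
k - \mathbb{E}\Bigl[\min(X, k') + (k' - X)^+ \tfrac{k - X}{P - n}\Bigr].
\]
Dropping the random-fill term where it is small relative to the main term and applying $\mathbb{E}[\min(X, k')] \le \min(\mathbb{E}X, k')$ gives a bound of order $k - \min(nk/P, k - \lambda n)$. The variance term, $\mathbb{E}[X^2] \le (\mathbb{E}X)^2 + n/4$ inside the random-fill product, should produce the $-\tfrac{1}{4}$ correction in the numerator. Minimizing over $n \in [0, k/\lambda]$, the minimizer is the balancing point when $\lambda < (P-k)/P$ and $n = 0$ otherwise, giving $k(1 - k/P)$.

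\textbf{Main obstacle.} The hardest step is this last optimization: keeping the random-fill cross term $\mathbb{E}[(k' - X)^+(k - X)]/(P - n)$ tight enough to get exactly $(P\lambda - \tfrac14)k/(P\lambda + k)$, rather than a looser bound. I would first try expanding it as a quadratic in $X$ and using the exact hypergeometric mean and the variance bound $n/4$. I would then show the resulting expression in $n$ is minimized at the balancing point in the first case and at $n = 0$ in the second.
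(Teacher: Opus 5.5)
The gap is in your upper bound. You account for regret through the low/high split at $1-\delta_w$, which is only valid for essentially binary welfare, but the theorem is a worst case over $w\in[0,1]^P$.

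\textbf{The upper bound.} The inequality ``normalized regret $\le\min(k,L)$ minus the number of lows aided'' fails when $L<k$. A high-welfare person with $w_i=1-\delta_w+\epsilon$ still has $\tau_i=\delta_w-\epsilon$, and the optimum collects those effects.

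Concretely, give $k$ people $w_i=1-\delta_w+\epsilon$ and everyone else $w_i=1$. Then $L=0$ and your accounting reports zero regret. But your strategy finds no ``lows'' and fills at random, hitting about $k'k/P$ of the good people. Its normalized regret is therefore about $k-k'k/P$. This exceeds $\frac{P\lambda k}{P\lambda+k}$ by $\frac{k^2(1-k/P)}{P\lambda+k}$, which is linear in $P$ (e.g.\ $k=0.3P$, $\lambda=0.2$, where the random fill is feasible).

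The same problem affects your $\lambda\ge\frac{P-k}{P}$ case. Lemma~\ref{lem:random-helper} only gives $k\bar\rho_M$, which can equal $k$, and maximizing over $L$ under the flawed accounting does not yield $k(1-k/P)$.

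The missing idea is to account against $\mathcal{J}^*$, the set of the $k$ smallest scores. Since $\tau$ is non-increasing in $w$ and bounded by $\delta_w$, any $S$ with $|S|\le k$ has normalized regret at most $|\mathcal{J}^*\setminus S|=k-|S\cap\mathcal{J}^*|$.

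If you aid the $k'$ \emph{smallest observed scores}, not just those below $1-\delta_w$, you capture at least $\min(k',Y)$ members of $\mathcal{J}^*$. Here $Y\sim\mathrm{Hypergeometric}(P,k,n)$ for \emph{every} input $w$. This is the paper's algorithm with $\kappa=k'$, and it removes the need for any worst case over $L$.

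At $n^*=\frac{Pk}{P\lambda+k}$ you have $k'=\mathbb{E}Y$. The correction is then $\mathbb{E}(\mathbb{E}Y-Y)^+=\frac12\mathbb{E}|Y-\mathbb{E}Y|\le\frac12\sqrt{\mathrm{Var}\,Y}$, with $\mathrm{Var}\,Y\le\frac{n(P-n)}{4(P-1)}\le\frac{\min(n,P-n)}{4}$; this is exactly where the $\frac{1}{16}$ comes from. For $\rand$, each of the $k$ picks lies in $\mathcal{J}^*$ with probability $k/P$, which gives $k(1-k/P)$ directly.

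\textbf{The lower bound.} Here you match the paper: a uniform prior over instances with exactly $k$ zero-welfare people, Yao's principle, greedy-then-random-fill as the Bayes response, and Jensen on $\min$. Your fill probability $\frac{k-X}{P-n}$ is the right one. Two details need fixing.

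First, the fill term is subtracted, so you cannot ``drop'' it in a lower bound.

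Second, $\mathrm{Var}(X)\le n/4$ is too weak for the stated constant. At $n^*$ it costs $\frac{n^*}{4(P-n^*)}=\frac{k}{4P\lambda}$ instead of $\frac{k}{4(P\lambda+k)}$. You need the finite-population factor $\frac{P-n}{P-1}$ in the hypergeometric variance to cancel the $\frac{1}{P-n}$.

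The cross term you flagged becomes easy with the following rewriting:
\begin{itemize}
\item Write the regret as $\lambda n+\mathbb{E}\bigl[(k'-X)^+\frac{P-n-k+X}{P-n}\bigr]$.
\item Use $\frac{k-X}{P-n}=\frac kP+\frac{\mu-X}{P-n}$ with $\mu=\frac{nk}{P}$.
\item Check that $\mathbb{E}[(k'-X)^+(\mu-X)]\le\mathrm{Var}(X)$ for either sign of $k'-\mu$.
\end{itemize}
This gives regret at least $\lambda n+\bigl(1-\frac kP\bigr)\bigl(k-n(\lambda+\frac kP)\bigr)^+-\frac{\mathrm{Var}(X)}{P-n}$. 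That expression is piecewise linear in $n$ with its kink at $n^*$. So, whenever the bound is nontrivial, its minimum is at $n=0$ or $n=n^*$, which gives the two regimes of the theorem.
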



Combining Theorem \ref{thm:ila-sample-budget} and Corollary \ref{cor:ila-direct-adv} yields the following corollary, which summarizes the utility guarantees of private ILA in this setting:

\begin{corollary}
    Let $\widetilde \ILA_\lambda$ be the algorithm that samples $n=\frac{Pk}{P\lambda + k}$ scores and then uses \autoref{alg:ila-direct} to approximately allocate aid to the people with the $k' = \frac{k^2}{P\lambda+k}$ lowest scores in the sample. Then $M$ satisfies Joint $\psi$-zCDP and with probability at least $1-\beta$, $\overline\regret(\widetilde \ILA_\lambda, w; k)$ is at most:
    \begin{align*}
        \frac{P\left(\lambda k + \sqrt{\lambda k \log(3/\beta)} \right)}{P\lambda + k}
        + \sqrt{\frac{\min(Pk, P^2\lambda)}{16(P\lambda+k)}}
        + (1+o(1)) \left[\frac{3\log^{3/2}P + 2\log P\sqrt{\log(3/\beta)}}{\pi \sqrt{\psi}} \right]
    \end{align*}
\end{corollary}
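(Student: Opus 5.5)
Privacy comes first. The sampling step selects $n$ individuals uniformly at random, and this selection does not depend on anyone's welfare. The only data-dependent computation is \autoref{alg:ila-direct}, run on the $n$ sampled scores. Lemma \ref{lem:ila-privacy} gives Joint $\psi$-zCDP for that run. Unsampled individuals receive no aid, so their outputs are a fixed function of the public sampling outcome. Their components can therefore be folded into the billboard post-processing of Lemma \ref{lem:billboard}.

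For utility, I would split the regret into three pieces, using a union bound with failure probability $\beta/3$ for each. The first piece is the loss from spending budget on sampling, $k-k' = \frac{P\lambda k}{P\lambda+k}$. The second is the loss from aiding the $k'$ lowest scores in a random sample of size $n$ instead of the $k$ lowest in the population. The third is the additional loss from privatizing the threshold in \autoref{alg:ila-direct}. I would bound the first two by re-running the argument behind the upper bound of Theorem \ref{thm:ila-sample-budget}, with two changes: replace its expectation by a high-probability statement, and use the same choices $n=\frac{Pk}{P\lambda+k}$, $k'=\frac{k^2}{P\lambda+k}$. Note that $k'/n = k/P$.

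In that argument, the number $X$ of the $k$ truly lowest-welfare individuals who land in the sample is hypergeometric with mean $nk/P = k'$. The $k'$ lowest in the sample include $\min(X,k')$ of them. The expected shortfall term gives the $\sqrt{\min(Pk,P^2\lambda)/(16(P\lambda+k))}$ contribution, exactly as in Theorem \ref{thm:ila-sample-budget}. The fluctuation of $X$ below its mean supplies the extra term. I would apply Hoeffding's inequality for sampling without replacement, which says $X \geq k' - \sqrt{\tfrac{n}{2}\cdot(\text{var proxy})\log(3/\beta)}$. I would then check that the variance proxy $n\frac{k}{P}(1-\frac{k}{P})$ is at most a constant multiple of $\frac{P\lambda k}{(P\lambda+k)}\cdot\frac{P}{P\lambda+k}$, so that after simplification the deviation can be written as $\frac{P\sqrt{\lambda k\log(3/\beta)}}{P\lambda+k}$.

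The main obstacle is matching this constant exactly. If Hoeffding does not give it directly, I would fall back on Bernstein or Serfling's inequality, absorbing lower-order terms into the stated form.

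Finally, condition on the sample and apply Corollary \ref{cor:ila-direct-adv} to \autoref{alg:ila-direct}, run on $n\le P$ scores with budget $k'$ and failure probability $\beta/3$. This bounds the shortfall relative to aiding the true $k'$ lowest sampled scores by $(1+o(1))\frac{3\log^{3/2}P+2\log P\sqrt{\log(3/\beta)}}{\pi\sqrt\psi}$, using $\log n\le\log P$. Lemma \ref{lem:ila-direct-budget} guarantees the budget $k'$ is never exceeded, so the total spend stays within $B$. I would combine the three pieces through Lemma \ref{lem:ila-helper} with $\eta_i\in\{0,1\}$: the wrongly included or excluded individuals are exactly those counted by the sampling shortfall and the privacy shortfall. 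Summing the three bounds under the union bound gives the stated inequality with probability at least $1-\beta$.
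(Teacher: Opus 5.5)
Your structure matches the paper's proof, which combines the high-probability form of Theorem~\ref{thm:ila-sample-budget} with Corollary~\ref{cor:ila-direct-adv}. Both use the deterministic $k-k'$ term, the expected-shortfall term, a hypergeometric deviation term, and the privacy term with $\log n\le\log P$. Your privacy argument is also fine.

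The gap is in the deviation step, and it is not a matter of constants. The comparison you plan to check is false. We have $n\frac{k}{P}(1-\frac{k}{P}) = \frac{k^2(P-k)}{P(P\lambda+k)}$, and its ratio to $\frac{P^2\lambda k}{(P\lambda+k)^2}$ is $\frac{k(P-k)(P\lambda+k)}{P^3\lambda}$. This ratio is unbounded as $\lambda\to 0$; for $k=P/2$ it equals $\frac14+\frac{1}{8\lambda}$. Plain Hoeffding for sampling without replacement, with deviation $\sqrt{n\log(3/\beta)/2}$, fails the same way: the ratio of squares is $\frac{P\lambda+k}{2P\lambda}$. Bernstein with the with-replacement variance fails too. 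The reason is structural. As $\lambda\to0$, the sample size $n=\frac{Pk}{P\lambda+k}$ tends to $P$, so $X$ becomes deterministic, and the target deviation $\frac{P\sqrt{\lambda k\log(3/\beta)}}{P\lambda+k}$ vanishes like $\sqrt{\lambda}$. No bound without a finite-population correction can produce that factor, and the discrepancy cannot be absorbed into lower-order terms.

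The missing ingredient is Serfling's inequality, i.e.\ Lemma~\ref{lem:hypergeom-concentration}, and it is the main tool here, not a fallback. Its correction factor is $\frac{P-n}{P}=\frac{P\lambda}{P\lambda+k}$. The expression $\sqrt{\frac{P-n}{P}\,n\log(3/\beta)}$ is then identically $\frac{P\sqrt{\lambda k\log(3/\beta)}}{P\lambda+k}$, using the one-sided version at level $\beta/3$ as the paper does. This is the ``Concentration of Regret'' paragraph in the proof of Theorem~\ref{thm:ila-sample-budget}.

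A smaller bookkeeping point concerns the union bound. Since $k-k'$ is deterministic, your three $\beta/3$ events should be the following:
\begin{itemize}
\item the sampling deviation;
\item the budget event of Lemma~\ref{lem:ila-direct-budget} inside \autoref{alg:ila-direct};
\item the bin-count event of Theorem~\ref{thm:ila-direct-helper} inside \autoref{alg:ila-direct}.
\end{itemize}
In other words, run \autoref{alg:ila-direct} with failure parameter $2\beta/3$, which turns its $\log(2/\beta)$ into $\log(3/\beta)$. Invoking Corollary~\ref{cor:ila-direct-adv} at level $\beta/3$, as you propose, would give $\log(6/\beta)$ in the privacy term.
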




\subsection{ULA}

We begin by conducting a stratified sample of $n$ individuals across units, giving us $n/M$ observations per unit which can be provided as input to \autoref{alg:ula-direct}. Our final error is a combination of the statistical error from sampling and the error from additive noise.

An interesting feature of this problem (also observed in \cite{casacuberta2026good}) is that the worst case in a statistical sense would occur if $\rho_j = 0.5$ for all $j$, but in this case, the excess regret of our \textit{ranking} would be 0. 
Building on this intuition, we show that for any fixed vector $\rho$, the average variance across all components must be small whenever inequality is high:

\begin{lemma}\label{lem:inequality-variance}
    Let $G$ be the Gini coefficient of the unit profile vector $\rho$. Then we have that 
        $\frac{1}{M}\sum_{j=1}^M \rho_j(1-\rho_j) \leq \bar\rho_M(1-\bar\rho_M) - 3\bar\rho_M^2 \cdot G^2$
\end{lemma}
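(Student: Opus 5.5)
The plan is to recognise the left-hand side as a Bernoulli variance minus the between-unit variance of $\rho$, and then to bound the Gini coefficient by that variance. Write $\sigma^2 \coloneq \frac{1}{M}\sum_{j=1}^M (\rho_j - \bar\rho_M)^2$ for the empirical variance of the unit profile vector. A direct expansion gives
\begin{equation*}
\frac{1}{M}\sum_{j=1}^M \rho_j(1-\rho_j) = \bar\rho_M - \frac{1}{M}\sum_{j=1}^M \rho_j^2 = \bar\rho_M - \bar\rho_M^2 - \sigma^2 = \bar\rho_M(1-\bar\rho_M) - \sigma^2 .
\end{equation*}
The lemma is therefore equivalent to the inequality $\sigma^2 \geq 3\bar\rho_M^2 G^2$. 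This is a Gini-versus-variance inequality, namely the finite-sample form of the classical fact that the mean absolute difference is at most $2\sigma/\sqrt{3}$.

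To prove it, set $\Delta \coloneq \frac{1}{M^2}\sum_{i,j}|\rho_i-\rho_j|$, so that $\Delta = 2G\bar\rho_M$ by the definition of $G$ in Lemma \ref{lem:gini-baseline}. The target becomes $\Delta^2 \leq \frac{4}{3}\sigma^2$. I would write $\Delta$ in sorted form using $s(\cdot)$. Each $\rho_{s(i)}$ appears with coefficient $+1$ against the $i-1$ smaller entries and $-1$ against the $M-i$ larger ones, so
\begin{equation*}
\Delta = \frac{2}{M^2}\sum_{i=1}^M (2i-M-1)\,\rho_{s(i)} = \frac{2}{M^2}\sum_{i=1}^M (2i-M-1)\,(\rho_{s(i)}-\bar\rho_M).
\end{equation*}
The second equality holds because the weights $2i-M-1$ sum to zero, which lets us centre each term.

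Next apply Cauchy--Schwarz to this sum:
\begin{equation*}
\Delta^2 \leq \frac{4}{M^4}\left(\sum_{i=1}^M (2i-M-1)^2\right)\left(\sum_{i=1}^M(\rho_{s(i)}-\bar\rho_M)^2\right).
\end{equation*}
A standard sum-of-squares computation gives $\sum_{i=1}^M (2i-M-1)^2 = M(M^2-1)/3$. The second factor equals $M\sigma^2$. Hence
\begin{equation*}
\Delta^2 \leq \frac{4}{3}\left(1-\frac{1}{M^2}\right)\sigma^2 \leq \frac{4}{3}\sigma^2 .
\end{equation*}
Substituting $\Delta = 2G\bar\rho_M$ gives $4G^2\bar\rho_M^2 \leq \frac{4}{3}\sigma^2$, that is, $\sigma^2 \geq 3G^2\bar\rho_M^2$, which completes the argument.

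The main obstacle is getting the sorted-sum representation of $\Delta$ right, including the factor of $2$ and the coefficient $2i-M-1$, and evaluating the sum of squared coefficients correctly. Once those are in place, the remaining steps are Cauchy--Schwarz and substitution. If $\bar\rho_M = 0$, then $G$ is undefined, but all $\rho_j = 0$ and both sides of the lemma are trivially $0$, so that case can be set aside.
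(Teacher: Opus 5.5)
Your proof is correct, and it reaches the bound by a different technique from the paper's. You first use the identity $\frac1M\sum_j\rho_j(1-\rho_j)=\bar\rho_M(1-\bar\rho_M)-\frac1M\sum_j(\rho_j-\bar\rho_M)^2$ to turn the lemma into a pure Gini-versus-variance inequality. You then close it with a single application of Cauchy--Schwarz to the centred sorted sum.

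The paper never isolates the variance. It fixes $\bar\rho_M$ and the value of $\frac1M\sum_j\rho_j(1-\rho_j)$ as equality constraints and drops the box constraints $\rho_j\in[0,1]$. It then maximizes $M^2\bar\rho_M G=\sum_k(2k-M-1)\rho_{s(k)}$ with Lagrange multipliers, finds that the stationary profile is evenly spaced, $\rho_{s(k)}-\bar\rho_M\propto 2k-M-1$, and substitutes back. That evenly spaced profile is exactly the equality case of your Cauchy--Schwarz step. The same constant $\sum_k(2k-M-1)^2=M(M^2-1)/3$, and hence the same intermediate factor $1-1/M^2$, appears in both. So the two proofs are two views of the same extremal problem.

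Your route buys rigor and economy. You need no argument that the stationary point is a global maximum; the paper simply takes the positive root for the multiplier. You need no check that the unconstrained stationary point respects the sorted order $s(\cdot)$, and no relaxation of the constraints. The variance identity also makes it transparent that the lemma is the finite-sample form of the classical bound: mean absolute difference $\le 2\sigma/\sqrt3$.

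The paper's route, in turn, displays the extremizer explicitly, which shows the bound is essentially tight. You could recover the same observation from the equality condition in Cauchy--Schwarz.
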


Combining this result with Lemma \ref{lem:ula-helper} yields the following theorem:

\begin{theorem}\label{thm:ula-sample}
    Fix $k, \lambda$, and $\rho$. Then $\overline\regret(\widetilde\ULA_\lambda, w;k)$ is at most:
\begin{equation}
    \bar\rho_K k + (1-\bar\rho_K)\lambda n + \min\begin{cases}
        \sqrt{\frac{M^2}{2\psi n^2} + \frac{P-n}{4nN}} \cdot 2\min(k, P-k)\sqrt{2\log(2M)} \\
        \sqrt{\frac{M^2}{2\psi n^2} + \frac{(P-n)(\bar\rho_M(1-\bar\rho_M)-3\bar\rho_M^2G^2)}{nN}} \cdot \sqrt{2\min(k, P-k)P} \\
        \sqrt{\frac{M^2}{2\psi n^2} + \frac{(P-n)(\bar\rho_M(1-\bar\rho_M)-3\bar\rho_M^2G^2)}{nN}} \cdot P\sqrt{\frac{2}{\pi}}
    \end{cases}
\end{equation}

\begin{corollary}\label{cor:ula-sample-budget}
    Let $C = 
    \min\left(
        2\min(k, P-k)\sqrt{2\log(2M)},
        \sqrt{2\min(k, P-k)P},
        P\sqrt{\frac{2}{\pi}}
    \right)$. Choosing $n = \max \Big(\left(\frac{CM}{\lambda (2\psi)^{1/2}}\right)^{1/2}  , \left(\frac{C^2M}{16\lambda^2} \right)^{1/3}\Big)$, the marginal increase in the expected normalized regret of $\widetilde\ULA_\lambda$ over non-private ULA is at most:
     $2^{3/4} \left(CM\lambda/\sqrt{\psi}\right)^{1/2} + 2^{-4/3} \left(C^2M\lambda \right)^{1/3}$
\end{corollary}
    
\end{theorem}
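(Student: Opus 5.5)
The statement to prove is Corollary~\ref{cor:ula-sample-budget}, which is nested inside Theorem~\ref{thm:ula-sample}. I will take the bound of Theorem~\ref{thm:ula-sample} as given and read the marginal increase over non-private ULA as everything except the $\bar\rho_K k$ term. I will bound the sampling cost term $(1-\bar\rho_K)\lambda n$ crudely by $\lambda n$.

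\textbf{Simplifying the privacy--sampling term.} For the last term, I will use $\sqrt{a+b}\le\sqrt a+\sqrt b$ to split the square root. I will then bound the statistical variance factor from above uniformly across the three cases. In the first case the factor is $1/4$. In the other two I will use Lemma~\ref{lem:inequality-variance} to get $\bar\rho_M(1-\bar\rho_M)-3\bar\rho_M^2G^2\le 1/4$. I will also use $P-n\le P=MN$, so that $\frac{P-n}{4nN}\le\frac{M}{4n}$.

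Up to matching the constant bookkeeping to the definition of $C$ (which absorbs the $\sqrt2$ and $\sqrt{2/\pi}$ factors), the excess regret is then at most
\[
\lambda n + \frac{CM}{\sqrt{2\psi}\,n} + \frac{C}{2}\sqrt{\frac{M}{n}} .
\]

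\textbf{Optimizing $n$.} The function is a sum of an increasing term and two decreasing terms. Choose $n=\max(n_1,n_2)$, where $n_1$ balances $\lambda n$ against $CM/(\sqrt{2\psi}\,n)$ and $n_2$ balances $\lambda n$ against $\frac{C}{2}\sqrt{M/n}$. Solving gives
\[
n_1=\left(\frac{CM}{\lambda\sqrt{2\psi}}\right)^{1/2},\qquad n_2=\left(\frac{C^2M}{16\lambda^2}\right)^{1/3},
\]
which are exactly the choices in the statement.

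With $n=\max(n_1,n_2)$, each decreasing term is at most its value at the corresponding $n_i$. So $\frac{CM}{\sqrt{2\psi}\,n}\le\lambda n_1$ and $\frac C2\sqrt{M/n}\le\lambda n_2$. For the increasing term I will use $\lambda\max(n_1,n_2)\le\lambda n_1+\lambda n_2$. The total is then at most $2\lambda n_1+2\lambda n_2$.

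\textbf{Evaluating the constants.} First,
\[
2\lambda n_1 = 2\left(\frac{CM\lambda}{\sqrt{2\psi}}\right)^{1/2} = 2^{3/4}\left(\frac{CM\lambda}{\sqrt\psi}\right)^{1/2},
\]
which matches the first term. Second,
\[
2\lambda n_2 = 2\left(\frac{C^2M\lambda}{16}\right)^{1/3} = 2^{1-4/3}\,(C^2M\lambda)^{1/3} = 2^{-1/3}\,(C^2M\lambda)^{1/3}.
\]
The stated constant is $2^{-4/3}$, which this crude argument does not reproduce.

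\textbf{Main obstacle.} The main difficulty is this constant bookkeeping, in particular getting the sampling-error coefficient right. The slack most likely lies in the $1/4$ variance bound, the $\sqrt2$ factors inside $C$, and how the $\max$ is handled. To land on $2^{-4/3}$ I would instead work with the exact expression
\[
\frac{C}{\sqrt2}\sqrt{\frac{M}{4n}}\cdot\frac{1}{\sqrt2}
\]
before balancing, so as not to lose factors of $2$.

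This does not affect the shape of the result: the rates $(CM\lambda/\sqrt\psi)^{1/2}$ and $(C^2M\lambda)^{1/3}$ follow from the balancing argument above.
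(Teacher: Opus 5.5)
\textbf{Where the corollary argument fails.} For the corollary you follow the paper's route exactly: drop $\bar\rho_K$, bound the variance factor by $\frac14$, split the root, use $(P-n)/N\le M$, optimize the two decreasing terms separately, and take $n=\max(n_1,n_2)$. The step for $n_2$ is wrong, though.

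The value $n_2=(C^2M/(16\lambda^2))^{1/3}$ is the \emph{stationary point} of $\lambda n+\frac C2\sqrt{M/n}$. It is not the point where the two terms are equal; that would be $(C^2M/(4\lambda^2))^{1/3}$. At the stationary point of $\lambda n+An^{-1/2}$ one has $An^{-1/2}=2\lambda n$. Hence $\frac C2\sqrt{M/n_2}=2\lambda n_2=2^{-1/3}(C^2M\lambda)^{1/3}$, not $\lambda n_2$. For $n_1$, stationarity and balancing coincide, so that half of your argument is fine. Done correctly, your bookkeeping gives
\[
\lambda\max(n_1,n_2)+\lambda n_1+2\lambda n_2\;\le\;2^{3/4}\Bigl(\frac{CM\lambda}{\sqrt{\psi}}\Bigr)^{1/2}+3\cdot 2^{-4/3}\,(C^2M\lambda)^{1/3}.
\]

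Your suggested repair, rewriting the coefficient as $\frac{C}{\sqrt2}\sqrt{M/(4n)}\cdot\frac1{\sqrt2}$, has no justification. $C$ is defined with exactly the coefficients appearing in the theorem, so there is no spare factor of $2$ to recover.

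In fact no bookkeeping will produce $2^{-4/3}$ along this route. As $\psi\to\infty$ the privacy term vanishes, and $\min_n\bigl(\lambda n+\frac C2\sqrt{M/n}\bigr)=3\cdot 2^{-4/3}(C^2M\lambda)^{1/3}$. The paper reaches $2^{-4/3}$ only through its last displayed inequality. With $a=(CM\lambda/\sqrt{2\psi})^{1/2}$ and $X=(C^2M\lambda)^{1/3}$, that inequality reads $\max(a,2^{-4/3}X)+a+2^{-1/3}X\le 2a+2^{-4/3}X$. It is false, since $2^{-1/3}X$ alone exceeds $2^{-4/3}X$. The honest conclusion of this approach is the bound above with $3\cdot 2^{-4/3}$. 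You should flag the stated constant rather than try to force it.

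\textbf{The theorem itself is not proved.} The statement includes Theorem~\ref{thm:ula-sample}, which you simply assume. The paper derives it in these steps:
\begin{enumerate}
\item Treat each stratified private estimate $\widetilde\rho_j$ as approximately Gaussian (Gaussian mechanism plus a CLT for the hypergeometric sample), with variance $\sigma_j^2=\frac{M^2}{2\psi n^2}+\frac{M(N-n/M)\rho_j(1-\rho_j)}{nN}$.
\item Bound $\max_j\sigma_j^2$ using $\rho_j(1-\rho_j)\le\frac14$, and bound the average variance using Lemma~\ref{lem:inequality-variance}.
\item Use Jensen to bound the mean standard deviation.
\item Feed the Gaussian norm estimates from the proof of Theorem~\ref{thm:ula-direct} into Lemma~\ref{lem:ula-helper}.
\item Obtain the $(1-\bar\rho_K)\lambda n$ term by uniformly thinning each treated unit by the factor $(k-\lambda n)/k$.
\end{enumerate}
None of this appears in your proposal.

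A minor point: $\bar\rho_M(1-\bar\rho_M)-3\bar\rho_M^2G^2\le\frac14$ is immediate and does not need Lemma~\ref{lem:inequality-variance}.
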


\subsection{Asymptotic Regimes}\label{sec:sampling-regimes}

For both ULA and ILA, the price of privacy is asymptotically negligible provided that $k = \Theta(P)$. Understanding their asymptotic performance therefore boils down to a comparison of the linear terms in each strategy's regret bound. By combining our previous results, we can define three possible asymptotic regimes in terms of the key parameters $G, \lambda$, and $\bar\rho_M$:

\begin{itemize}
    \item \textbf{ULA $>$ ILA $>$ RAND}. This occurs whenever $\frac{k}{P} \cdot \frac{\bar\rho_M(1-G)}{1-\bar\rho_M(1-G)} < \lambda < 1-\frac{k}{P}$; sampling is cheap enough for ILA to be more effective than random allocation, but too expensive to justify the marginal cost of needing to sample a larger percentage of the population.
    \item \textbf{ULA $>$ ILA $=$ RAND}. This occurs whenever $1 - \frac{k}{P} \leq \lambda$ and $G > 0$; sampling is too expensive to meaningfully improve the efficiency of ILA, but ULA is still able to take advantage of the information conveyed by unit membership.
    \item \textbf{ILA $\geq$ ULA $\geq$ RAND}. This occurs whenever $\lambda < \min(1-\frac{k}{P}, \frac{k}{P} \cdot \frac{\bar\rho_M(1-G)}{1-\bar\rho_M(1-G)})$; sampling is sufficiently cheap for ILA to make efficient use of it, while low inequality and high average welfare make unit membership a weak signal for identifying the worst-off.
\end{itemize}

A particularly interesting consequence of the above is that ULA always dominates ILA when $\lambda \geq 1$, which can be interpreted as representing settings where the only efficient means to measure welfare (or treatment effects) is to actually allocate aid and observe the result.

\section{Learning Welfare Scores from Auxiliary Information}
\label{sec:learning}

In the final setting we consider, we can once again pay a cost of $c_2 = \lambda c$ to observe a single individual's welfare. But unlike in the previous setting, every individual also has a \textit{feature vector} $x_i \in \mathcal{X}$ and a binary label $y_i = \mathbb{I}[w_i > 1-\delta_w]$ drawn i.i.d. from a joint distribution $\mathcal{D}$ over $\mathcal{X} \times \lbrace 0,1 \rbrace$. 
We assume that features (but not labels) are already known to the administrator,
whose goal is to learn a convex model to predict $\eta \coloneq\mathbb{E}[y \mid x]$ subject to DP, with both labels and features considered private. For this purpose, we make use of Algorithm 2 of Feldman et al.~\cite{feldman2020private}, which achieves optimal rates on population risk (see Theorem \ref{thm:feldman}). 

Our final allocation is computed using only the data of individuals who weren't included in the original sample. This allows us to use parallel composition~\cite{mcsherry2009privacy}, and because both ULA and ILA sample $o(P)$ welfare scores, it induces only asymptotically negligible regret.

For the purposes of relating our model's risk to the downstream efficiency of the induced allocation, it is useful to additionally require that the loss functions we use are \textit{proper}~\cite{buja2005loss,bao2023proper,narasimhan2013relationship}, meaning that $\mathcal{L}(\theta)$ is minimized when $\hat\eta \coloneq f_\theta(x) = \eta$. 
In the sequel, we will focus primarily on the squared loss $f(\theta, x, y) = (y - f_\theta(x))^2$, a canonical proper loss function.

We define regret in this stochastic setting against the best allocation \textit{in hindsight}, i.e. $\mathbb{E}_{w \sim \mathcal{D}^P}[\max_{|\mathcal{J}| = k} \val(\mathcal{J}, w) ]$. As a consequence, ILA with a Bayes' optimal model could still suffer significant regret if the variance of $y|x$ is high. It will emerge that ULA is much more robust to this sort of unpredictability, which we formalize with the following definition:

\begin{definition}[Excess-Risk Decomposition]\label{def:excess-risk} Let $\ell$ denote the squared loss, and let $\mathcal{L}(\theta) \coloneq \mathbb{E}[\ell(\theta, x, y)] = \alpha$. Then, if $x,y \sim \mathcal{D}$,
\begin{align}
    \alpha = \mathbb{E}[(f_{\theta}(x) - y)^2] 
    &= \underbrace{\mathbb{E}[(f_{\theta}(x) - \mathbb{E}[y\mid x])^2]}_{E} + \underbrace{\mathbb{E}[(y - \mathbb{E}[y\mid x])^2]}_{\sigma^2}     \end{align}
Where $E$ is the excess risk, and $\sigma^2$ is the irreducible or Bayes' risk. The excess risk of the optimal classifier $\theta^* = \argmin_{\theta \in \Theta} \mathcal{L}(\theta)$ is denoted by $E^*$.
\end{definition}

\subsection{ILA}

ILA proceeds by first learning a model to predict the probability that an individual is low-welfare, and then allocating aid according to the predicted probabilities in ascending order. To start out, we take the expected loss of the learned model as a constant, and then later incorporate the error bounds from Theorem \ref{thm:feldman} to derive the optimal choice of $n$.

\begin{lemma}\label{lem:ila-sco}
    Let $f$ be a binary classifier with expected $\ell_2^2$ loss $\alpha$, and let $\widehat \ILA$ denote the algorithm that allocates aid to individuals in ascending order based on $\hat y_i = f(x_i)$. Then with probability at least $1-\beta$, we have:
    \begin{equation}
    \overline\regret(\widehat \ILA, w; k) \leq k' \cdot \bar \eta_{k'} + P\sqrt{\alpha} + P^{3/4}\sqrt{\log(1/\beta)/2}
\end{equation}
\end{lemma}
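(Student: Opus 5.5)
The plan is to reduce Lemma \ref{lem:ila-sco} to Lemma \ref{lem:ila-helper} and then control the misranking term using the model's expected squared loss. Applying Lemma \ref{lem:ila-helper} with estimates $\hat\eta_i = f(x_i)$ (and with the aided count $k'$ taken as in the statement, absorbing any $k-k'$ slack into the definition of the strategy) gives
\begin{equation*}
\overline\regret(\widehat\ILA, w;k) \leq k'\bar\eta_{k'} + \sum_{i\in\mathcal{I}} |\hat\eta_i - \eta_i| \leq k'\bar\eta_{k'} + \sum_{i=1}^P |f(x_i) - \eta(x_i)| .
\end{equation*}
Here $\mathcal{I}$ is the set of wrongly included or excluded individuals, and the second inequality simply bounds the sum over $\mathcal{I}$ by the sum over the whole population. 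This is crude but matches the $P\sqrt{\alpha}$ scaling in the target. So it remains to bound $Z \coloneq \sum_{i=1}^P |f(x_i)-\eta(x_i)|$ with high probability.

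For the expectation, use the excess-risk decomposition of Definition \ref{def:excess-risk}. Since $\alpha = E + \sigma^2 \ge E$, we have $\mathbb{E}[(f(x)-\eta(x))^2] \le \alpha$. Jensen's inequality then gives $\mathbb{E}|f(x)-\eta(x)| \le \sqrt{E} \le \sqrt{\alpha}$, so $\mathbb{E}[Z] \le P\sqrt{\alpha}$. This step needs $f$ to be independent of the population on which we allocate. That is guaranteed by the setup, since the allocation uses only individuals outside the training sample, so we condition on $f$ and treat the $x_i$ as i.i.d. draws from $\mathcal{D}$.

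For concentration, each summand $|f(x_i)-\eta(x_i)|$ lies in $[0,1]$ (after clipping predictions to $[0,1]$ if needed) and the summands are independent. Hoeffding's inequality therefore gives $Z \le P\sqrt{\alpha} + \sqrt{P\log(1/\beta)/2}$ with probability at least $1-\beta$. This is at most the stated $P^{3/4}\sqrt{\log(1/\beta)/2}$ term, so the lemma follows, with some room to spare.

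The main obstacle is matching the exact form of the stated bound, in particular its $P^{3/4}$ deviation term. This suggests the authors take a different route: for example, applying Hoeffding to the squared errors and then taking a square root ($\sum (f-\eta)^2 \le P\alpha + \sqrt{P\log(1/\beta)/2}$, followed by Cauchy--Schwarz, $Z \le \sqrt{P\sum(f-\eta)^2}$, which yields $P\sqrt{\alpha}$ plus a $P^{3/4}$-order term via $\sqrt{a+b}\le\sqrt a+\sqrt b$). I would present the Cauchy--Schwarz route to align with the statement, and note that the direct Hoeffding argument gives the sharper $\sqrt{P}$ deviation. A secondary point to check is that Lemma \ref{lem:ila-helper} applies with random $\eta_i$ under the in-hindsight regret. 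This amounts to taking expectations over the labels $y_i$ given the features, which is exactly the situation that produces the $k'\bar\eta_{k'}$ term.
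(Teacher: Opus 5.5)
Your proof is correct and uses the same skeleton as the paper: apply Lemma~\ref{lem:ila-helper}, enlarge the sum over $\mathcal{I}$ to all $P$ individuals, use Jensen for the mean and Hoeffding for the tail. Both substantive steps differ, however.

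The paper bounds the misranking term by $\sum_i |y_i-\hat y_i|$ with realized labels. It applies Hoeffding to the squared errors to get $\sum_i (y_i-\hat y_i)^2\le P\alpha+\sqrt{P\log(1/\beta)/2}$, and then implicitly uses Cauchy--Schwarz to reach $P\sqrt{\alpha+\sqrt{\log(1/\beta)/(2P)}}$.

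You keep the term as $\sum_i|f(x_i)-\eta(x_i)|$, which is what Lemma~\ref{lem:ila-helper} actually supplies. You bound its mean by $P\sqrt{E}\le P\sqrt{\alpha}$ and apply Hoeffding directly to the $[0,1]$-valued absolute errors. This is sharper on both counts:
\begin{itemize}
\item the deviation is $\sqrt{P\log(1/\beta)/2}$ rather than order $P^{3/4}$;
\item you avoid the paper's double charge for Bayes noise (once inside $P\sqrt{\alpha}=P\sqrt{E+\sigma^2}$, again through $k'\bar\eta_{k'}$).
\end{itemize}
So your intermediate bound is really $k'\bar\eta_{k'}+P\sqrt{E}+\sqrt{P\log(1/\beta)/2}$.

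The paper's route buys mainly that its high-probability event covers label noise as well as features. Yours is a high-probability bound, over features, on the label-averaged regret. That is the natural reading of Lemma~\ref{lem:ila-helper}, and you rightly note it needs $f$ to be independent of the allocation population.

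Do not, however, switch to the Cauchy--Schwarz route ``to align with the statement.'' After $\sqrt{a+b}\le\sqrt{a}+\sqrt{b}$ it produces $P^{3/4}\bigl(\log(1/\beta)/2\bigr)^{1/4}$, a fourth root. That is dominated by the stated $P^{3/4}\sqrt{\log(1/\beta)/2}$ only when $\beta\le e^{-2}$, and the final inequality of the paper's proof glosses over exactly this point. Your direct bound implies the statement for every $\beta\in(0,1)$, since $P^{1/2}\le P^{3/4}$, so it is the one to present.

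Also, rather than ``absorbing $k-k'$ into the strategy,'' say plainly that the statement is for $k'=k$, or for regret against the best $k'$-allocation. The paper's proof carries an explicit additive $(k-k')$ from Lemma~\ref{lem:ila-helper}, which later reappears as $n\lambda$ in Theorem~\ref{thm:ila-sco-private}.
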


\begin{theorem}\label{thm:ila-sco-private}
    Let $\widetilde\ILA$ be the algorithm that samples $n$ individuals from the population uniformly at random, fits a model $\hat\theta$ with expected $\ell_2^2$ error $\alpha$ subject to $\psi$-zCDP, and then runs \autoref{alg:ila-direct} on the estimated scores of the remaining population with privacy parameter $\psi$. Then $\widetilde \ILA$ satisfies Joint $\psi$-zCDP, and:
    \begin{equation}
        \overline\regret(\widetilde \ILA, w; k) \leq n\lambda + k' \bar \eta_{k'} +P\sqrt{\alpha} + O\left( \frac{\log^{3/2} P}{\sqrt{\psi}} \right)
    \end{equation}
\end{theorem}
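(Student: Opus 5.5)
The proof has two parts: privacy, then regret.

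\textbf{Privacy.} The sampled set $S$ of $n$ individuals is disjoint from the complement $[P]\setminus S$ on which \autoref{alg:ila-direct} is run. The model $\hat\theta$ is $\psi$-zCDP with respect to the data in $S$. Given the public features of the complement and $\hat\theta$, the scores $\hat\eta_i = f_{\hat\theta}(x_i)$ are a deterministic function of the complement's data. By Lemma \ref{lem:ila-privacy}, \autoref{alg:ila-direct} is Joint $\psi$-zCDP with respect to those scores.

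Changing individual $i$'s record affects exactly one of the two stages. If $i \in S$, the other outputs depend on $i$ only through $\hat\theta$. Post-processing gives a R\'enyi divergence of at most $\psi\alpha$. If $i \notin S$, $\hat\theta$ is unchanged, and Joint zCDP of \autoref{alg:ila-direct} bounds the divergence of everybody else's outputs. Parallel composition then gives Joint $\psi$-zCDP overall. One subtlety is that the sample is chosen at random independently of the data. I would argue conditionally on $S$ and average, using joint convexity of $e^{(\alpha-1)D_\alpha}$.

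\textbf{Regret.} I split the spending into three parts:
\begin{itemize}
\item $\lambda n$ units of budget go to sampling, which costs at most $\lambda n$ in normalized regret, since each aid slot is worth at most $\delta_w$;
\item the aid actually delivered to the remaining $P-n$ people;
\item the slack from \autoref{alg:ila-direct} under-spending its budget $k' = k - \lambda n$.
\end{itemize}

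I apply Lemma \ref{lem:ila-helper} to the delivered allocation, with $k''$ the number actually treated. This gives $(k'-k'') + k''\bar\eta_{k''} + \sum_{i\in\mathcal I}|\hat\eta_i-\eta_i|$.

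The estimation-error sum is handled exactly as in Lemma \ref{lem:ila-sco}. By Cauchy--Schwarz, $\sum_i|\hat\eta_i-\eta_i| \le \sqrt{P\sum_i(\hat\eta_i-\eta_i)^2}$. The expected excess risk is $E\le\alpha$, and a concentration step bounds the empirical average, yielding $P\sqrt\alpha$ plus a lower-order term. That term is absorbed into the $O(\cdot)$ if we work in expectation, or if $\beta$ is polynomially small.

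The slack $k'-k''$ is bounded by Theorem \ref{thm:ila-direct-helper} with the parameter choices of Corollary \ref{cor:ila-direct-adv}, taking $s>0$ so that the noisy scores are independent uniform perturbations. This gives $O(\log^{3/2}P/\sqrt\psi)$ with high probability. Lemma \ref{lem:ila-direct-budget} guarantees that the budget is respected.

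\textbf{Main obstacle.} I expect the hard part to be reconciling the noise $s$ and the threshold with the quantity $k'\bar\eta_{k'}$ in the statement. Uniform noise perturbs the ranking, so the extra misorderings must be charged somewhere. I would bound the number of individuals whose inclusion flips by $ks$, plus the bin count near the threshold, as in the proof of Corollary \ref{cor:ila-direct-adv}. With $s=1/(k\pi\sqrt\psi)$, this is again $O(\log^{3/2}P/\sqrt\psi)$.

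I also need $k''\bar\eta_{k''}\le k'\bar\eta_{k'}$. This holds because $\bar\eta$ averages the smallest $\eta$ values, so $j\bar\eta_j$ is nondecreasing in $j$. Summing the three contributions gives the stated bound.
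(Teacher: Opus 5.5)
Your proposal is correct and follows essentially the paper's route: the paper's proof just combines Lemma \ref{lem:ila-sco} (Lemma \ref{lem:ila-helper} plus Cauchy--Schwarz/Jensen on the estimation error) with the utility analysis of \autoref{alg:ila-direct} under the parameters of Corollary \ref{cor:ila-direct-adv}, and gets privacy from parallel composition over the disjoint sample and remainder, exactly as you do. Comparing $\hat\eta_i$ with $\eta_i$ rather than with the realized labels is, if anything, slightly cleaner, but note one small correction: choosing $\beta$ polynomially small does \emph{not} absorb the $P^{3/4}$-type concentration term into $O(\log^{3/2}P/\sqrt{\psi})$, so work in expectation instead (regret is defined as one), where Jensen gives $\mathbb{E}\sum_i|\hat\eta_i-\eta_i|\le P\sqrt{E}\le P\sqrt{\alpha}$ with no extra term.
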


\begin{corollary}\label{cor:ila-sco-budget}
    When $\widetilde\ILA$ is instantiated with Algorithm 2 of \cite{feldman2020private}, its asymptotic normalized regret grows like:
    \begin{equation}
        n\lambda + k' \bar \eta_{k'} + P\sqrt{\sigma^2 + E^* + 10LD \left( \frac{1}{\sqrt{n}} + \frac{\sqrt{p}}{ n\sqrt{2\psi}} \right)} + O\left( \frac{\log^{3/2} P}{\sqrt{\psi}} \right)
    \end{equation}
    Choosing $n = \max\left( \left(\frac{P}{\lambda}\sqrt{10LD(2\psi)^{-1/2}\sqrt{p}}\right)^{2/3}, \left(\frac{P}{\lambda}\sqrt{10LD}\right)^{4/5} \right)$, we can guarantee a worst-case regret bound of:
\begin{equation}
    k' \bar \eta_{k'} +P\sqrt{\mathcal{L}(\theta^*)} + 2\lambda^{1/3} \left(\frac{P\sqrt{10LD\sqrt{p}}}{(2\psi)^{1/4}} \right)^{2/3} + 2\lambda^{1/5} \left( P\sqrt{10LD} \right)^{4/5}
\end{equation}
    If we additionally have access to a lower bound on the risk of the optimal classifier $\mathcal{L}(\theta^*) \geq L^\downarrow$, then we can instead choose $n = \max\left( \sqrt{\frac{5PLD\sqrt{p}}{\lambda  \sqrt{2\psi L^\downarrow}}}, \left(\frac{5PLD}{\lambda\sqrt{ L^\downarrow}}\right)^{2/3} \right)$ for an instance-specific regret bound of:
\begin{equation}
    k' \bar \eta_{k'} +P\sqrt{\mathcal{L}(\theta^*)} + 2\sqrt{\frac{5PLD\lambda\sqrt{p}}{\sqrt{2\psi L^\downarrow}}} + 2\left( \frac{5PLD\lambda}{\sqrt{L^\downarrow}} \right)^{2/3}
\end{equation}
\end{corollary}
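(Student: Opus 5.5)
The plan is to obtain Corollary \ref{cor:ila-sco-budget} directly from Theorem \ref{thm:ila-sco-private}, by inserting the population-risk guarantee of Algorithm 2 of \cite{feldman2020private} (Theorem \ref{thm:feldman}) and then optimizing over $n$.

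\textbf{Step 1: substitute the risk bound.} For an $L$-Lipschitz convex loss over a domain of diameter $D$ in dimension $p$, Theorem \ref{thm:feldman} gives an excess population risk of order $LD\big(\tfrac{1}{\sqrt n} + \tfrac{\sqrt p}{n\sqrt{2\psi}}\big)$ over $\theta^*$. I will take the constant as $10$ after tracking it through the zCDP-to-noise conversion. Definition \ref{def:excess-risk} gives $\mathcal{L}(\theta^*) = \sigma^2 + E^*$. Hence
\[
\alpha \le \sigma^2 + E^* + 10LD\Big(\tfrac{1}{\sqrt n} + \tfrac{\sqrt p}{n\sqrt{2\psi}}\Big).
\]
Plugging this into the $P\sqrt{\alpha}$ term of Theorem \ref{thm:ila-sco-private} yields the first display. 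The bound of Theorem \ref{thm:feldman} holds in expectation, while Theorem \ref{thm:ila-sco-private} treats $\alpha$ as the expected loss of the learned model. Jensen's inequality ($\mathbb{E}\sqrt{\cdot}\le\sqrt{\mathbb{E}\,\cdot}$) therefore lets us pass the expectation inside the square root.

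\textbf{Step 2: worst-case choice of $n$.} Use subadditivity, $\sqrt{a+b+c}\le\sqrt a+\sqrt b+\sqrt c$, to split off $P\sqrt{\mathcal{L}(\theta^*)}$. The remaining cost is at most
\[
n\lambda + \frac{P\sqrt{10LD}}{n^{1/4}} + \frac{P\sqrt{10LD\sqrt p}}{(2\psi)^{1/4}\, n^{1/2}}.
\]
Each penalty term $A n^{-a}$ is decreasing in $n$. The plan is therefore to balance each one separately against $n\lambda$ and take $n$ to be the maximum of the two balancing points.
\begin{itemize}
\item For the $\psi$-term, setting $n\lambda = A_2 n^{-1/2}$ gives $n_2 = (A_2/\lambda)^{2/3}$. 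At $n_2$ both quantities equal $\lambda^{1/3}A_2^{2/3}$.
\item For the sampling term, setting $n\lambda = A_1 n^{-1/4}$ gives $n_1 = (A_1/\lambda)^{4/5}$. At $n_1$ both equal $\lambda^{1/5}A_1^{4/5}$.
\end{itemize}
With $n=\max(n_1,n_2)$, each penalty term is no larger than its value at its own balancing point. The linear cost satisfies $n\lambda \le n_1\lambda + n_2\lambda$. Summing gives the factor-of-two coefficients in the stated bound.

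\textbf{Step 3: instance-specific bound.} Instead of subadditivity, use concavity: $\sqrt{\mathcal{L}^*+x}\le\sqrt{\mathcal{L}^*}+\frac{x}{2\sqrt{\mathcal{L}^*}}\le \sqrt{\mathcal{L}^*}+\frac{x}{2\sqrt{L^\downarrow}}$. The penalty terms become $\frac{5PLD}{\sqrt{L^\downarrow}}n^{-1/2}$ and $\frac{5PLD\sqrt p}{\sqrt{2\psi L^\downarrow}}n^{-1}$. Balancing each against $n\lambda$ as in Step 2 gives balancing points $(A/\lambda)^{2/3}$ and $(A/\lambda)^{1/2}$. The corresponding values are $\lambda^{1/3}A^{2/3}$ and $\sqrt{A\lambda}$, which match the stated $n$ and regret bound after doubling.

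\textbf{Main obstacle.} I expect the delicate step to be the constant bookkeeping in Step 1. This means translating Feldman et al.'s $(\epsilon,\delta)$ or Rényi guarantee into $\psi$-zCDP with the $\sqrt{2\psi}$ normalization and the constant $10$. It also means confirming that the squared loss is Lipschitz with constant $L$ on the relevant bounded domain. The remaining steps are elementary calculus.
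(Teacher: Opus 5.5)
Your Steps 1 and 2 follow the paper's argument: substitute Theorem~\ref{thm:feldman} into Theorem~\ref{thm:ila-sco-private}, split off $P\sqrt{\mathcal{L}(\theta^*)}$, balance each penalty term against $n\lambda$, and take $n$ to be the larger balancing point. Your extra split $\sqrt{b+c}\le\sqrt b+\sqrt c$, together with $n\lambda\le n_1\lambda+n_2\lambda$, makes rigorous what the paper does by ``supposing'' one term dominates. The constant-tracking you flag as the main obstacle is not needed. Theorem~\ref{thm:feldman} is already quoted in $\psi$-zCDP form with the constant $10$, and the paper simply cites it.

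The problem is the last sentence of Step 3. Your computation is correct: with $A_4 = 5PLD/\sqrt{L^\downarrow}$ and $n=(A_4/\lambda)^{2/3}$, you get $\lambda n = A_4 n^{-1/2}=\lambda^{1/3}A_4^{2/3}$. So the term you actually prove is $2\lambda^{1/3}(5PLD/\sqrt{L^\downarrow})^{2/3}$.

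The statement instead has $2(5PLD\lambda/\sqrt{L^\downarrow})^{2/3} = 2\lambda^{2/3}A_4^{2/3}$. These do not ``match after doubling''; they coincide only at $\lambda=1$. For $\lambda<1$ the stated term is strictly smaller. Now suppose the second argument of the max is the active one, for instance when $A_3 = 5PLD\sqrt p/\sqrt{2\psi L^\downarrow}$ is small. Then the upper-bound expression $\lambda n + A_4 n^{-1/2} + A_3 n^{-1}$, evaluated at the prescribed $n$, exceeds the stated bound. So this route cannot yield the statement as written.

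The paper's own proof makes the same slip: it evaluates the objective at $n=(A_4/\lambda)^{2/3}$ as $2(A_4\lambda)^{2/3}$. Balancing $\lambda n$ against $An^{-a}$ gives $\lambda^{a/(1+a)}A^{1/(1+a)}$, and the other three terms of the corollary all obey this rule, which confirms the exponent should be $1/3$. You should say this explicitly rather than assert a match. Your argument establishes the instance-specific bound with $2\lambda^{1/3}(5PLD/\sqrt{L^\downarrow})^{2/3}$ as its last term, which implies the displayed version only when $\lambda\ge 1$.
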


\subsection{ULA}

In the previous settings we've considered, unit membership could be thought of as a particularly simple feature vector corresponding the standard basis of $\mathbb{R}^M$. We will generalize this idea by assuming we are given a partition $\mathscr{P}$ over the feature space $\mathcal{X} \subseteq \mathbb{R}^p$, which induces a partition over individuals into $|\mathscr{P}|$ units. This modeling decision is motivated by the sorts of bureaucratically legible and overlapping categories that are frequently used to target aid in existing systems~\cite{johnson_what_2022}, such as ``disabled veterans.'' We treat this choice of categories as exogenous and do not consider the problem of trying to \textit{learn} an  efficient partition.

One minor complication is that up until this point, we have assumed that the unit membership of each individual is publicly known,  
which may be inappropriate if units correspond to sensitive traits like poverty or disability rather than e.g. geographic regions. We therefore introduce \autoref{alg:ula-private-membership}, a Joint DP variant \autoref{alg:ula-direct} that can accommodate private unit membership at essentially no cost, provided that no unit is too small. 

\begin{algorithm}
\KwIn{
\begin{itemize}
    \item Vector of welfare scores $w =[w_{1}, \ldots, w_{P}] \in [0,1]^P$
    \item Partition of $[P]$ into units $\mathcal{U} = \lbrace U_1, \ldots, U_M \rbrace$ with minimum unit size $N$
    \item Privacy parameters $\psi_1 + \psi_2 =\psi$; aid budget $k$
\end{itemize}
}
\vspace{-2mm}

Sample $\widetilde \rho_j \sim \mathcal{N}(\rho_j, \frac{1}{N^2\psi_1})$; \\
For each individual $i \in [P]$ with corresponding unit $j$, set $\tilde w_i = \widetilde \rho_j$ \\
\textbf{Return} Output of \autoref{alg:ila-direct} with input $\tilde w, \psi_2$ and $s,\theta$ set as in Corollary \ref{cor:ila-direct-adv}

\caption{Joint $\psi$-zCDP ULA with Private Unit Membership}
\label{alg:ula-private-membership}
\end{algorithm}

\begin{lemma}\label{lem:private-membership}
    Choosing $\frac{\psi_2}{\psi_1} = \left( \frac{3\log^{3/2}P}{\pi} \cdot \frac{N}{C\sqrt{2}} \right)^{2/3}$, the expected normalized regret of \autoref{alg:ula-private-membership} is at most:
    \begin{equation}
    k\bar\rho_K + \frac{1}{\sqrt{\psi}} \left[ \left(\frac{3}{\pi} \right)^{2/3} \log P + \left( \frac{C\sqrt{2}}{N} \right)^{2/3} \right]^{3/2} = k\bar\rho_K + O\left( \frac{C}{N\sqrt{\psi}} \right)
\end{equation}
With $C$ defined as in Corollary \ref{cor:ula-sample-budget}, which matches the rate achieved by \autoref{alg:ula-direct} in the public unit membership case up to constant factors.
\end{lemma}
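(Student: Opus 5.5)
The proof has two stages: the Gaussian release of $\widetilde\rho$ with budget $\psi_1$, then the ILA thresholding of Algorithm~\ref{alg:ila-direct} with budget $\psi_2$. I would bound the error of each stage separately and then optimize the split of $\psi = \psi_1 + \psi_2$.

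\textbf{Privacy.} The replace-one sensitivity of $\rho$ is at most $1/N$, since every unit has size at least $N$ and a replacement moves at most two coordinates by $1/N$ each. Variance $1/(N^2\psi_1)$ therefore gives $\psi_1$-zCDP by Lemma~\ref{lem:gaussian-mechanism} (up to the constant from the two-coordinate change). Each $\tilde w_i$ is a function of the published $\widetilde\rho$ and of individual $i$'s own unit. Algorithm~\ref{alg:ila-direct} run on $\tilde w$ releases $(\widetilde\rho, \ell_{j^*})$, which is $\psi$-zCDP by composition. Each person's output is a function of this release and their own data, so the Billboard Lemma~\ref{lem:billboard} gives Joint $\psi$-zCDP.

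\textbf{Regret.} I would split the regret into two terms.
\begin{itemize}
\item The first term is the regret of the ideal unit-level allocation ranked by $\widetilde\rho$ and using exactly $k$ slots. By Lemma~\ref{lem:ula-helper} and the Gaussian norm bounds behind Theorem~\ref{thm:ula-direct} (taken in expectation), it is at most $k\bar\rho_K + C\sqrt{2}/(N\sqrt{\psi_1})$, where $C$ is the minimum from Corollary~\ref{cor:ula-sample-budget}. The three cases of $C$ correspond to the choices $(p,q)$.
\item The second term is the extra loss from running Algorithm~\ref{alg:ila-direct} on $\tilde w$ instead of the exact ranking. The algorithm treats everyone with perturbed score $\hat{\tilde w}_i \le \ell_{j^*}$. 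By Lemma~\ref{lem:ila-direct-budget} it stays under budget, and by Theorem~\ref{thm:ila-direct-helper}/Corollary~\ref{cor:ila-direct-adv} it treats at least $k - (1+o(1))\frac{3\log^{3/2}P}{\pi\sqrt{\psi_2}}$ people. Because of the uniform smoothing, the treated set is a prefix in perturbed order rather than exactly unit-by-unit. For units far below the threshold this changes nothing. Units near the threshold are reshuffled only among individuals whose $\widetilde\rho$ values lie within $s$ of $\ell_{j^*}$. The order-$ks$ term of Theorem~\ref{thm:ila-direct-helper} accounts for their mass.
\end{itemize}
Each slot lost costs at most one unit of normalized regret. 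The second term is therefore $(1+o(1))\frac{3\log^{3/2}P}{\pi\sqrt{\psi_2}}$.

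\textbf{Main obstacle.} The hardest step is the reshuffling argument. Algorithm~\ref{alg:ila-direct} treats a prefix of smoothed scores, so the treated individuals may mix partial units near the cutoff instead of filling the lowest units in order. I would argue that any allocation treating $k'$ people, all of whose $\widetilde\rho$ values lie below the threshold up to the perturbation width $s$, has value within $k - k' + O(ks)$ of the $\widetilde\rho$-ranked unit allocation. The choice of $s$ in Corollary~\ref{cor:ila-direct-adv} makes $ks$ of the same order as the privacy term.

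\textbf{Optimizing the split.} The total extra regret has the form $a/\sqrt{\psi_1} + b/\sqrt{\psi_2}$ with
\[
a = C\sqrt{2}/N, \qquad b = \tfrac{3}{\pi}\log^{3/2}P,
\]
subject to $\psi_1 + \psi_2 = \psi$. By the Lagrangian condition $a\psi_1^{-3/2} = b\psi_2^{-3/2}$, the minimizer is $\psi_2/\psi_1 = (b/a)^{2/3}$, which is the ratio stated in the lemma. Substituting gives the minimum $(a^{2/3} + b^{2/3})^{3/2}/\sqrt{\psi}$, where
\[
a^{2/3} = (C\sqrt{2}/N)^{2/3}, \qquad b^{2/3} = (3/\pi)^{2/3}\log P .
\]
This is exactly the displayed bound.

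For the big-$O$ form, use $(x+y)^{3/2} \le \sqrt{2}\,(x^{3/2} + y^{3/2})$. This yields $O\bigl(C/(N\sqrt{\psi}) + \log^{3/2}P/\sqrt{\psi}\bigr)$. The logarithmic term is dominated by $C/N$ whenever $C/N \gtrsim \log^{3/2}P$. This holds in the regime of interest, since $C$ is of order $\min(k,P-k)$ or of order $P$.
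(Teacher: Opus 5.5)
\textbf{Same route as the paper, but one step fails as written.} Your overall structure matches the paper's proof. You use the same three-way split: the baseline $k\bar\rho_K$, the $\psi_1$-estimation term $C\sqrt{2}/(N\sqrt{\psi_1})$ via Lemma~\ref{lem:ula-helper}, and the $(1+o(1))\frac{3\log^{3/2}P}{\pi\sqrt{\psi_2}}$ cost of Algorithm~\ref{alg:ila-direct}. You then optimize over $\psi_1+\psi_2=\psi$ exactly as the paper does, and that computation is correct. The paper only asserts the decomposition. Where you go further and try to justify it, the reshuffling step does not work.

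The $ks$ term in Theorem~\ref{thm:ila-direct-helper} relies on treatment effects being $1$-Lipschitz in the score fed to Algorithm~\ref{alg:ila-direct}. Here that score is the noisy profile $\widetilde\rho_{j(i)}$, not welfare. Two units whose noisy profiles lie within $2s$ of each other can have true profiles far apart. So the true-value cost of mixing their members near the cutoff depends on that true gap, not on $s$. Your claim that the realized allocation has value within $k-k'+O(ks)$ of the $\widetilde\rho$-ranked unit allocation therefore fails for such noise realizations. Nothing in your plan controls how likely those realizations are.

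\textbf{How to repair it.} Do the exchange argument in the surrogate objective and push it through the H\"older chain, rather than adding it afterwards in true value.
\begin{itemize}
\item Every treated $i$ has $\widetilde\rho_{j(i)}\le\ell_{j^*}+s$, and every untreated $i$ has $\widetilde\rho_{j(i)}>\ell_{j^*}-s$.
\item Hence the realized count vector $\tilde x$, with $\sum_j\tilde x_j=k'$, satisfies $\langle\tilde x,\widetilde\rho\rangle\le\langle x,\widetilde\rho\rangle+2sk'$ for every feasible $k'$-allocation $x$.
\item Run the chain from the proof of Lemma~\ref{lem:ula-helper} with $\tilde x$ in place of $\tilde x^*$, and with $x^*$ now the $\rho$-optimal $k'$-allocation.
\item The value lower bound still applies: the i.i.d.\ smoothing makes selection within each unit exchangeable, so expected value is at least $\sum_j\mathbb{E}[\tilde x_j](1-\rho_j)$.
\item The chain gives $k\bar\rho_K+(k-k')+2sk+\lVert x^*-\tilde x\rVert_p\lVert\widetilde\rho-\rho\rVert_q$.
\item The norm bounds behind $C$ use only feasibility of $\tilde x$, so this recovers the stated bound.
\end{itemize}

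\textbf{The final $O(\cdot)$ step.} It needs $C/N\gtrsim\log^{3/2}P$. When $C=\Theta(P)$ this means roughly $M\gtrsim\log^{3/2}P$, which is not automatic. With a few large units, the $\log^{3/2}P/\sqrt{\psi}$ term dominates. The paper's unconditional $O(C/(N\sqrt{\psi}))$ has the same caveat.
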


Because the irreducible error term from Definition \ref{def:excess-risk} is independent and zero-mean, ULA is able to effectively ignore it when taking averages across many individuals in a single unit, which is reflected in the following lemma:

\begin{lemma}\label{lem:ula-sco-baseline}
    Let $f$ be a binary classifier with expected $\ell_2^2$ loss $\alpha$, and let $\widehat \ULA$ denote the algorithm that allocates aid to units in ascending order based on $\hat\rho_j = \frac{1}{N_j}\sum_{i \in U_j} f(\hat x_i)$. Then in expectation, we have:
\begin{align}
   \overline\regret(\widehat \ULA, w; k) &\leq k\bar\rho_K + \sqrt{P |\mathscr{P}|\sigma^2} + P\sqrt{\alpha - \sigma^2}
\end{align}
\end{lemma}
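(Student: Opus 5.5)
We decompose the regret through \Cref{lem:ula-helper}, which is available here because $\widehat\ULA$ ranks units by the estimates $\hat\rho_j$ exactly as private ULA ranks them by $\widetilde\rho_j$. Using the $p=\infty$, $q=1$ case (with unit sizes $N_j$ in place of $N$), this gives
\[
\overline\regret(\widehat\ULA, w;k) \le k\bar\rho_K + \sum_j N_j\,|\hat\rho_j-\rho_j| .
\]
Here $\rho_j = \frac{1}{N_j}\sum_{i\in U_j} y_i$ is the realized unit profile. The task therefore reduces to bounding $\mathbb{E}\sum_j N_j|\hat\rho_j-\rho_j| = \mathbb{E}\sum_j \bigl|\sum_{i\in U_j}(f(x_i)-y_i)\bigr|$.

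For each $i$, split the error as
\[
f(x_i)-y_i = \bigl(f(x_i)-\eta(x_i)\bigr) + \bigl(\eta(x_i)-y_i\bigr),
\]
and apply the triangle inequality unit by unit. This produces two sums.

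\textbf{Approximation term.} This is $\sum_j \bigl|\sum_{i\in U_j}(f(x_i)-\eta(x_i))\bigr| \le \sum_{i=1}^P |f(x_i)-\eta(x_i)|$. By Jensen, its expectation is at most $P\sqrt{\mathbb{E}[(f(x)-\eta(x))^2]} = P\sqrt{E}$. Definition \ref{def:excess-risk} gives $E=\alpha-\sigma^2$, which yields the term $P\sqrt{\alpha-\sigma^2}$.

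\textbf{Noise term.} This is $\sum_j \bigl|\sum_{i\in U_j}\xi_i\bigr|$ with $\xi_i=\eta(x_i)-y_i$.

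1. Conditional on the features, the $\xi_i$ are independent and mean zero. Conditioning on $x$ keeps this valid even though unit membership is determined by $x$ through $\mathscr{P}$.
2. Jensen then gives $\mathbb{E}\bigl|\sum_{i\in U_j}\xi_i\bigr| \le \sqrt{\sum_{i\in U_j}\mathrm{Var}(y_i\mid x_i)}$.
3. Summing over units and applying Cauchy--Schwarz across the $|\mathscr{P}|$ units bounds this by $\sqrt{|\mathscr{P}|\sum_i \mathrm{Var}(y_i\mid x_i)}$.
4. A final Jensen over the randomness of $x$ gives $\sqrt{|\mathscr{P}|\,P\sigma^2}$.

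Empty units contribute zero, so this step is unaffected by them.

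The main obstacle is justifying that \Cref{lem:ula-helper} applies with realized random $\rho_j$ and random unit sizes $N_j$. Its proof is deterministic given $w$, with only cosmetic changes for unequal sizes. The remaining care is ordering expectations correctly: condition on the features, apply the per-unit bound, and then integrate. The variance step requires only conditional independence of the labels given the features, which the i.i.d. assumption on $\mathcal{D}$ supplies.
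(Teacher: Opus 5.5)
Your proposal is correct and follows the paper's proof essentially step for step. You use the componentwise bound $|x_j^*-\tilde x_j^*|\le N_j$ from the proof of Lemma~\ref{lem:ula-helper}, split each $y_i-f(x_i)$ into label noise $\eta(x_i)-y_i$ plus approximation error $f(x_i)-\eta(x_i)$, apply the triangle inequality, and use Jensen to get $P\sqrt{\alpha-\sigma^2}$; your conditioning-on-features, Jensen, and Cauchy--Schwarz derivation of $\sqrt{P|\mathscr{P}|\sigma^2}$ fills in a step the paper only asserts.
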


\begin{theorem}\label{thm:ula-sco-private}
    Let $\widetilde\ULA$ be the algorithm that samples $n$ individuals from the population uniformly at random, fits a model $\hat\theta$ with expected $\ell_2^2$ error $\alpha$ subject to $\psi$-zCDP, and then runs \autoref{alg:ula-private-membership} on the estimated welfare scores of the remaining population with privacy parameter $\psi$ and partition $\mathscr{P}$. Then $\widetilde\ULA$ satisfies Joint $\psi$-zCDP in the replace-one adjacency definition with private unit membership, and:
    \begin{equation}
        \overline\regret(\widetilde\ULA, w; k) \leq k\bar\rho_K + P\sqrt{\alpha - \sigma^2} + \sqrt{P|\mathscr{P}|\sigma^2} + O\left(\frac{C}{N\sqrt{\psi}}\right)
    \end{equation}
\end{theorem}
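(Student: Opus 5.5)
The proof has two parts: privacy and regret. For privacy, the plan is to combine parallel composition with the Billboard Lemma. The learning stage touches only the $n$ sampled individuals and, as assumed in the statement, outputs $\hat\theta$ under $\psi$-zCDP. The allocation stage runs \autoref{alg:ula-private-membership} only on the remaining $P-n$ individuals. Its inputs are the scores $f_{\hat\theta}(x_i)$, together with unit membership, which is determined by $x_i$ and the public partition $\mathscr{P}$.

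Within \autoref{alg:ula-private-membership}, the Gaussian release of $\widetilde\rho$ is $\psi_1$-zCDP by Lemma \ref{lem:gaussian-mechanism}. This uses the fact that replacing one person changes at most two coordinates of $\rho$, each by at most $1/N$; the factor-of-two variance in line 1, relative to \autoref{alg:ula-direct}, is what absorbs this. The private partial sums inside \autoref{alg:ila-direct} are $\psi_2$-zCDP. Sequential composition then gives $\psi_1+\psi_2=\psi$ for this stage. Each person's output bit is a function of the published statistics and of their own unit and noisy score, so the Billboard Lemma (Lemma \ref{lem:billboard}) yields Joint $\psi$-zCDP.

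A replace-one change affects exactly one of the two disjoint groups. Conditioned on $\hat\theta$, the allocation mechanism is private in the second group. Postprocessing of $\hat\theta$ covers a change in the first group, because every other person's output is then a function of $\hat\theta$ together with data not involving the changed individual. This gives Joint $\psi$-zCDP overall.

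For regret, the plan is to decompose into the same terms as in the non-private bound:
\[
\overline\regret(\widetilde\ULA) \le \overline\regret(\widehat\ULA) + \big(\text{privacy error of Alg.~\ref{alg:ula-private-membership}}\big) + \big(\text{sample loss}\big).
\]
The first term is handled by Lemma \ref{lem:ula-sco-baseline} applied to the held-out population with model $\hat\theta$, giving $k\bar\rho_K + \sqrt{P|\mathscr P|\sigma^2} + P\sqrt{\alpha-\sigma^2}$. This relies on the held-out individuals being i.i.d. and independent of $\hat\theta$, so the loss on them is $\alpha$ in expectation.

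The second term follows by treating $\hat\rho_j$ as the unit profile fed into \autoref{alg:ula-private-membership} and invoking Lemma \ref{lem:private-membership}. That lemma bounds the extra regret over exact ranking by $\hat\rho$ as $O(C/(N\sqrt\psi))$. This step requires using Lemma \ref{lem:ula-helper} with $\hat\rho$ in place of $\rho$, and noting that $\hat\rho_j\in[0,1]$ lie in the range required by the analysis of \autoref{alg:ila-direct}.

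The third term comes from excluding the $n$ sampled people and from the budget slack. Since $n=o(P)$, it is absorbed into the lower-order terms, or equivalently the comparison is made against an optimum over a population of $P-n$. This is the step that needs care.

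\textbf{Main obstacle.} The difficulty is combining the Lemma \ref{lem:ula-sco-baseline} ranking error with the privacy noise without double counting. Lemma \ref{lem:ula-helper} bounds regret relative to the allocation that is optimal for the true $\rho$, with an error term in $\widetilde\rho-\rho$. The natural move is the triangle inequality
\[
\|\widetilde\rho-\rho\|\le\|\widetilde\rho-\hat\rho\|+\|\hat\rho-\rho\|.
\]
With $p=\infty$, $q=1$ in Lemma \ref{lem:ula-helper}, the second piece reproduces the model-error terms of Lemma \ref{lem:ula-sco-baseline}, and the first piece is exactly what Lemma \ref{lem:private-membership} controls. I would therefore redo the Lemma \ref{lem:ula-sco-baseline} argument, which I expect already goes through $N\sum_j|\hat\rho_j-\rho_j|$, with this split, and add the Gaussian and threshold errors separately.
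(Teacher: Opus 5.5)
Your proposal is correct and follows the paper's route: privacy comes from parallel composition over the sampled/held-out split, composition of the $\psi_1$ and $\psi_2$ stages, and the Billboard Lemma, and the regret bound comes from combining Lemma \ref{lem:ula-sco-baseline} with Lemma \ref{lem:private-membership}. The paper states that combination in one line; your split $\widetilde\rho-\rho=(\widetilde\rho-\hat\rho)+(\hat\rho-\rho)$ inside the H\"older step of Lemma \ref{lem:ula-helper} is what makes it precise without double-counting the model error, and the paper handles the exclusion of the $n$ sampled individuals with the same ``asymptotically negligible'' argument you flag as needing care.
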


Incorporating the cost of sampling and the utility guarantees from Theorem \ref{thm:feldman} yields the following corollary:

\begin{corollary}\label{cor:ula-sco-budget}
    When $\widetilde\ULA$ is instantiated with Algorithm 2 of \cite{feldman2020private}, its asymptotic normalized regret grows like:
    \begin{equation}
        k\bar\rho_K + (1-\bar\rho_K)n\lambda + \sqrt{P|\mathscr{P}|\sigma^2} + P\sqrt{E^* + 10LD \left( \frac{1}{\sqrt{n}} + \frac{\sqrt{p}}{n\sqrt{2\psi}} \right)} + O\left(\frac{C}{N\sqrt{\psi}}\right)
    \end{equation}
Choosing $n = \max\left( \left(\frac{P}{\lambda}\sqrt{10LD(2\psi)^{-1/2}\sqrt{p}}\right)^{2/3}, \left(\frac{P}{\lambda}\sqrt{10LD}\right)^{4/5} \right)$, we can guarantee a worst-case regret bound of:
\begin{align}
    &\quad k\bar\rho_K + P\sqrt{E^*} 
    + \sqrt{P|\mathscr{P}|\sigma^2} + O\left(\frac{C}{N\sqrt{\psi}}\right) \\
    &+ (2-\bar\rho_K)\left[ \left(\frac{P\sqrt{10LD\lambda\sqrt{p}}}{(2\psi)^{1/4}} \right)^{2/3}   
    + \lambda^{1/5} \left( P\sqrt{10LD} \right)^{4/5} \right]
\end{align}
    If we additionally have access to a lower bound on the \underline{excess} risk of the optimal classifier $E^* \geq E^\downarrow > 0$, then we can instead choose $n = \max\left( \sqrt{\frac{5PLD\sqrt{p}}{\lambda \sqrt{2\psi E^\downarrow}}}, \left(\frac{5PLD}{\lambda\sqrt{ E^\downarrow}}\right)^{2/3} \right)$ for an instance-specific regret bound of:
\begin{equation}
    k\bar\rho_K + P\sqrt{E^*} + \sqrt{P|\mathscr{P}|\sigma^2} + (2-\bar\rho_K) \left[ \sqrt{\frac{5PLD\lambda\sqrt{p}}{\sqrt{2\psi E^\downarrow}}} + \left( \frac{5PLD\lambda}{\sqrt{E^\downarrow}} \right)^{2/3} \right] + O\left(\frac{C}{N\sqrt{\psi}}\right)
\end{equation}

\end{corollary}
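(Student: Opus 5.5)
The plan is to start from the regret decomposition of Theorem \ref{thm:ula-sco-private} and add the cost of sampling. A budget of $k$ forces the allocation to cover only $k' = k-\lambda n$ people. The units that lose coverage are the marginal ones, whose expected normalized contribution is at most $1-\bar\rho_K$ per person. This gives the additive term $(1-\bar\rho_K)n\lambda$, by the same accounting used in Theorem \ref{thm:ula-sample}.

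Next I would substitute the excess-risk decomposition of Definition \ref{def:excess-risk}, namely $\alpha-\sigma^2 = E$. I would then bound $E$ by invoking Theorem \ref{thm:feldman} for Algorithm 2 of \cite{feldman2020private} under $\psi$-zCDP:
\[
\mathcal{L}(\hat\theta)-\mathcal{L}(\theta^*) \le 10LD\Bigl(\tfrac{1}{\sqrt n}+\tfrac{\sqrt p}{n\sqrt{2\psi}}\Bigr).
\]
Since the loss is the squared (proper) loss, the Bayes term cancels, so $E \le E^* + 10LD(\cdots)$. Plugging this into the $P\sqrt{\alpha-\sigma^2}$ term yields the first display. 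One point needs care: the expectation over the random model must be moved inside the square root. I would handle this with Jensen's inequality, using the concavity of $\sqrt{\cdot}$.

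For the worst-case choice of $n$, I would use $\sqrt{a+b+c}\le\sqrt a+\sqrt b+\sqrt c$ to split the statistical term into three pieces: $P\sqrt{E^*}$, $P\sqrt{10LD}\,n^{-1/4}$, and $P\sqrt{10LD\sqrt p}(2\psi)^{-1/4}n^{-1/2}$. Each of the last two terms trades off against $(1-\bar\rho_K)n\lambda\le n\lambda$.

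\begin{itemize}
\item Balancing $n\lambda$ against $P\sqrt{10LD}\,n^{-1/4}$ gives $n=(P\sqrt{10LD}/\lambda)^{4/5}$, at which both terms equal $\lambda^{1/5}(P\sqrt{10LD})^{4/5}$.
\item Balancing $n\lambda$ against the $n^{-1/2}$ term gives the other stated $n$, at which both terms equal $\lambda^{1/3}(\cdot)^{2/3}$.
\end{itemize}

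Taking $n$ as the maximum of the two choices makes each decreasing term no larger than its balanced value. The linear cost $(1-\bar\rho_K)n\lambda$ is at most the sum of the two balanced values times $(1-\bar\rho_K)$. Adding the two pieces gives the factor $(2-\bar\rho_K)$. I would double-check that the exponents are consistent with the displayed $\bigl(P\sqrt{10LD\lambda\sqrt p}/(2\psi)^{1/4}\bigr)^{2/3}$ form.

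For the instance-specific bound, I would instead linearize the square root around $E^*$:
\[
\sqrt{E^*+\Delta}\le\sqrt{E^*}+\frac{\Delta}{2\sqrt{E^*}}\le\sqrt{E^*}+\frac{\Delta}{2\sqrt{E^\downarrow}}.
\]
This turns the terms into $\frac{5PLD}{\sqrt{E^\downarrow}}n^{-1/2}$ and $\frac{5PLD\sqrt p}{\sqrt{2\psi E^\downarrow}}n^{-1}$. Balancing each against $n\lambda$ gives exponents $2/3$ and $1/2$, which match the stated $n$ and the bound, again with the factor $(2-\bar\rho_K)$.

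The main obstacle is the first step: justifying that the sampling cost enters with coefficient $(1-\bar\rho_K)$ rather than $1$. This requires showing that removing $\lambda n$ recipients from the tail of the ULA ranking costs at most the average treatment effect of the chosen units. I would argue that the last units in the ranking have $\rho$ at least their average, mirroring the argument behind Theorem \ref{thm:ula-sample}. A second subtlety is that the $n$ sampled individuals are excluded from the allocation; since $n=o(P)$, this perturbs $\bar\rho_K$ only negligibly.
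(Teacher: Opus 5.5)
Your overall route matches the paper's. You start from Theorem~\ref{thm:ula-sco-private}, control $\alpha-\sigma^2=E(\hat\theta)$ through Feldman et al.\ (your Jensen step is correct and only implicit in the paper), and then either split with $\sqrt{a+b+c}\le\sqrt a+\sqrt b+\sqrt c$ and balance against $n\lambda$, or linearize around $E^*$ using $E^\downarrow$. That is exactly the template of Corollary~\ref{cor:ila-sco-budget} that the paper's proof invokes. Your worst-case balancing is correct, including where the factor $(2-\bar\rho_K)$ comes from. Two points need repair.

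\textbf{The coefficient $1-\bar\rho_K$.} The justification you sketch does not go through as stated, for two reasons.
\begin{itemize}
\item In the \emph{estimated} ranking, the last-filled units need not have true $\rho_j\ge\bar\rho_K$. A misranked unit with small $\rho_j$ can sit at the tail, and removing its members then costs up to $1-\rho_j\approx 1$ per person.
\item The true per-person value $T_j/\delta_w$ of a removed unit can be near $1$ even when $\rho_j$ is large. So ``regret of the full allocation plus value removed'' only gives coefficient $1$; the accounting has to be done on the linear proxy, i.e.\ normalized regret $\le k-\sum_j x'_j(1-\rho_j)$.
\end{itemize}
The paper gets the factor in the proof of Corollary~\ref{cor:ula-sample-budget}, which this corollary inherits. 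It thins every treated unit uniformly by the factor $(k-\lambda n)/k$, so that $\tilde V_\lambda=(1-\lambda n/k)\tilde V$ exactly. This is precisely the ``cost equals the average value of the chosen people'' property you want, and tail removal in the estimated order does not have it. Then $V^*-\tilde V_\lambda$ is decreasing in $\tilde V$. Let $R$ be your full-budget normalized bound; inserting $\tilde V\ge V^*-\delta_w R$, $V^*\le\delta_w k$ and $R\ge k\bar\rho_K$ gives $R+(1-\bar\rho_K)\lambda n$ regardless of misranking.

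If you prefer truncation, treat the truncated allocation as greedy ULA with budget $k'=k-\lambda n$ on the same estimates. Then rerun the LP comparison of Lemma~\ref{lem:ula-helper} against the optimal $k'$-allocation $x'^*$. The $k'$ people with smallest $\rho$ have average $\rho$ at most $\bar\rho_K$, so $\langle x'^*,\rho\rangle\le k'\bar\rho_K$. The normalized regret is then at most $(k-k')+k'\bar\rho_K+\mathrm{err}=k\bar\rho_K+(1-\bar\rho_K)\lambda n+\mathrm{err}$. Here $\mathrm{err}$ is the same budget-independent term $\sum_j N_j|\tilde\rho_j-\rho_j|$ you already carry.

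\textbf{The instance-specific constant.} Carry out the arithmetic rather than asserting a match. With $A=5PLD/\sqrt{E^\downarrow}$ and $n=(A/\lambda)^{2/3}$, both $n\lambda$ and $An^{-1/2}$ equal $\lambda^{1/3}A^{2/3}$, not $(A\lambda)^{2/3}$ as displayed. Your argument therefore proves the bound with the term $\lambda^{1/3}\bigl(5PLD/\sqrt{E^\downarrow}\bigr)^{2/3}$. That implies the displayed term only when $\lambda\ge 1$; for $\lambda<1$ the displayed term is smaller and is not established. The paper's own computation for the ILA analogue, $2(A_4\lambda)^{2/3}$, contains the same slip, so state the corrected term rather than claiming the displayed one.
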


\subsection{Asymptotic Regimes}\label{sec:learning-regimes}

As in \Cref{sec:sampling}, the price of privacy is asymptotically negligible for both ILA and ULA, and so it suffices to compare the leading linear terms. For ULA, our regret grows like $k\bar\rho_K + P\sqrt{E^*} \leq k(1-G)\bar\rho_M + P\sqrt{E^*}$, while for ILA it grows like $P\sqrt{E^* + \sigma^2} + k'\bar\eta_{k'} \geq P\sqrt{E^* + \sigma^2}$. Using the first-order approximation $\sqrt{E^*} = \sqrt{E^* + \sigma^2 - \sigma^2} \leq \sqrt{E^* + \sigma^2} - \frac{\sigma^2}{2\sqrt{E^*+\sigma^2}}$, we obtain the following sufficient condition for ULA to outperform ILA asymptotically:
\begin{align}
    k(1-G)\bar\rho_M 
    &\leq \frac{P\sigma^2}{2\sqrt{E^*+\sigma^2}} \\
    2\frac{k}{P}(1-G)\bar\rho_M \sqrt{\mathcal{L}(\theta^*)} 
    &\leq \sigma^2
\end{align}

If the special case where $\Theta$ contains the Bayes' optimal classifier, $\mathcal{L}(\theta^*) = \sigma^2$ and the condition simplifies to:
\begin{equation}
    \left(\frac{1}{\sigma}\right) \left(\frac{k}{P} \right) (1-G)\bar\rho_M \leq \frac{1}{2}
\end{equation}
Each term on the left-hand side has a nice interpretation: $1/\sigma$ represents the intrinsic predictability of the outcomes we care about, $k/P$ represents the size of our budget, $1-G$ represents how \textit{equally} welfare is distributed across units, and $\bar\rho_M$ represents the average welfare across units. Taken as a whole, we see that ULA dominates ILA asymptotically whenever the product of these four terms is not too large.

\subsection{The Value of Modeling}

Since training a model to predict welfare implicitly requires the ability to sample welfare scores from the population, it would be possible to dispense with the modeling step entirely and use the algorithms from \Cref{sec:sampling} even in this stochastic setting. We therefore consider one final comparison between the regret bounds obtained for each strategy in this section and those for the same strategy in \Cref{sec:sampling}.

In the case of ILA, our linear regret term in the pure sampling setting scaled like $\frac{P\lambda k}{P\lambda + k}$, while the regret of the model-based approach scaled like $P\sqrt{\mathcal{L}(\theta^*)}$. Modeling therefore emerges as a stronger strategy in settings where $\lambda$ is relatively large. In particular, we saw that sampling-based strategies could do no better than random allocation for $\lambda > 1-\frac{k}{P}$, while modeling-based strategies can in principle achieve near-optimal regret even for $\lambda>1$.

In the case of ULA, we ignore the common linear term of $k\bar\rho_K$. Then, the excess regret of the sampling based approach grows like $O(P^2 M \lambda)^{1/3}$, while the excess regret of the modeling approach grows like $\lambda^{1/5}P^{4/5} + P\sqrt{E^*}$. We can therefore identify three natural regimes of interest:
\begin{itemize}
    \item If $E^* > 0$ and $M = o(P)$, then the modeling approach can \textit{never} outperform pure sampling asymptotically. The key reason for this is that the error from sampling is purely driven by variance, whereas the error from modeling can involve bias as well.
    \item If $E^* = 0$, then modeling can outperform sampling asymptotically only when $M = \Omega( (P/\lambda)^{2/5})$. This is because when $M$ is very large, the modeling based approach can leverage data from similar individuals across units to avoid sampling every unit directly.
    \item If $E^* > 0$ and $M = \Theta(P)$, then it is possible for modeling to outperform sampling if $\lambda \gtrsim (E^*)^{3/2}$, which once again occurs because modeling allows us to avoid sampling many individuals from every unit at the cost of introducing some bias.
\end{itemize}

It is interesting to note that the conditions for modeling to improve the performance of ULA seem to be considerably stricter than for ILA. One possible interpretation for this fact is that unit-level targeting in itself already constitutes a simple sort of model, and so the ability to leverage information about one individual to infer the welfare of another was already available to ULA even before making any distributional assumptions. In other words, ULA has less to gain than ILA from additional tools for making such inferences.

\section{Discussion}\label{sec:discussion}

We conclude our analysis by reflecting on the strengths and limitations of our modeling choices and suggesting possible avenues for future work.

\subsection{On Our Modeling Choices}
\label{sec:modeling-choices}

\subsubsection{Our Choice of Social Welfare Ordering}

For both ULA and ILA, the objective function we seek to maximize is $\sum_{i=1}^P w_i$, which corresponds to the classical Utilitarian social welfare ordering~\cite{moulin2004fair}. One limitation of this ordering is that it is indifferent to \textit{distributive} concerns; a world where half of the population has welfare 1 and the other half has welfare 0 is deemed just as desirable as a world where everybody has welfare 0.5. Possible alternatives that are more sensitive to distributive questions include the Nash social welfare ordering, which corresponds to maximizing $\prod_{i=1}^P w_i$, and the egalitarian social welfare ordering, which (roughly) seeks to maximize $\min_i w_i$.

The main challenge with both of these alternative objectives in the context of DP is that they can be extremely sensitive to changes in a single individual's welfare; one person with welfare 0 forces both objective functions to be 0 regardless of the remaining data. In contrast, the Utilitarian objective is a sum of bounded data points, which is a well-studied object in the DP literature. We therefore believe that the Utilitarian objective is the natural starting point for our investigation, and leave the design of DP algorithms for optimizing either of the other objectives (or choosing between multiple equal-value allocations~\cite{jain_allocation_2025}) to future work.

\subsubsection{Statistical Estimation vs. Pure Selection}

The strategies that we consider all derive their allocations from privatized estimates of important statistics, such as the empirical CDF of welfare scores or the unit profile vector. Outputting a private aid allocation is fundamentally a \textit{selection} task, however, and so one could argue that it would be more efficient to skip the estimation step and use methods, like those contained in in~\cite{gillenwater2022joint,bafna2017price,steinke2017tight,jourdan2025optimal}, which optimize over the set of possible allocations directly. In the specific domain of aid allocation, however, we believe that approaches based on statistical estimation have major advantages in transparency: in addition to outputting an allocation, our algorithms can release a privatized statistic which justifies why that allocation is fair. A further benefit is that the intermediate statistics we compute are general enough to be repurposed for other tasks without expending any additional privacy budget.

\subsubsection{What Our Model Does Not Capture}

Our model assumes a very simple and schematic relationship between welfare and treatment effects which may not hold exactly in real settings. The authors of \cite{shirali_allocation_2024} on whose work we build upon consider more complicated treatment effects and show that when $\tau(w)$ is Lipschitz and concave, and when the distribution of welfare scores is sufficiently smooth, estimating welfare alone is still sufficient to derive efficient allocations. An alternative approach, which others have advocated for on philosophical grounds, is to reframe the problem in terms of predicting treatment effects directly~\cite{barabas2018interventions,liu2023reimagining}. To this end, there is a large body of works, spanning many sciences, which study how heterogeneous treatment effects can be learned~\cite{wager2018estimation,kunzel2019metalearners}, including under DP constraints~\cite{niu2022differentially}. Although we do not directly pursue this point, we remark that one virtue of the simplicity of our model is that our results can be easily restated to take either treatment effects or welfares as primary.

\subsection{Conclusion and Future Work}

In this work, we separately considered two possible constraints on data availability and found that in all cases, both ULA and ILA could be modified to satisfy DP without affecting their asymptotic regret. In \Cref{sec:sampling}, we imagined that welfare scores were fixed arbitrarily and designed robust algorithms for achieving low regret in the worst-case. Here, the key difference between ILA and ULA was that ULA required only a sublinear number of samples to achieve nearly-optimal targeting, while ILA was obliged to expend a constant fraction of its budget on sampling. As a result, the optimal choice of strategy depended strongly on $\lambda$, the relative cost of measuring welfare compared to the cost of treatment. 

Meanwhile, in \Cref{sec:learning}, we assumed that welfare scores and features were sampled i.i.d. from some fixed joint distribution and designed learning-based algorithms that could estimate unseen welfare scores with high accuracy. Unlike the previous setting, we found that there was essentially no difference between the sampling required by ILA vs. ULA; instead, the choice of optimal strategy depended on a new quantity, $\sigma^2$, representing the fundamental unpredictability of welfare scores conditioned on the available data. Because ULA was able to ``wash out'' this uncertainty by averaging many measurements together, it was significantly more robust than ILA in settings where outcomes are fundamentally difficult to predict.

A natural next step (with or without DP) is to combine these two settings as follows: imagine an administrator trying to select an allocation strategy. To start out, they have access to relatively coarse features that encode unit membership and nothing else. But by paying some cost $c_p$, they can obtain $p$ new features for a single individual --- for instance, performance on standardized assessments~\cite{ikram_frugal_nodate} or proxies for socioeconomic status~\cite{kohli_enabling_2024}. By expanding their model class in this way, the administrator can potentially decrease the irreducible error of their prediction problem, but this will come at the cost of increasing their estimation error and obliging them to spend more of their budget on acquiring extended feature vectors. Seen in this light, ILA and ULA might appear less as diametric opposites, and more as two points along a continuum of tradeoffs between the error arising from coarse targeting, sampling, and epistemic uncertainty.

An alternative path to extend our model is to incorporate adaptive or strategic behavior. On the administrative side, this might involve conducting repeated rounds of sampling and/or intervention, which could enable meaningful reductions in sample complexity using techniques from the bandits literature~\cite{audibert2010best,jourdan2025optimal} or permit treatment effects and allocations to be learned jointly~\cite{kasy2021adaptive,wilder2025learning}. On the population side, this might involve considering the possibility of strategic behavior from both individuals~\cite{bruckner2011stackelberg} and organized groups~\cite{hardt_algorithmic_nodate} wishing to influence the administrator's decision. It would be particularly interesting to extend our model to incorporate insights from the large body of literature on \textit{performative prediction}~\cite{sperber_looping_1995,manheim_categorizing_2019,hardt_performative_2023,hardt_performative_2022}, whereby forecasts can influence the very events they predict, and to investigate how the stability induced by DP impacts these dynamics.


\newpage

\bibliography{references}
\bibliographystyle{plainurl}


\appendix

\section*{Appendices}

\section{Differences between Our Model and that of Shirali et al.~\cite{shirali_allocation_2024}}

In their definition of ULA, the authors of \cite{shirali_allocation_2024} make the additional assumption that unit-level administrators can be relied on to allocate aid \textit{within} each unit efficiently, e.g. by avoiding the top $q$ fraction of individuals by welfare with probability $1-q'$. 

We regard this assumption as unsatisfying for two main reasons. Firstly, the allocation scheme described essentially corresponds to a hybrid of ILA and ULA, which could be called ``ILA within units.'' However, the treatment of ILA within units and full ILA is asymmetric --- their model assumes that ILA is forced to expend some of its budget to guarantee accurate targeting across the population, but there is no explanation or cost associated with the accuracy of within-unit targeting. We find this asymmetry unsatisfying. Secondly, in our specific context of \textit{private} allocations, we require the entire process by which allocation decisions are made to satisfy DP. It is therefore important that any decision-making power exercised by unit-level administrators be modeled explicitly to allow for rigorous privacy analysis. To resolve these issues, our model assumes that ULA allocates aid uniformly to all individuals within each unit, without targeting (see Section \ref{sec:ula}).

We recognize that the use of within‑unit targeting in~\cite{shirali_allocation_2024} was partly motivated by the idea that allocating aid to every member of a unit is clearly inefficient in certain contexts. For instance, when units correspond to schools, it may be reasonable to assume that administrators won't allocate limited tutoring resources to honor‑roll students; in this context, within-unit targeting typically wouldn't raise additional privacy concerns because the identities of honor-roll students are often publicly available to begin with. 

We believe that the cleanest way to reconcile this tension and make our model of aid allocation compatible with DP is to assume that targeting within units is uniform while defining units at a finer level of granularity (e.g., ``honor‑roll students at School A,'' ``non‑honor‑roll students at School B,'' etc.), a strategy we examine in greater detail in \Cref{sec:learning}. More generally, we assume that the population under consideration for both ILA and ULA doesn't include individuals who are \textit{a priori} ineligible for a benefit, so that resources are not expended on estimating the welfare of irrelevant units.

\section{Useful Technical Results}\label{app:useful}

The following two lemmas provide high‑probability upper bounds on the maximum of Gaussian and hypergeometric random variables, which we utilize in our later proofs in the Appendix.

\begin{lemma}[Concentration of maximum of Gaussians]\label{lem:gaussian-concentration}
    Let $z_1,\ldots, z_n$ be (not-necessarily independent) Gaussian random variables with $z_i \sim \mathcal{N}(0, \sigma_i^2)$, and let $\sigma_*^2 = \max_i \sigma_i^2$. Then, with probability at least $1-\beta$, we have:
    \begin{equation}
        \max_i z_i \leq \sigma_* \left(\sqrt{2\log n} + \sqrt{2\log(1/\beta)}\right)
    \end{equation}
\end{lemma}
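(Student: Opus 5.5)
We bound the expectation of the maximum and then show concentration around it. Write $M_z = \max_i z_i$. Since the $z_i$ need not be independent, we cannot use product-form tail bounds. Instead we rely on a union bound together with the exponential moment (Chernoff) method, both of which need only the marginal distributions.

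\textbf{Union-bound route.} This is the first thing I would try, since it gives the result directly. For each $i$ and any $t>0$, the Gaussian tail bound $\mathbb{P}[z_i > t] \leq \exp(-t^2/(2\sigma_i^2)) \leq \exp(-t^2/(2\sigma_*^2))$ holds; it follows from the Chernoff bound $\mathbb{P}[z_i>t]\le e^{-\lambda t}\mathbb{E}[e^{\lambda z_i}] = e^{-\lambda t + \lambda^2\sigma_i^2/2}$ with $\lambda = t/\sigma_i^2$. A union bound over the $n$ variables then gives
\begin{equation*}
\mathbb{P}[M_z > t] \leq n \exp\left(-\frac{t^2}{2\sigma_*^2}\right),
\end{equation*}
and this requires no independence. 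Setting the right-hand side equal to $\beta$ gives $t = \sigma_*\sqrt{2\log(n/\beta)} = \sigma_*\sqrt{2\log n + 2\log(1/\beta)}$.

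\textbf{Finishing step.} By subadditivity of the square root, $\sqrt{a+b}\leq \sqrt a + \sqrt b$ for $a,b\ge 0$, so $t \leq \sigma_*(\sqrt{2\log n} + \sqrt{2\log(1/\beta)})$. On the event of probability at least $1-\beta$ on which $M_z \le \sigma_*\sqrt{2\log(n/\beta)}$, the claimed bound therefore holds. I would also note the degenerate cases. If $\sigma_i=0$, then $z_i=0$ almost surely and the bound is trivial for that index. If $n=1$, the formula reads $\sigma_*\sqrt{2\log(1/\beta)}$, which is exactly the single-Gaussian tail bound.

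\textbf{Main obstacle and fallback.} The only potential obstacle is the dependence among the $z_i$, and the union bound avoids it completely, so I expect no real difficulty. If a sharper constant were wanted, I would instead bound $\mathbb{E}[M_z]\le \sigma_*\sqrt{2\log n}$ via Jensen on $\mathbb{E}[e^{\lambda M_z}]\le \sum_i \mathbb{E}[e^{\lambda z_i}]$. I would then apply Borell--TIS concentration for suprema of Gaussian processes, which also holds under arbitrary dependence, to obtain the deviation term $\sigma_*\sqrt{2\log(1/\beta)}$. The elementary union-bound argument already suffices for the stated form, however.
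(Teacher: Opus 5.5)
Your proof is correct, but it takes a different route from the paper. The paper first bounds $\mathbb{E}[\max_i z_i] \le \sigma_*\sqrt{2\log n}$ using the moment-generating-function argument, which is the one in your fallback. It then invokes the Borell--TIS inequality, $\mathbb{P}(\max_i z_i > \mathbb{E}[\max_i z_i] + u) \le \exp(-u^2/(2\sigma_*^2))$, to obtain the additive $\sigma_*\sqrt{2\log(1/\beta)}$ term.

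Your Chernoff-plus-union-bound argument is more elementary. It gives the slightly stronger bound $\sigma_*\sqrt{2\log(n/\beta)}$ directly, and the stated form follows by subadditivity of the square root. It also needs only that each $z_i$ be marginally Gaussian, or even just sub-Gaussian with variance proxy $\sigma_i^2$. Borell--TIS, by contrast, requires $(z_1,\ldots,z_n)$ to be jointly Gaussian. So your fallback's claim that Borell--TIS ``holds under arbitrary dependence'' should be read as ``under an arbitrary covariance of a jointly Gaussian vector.''

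In principle, the paper's route buys concentration around the true mean, which can be far below $\sigma_*\sqrt{2\log n}$ when the $z_i$ are strongly correlated (e.g., all equal). That advantage is lost once $\mathbb{E}[\max_i z_i]$ is replaced by its worst-case bound. For the lemma as stated, your argument is therefore both simpler and at least as tight.
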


\begin{proof}
    Let $Z = \max_i z_i$. Using the standard upper-bound based on moment-generating functions, we have that $\mathbb{E}[Z] \leq \sigma_* \sqrt{2\log n}$.
    Then, by the Borell-TIS inequality, we know that the maximum of Gaussians concentrates closely around its expectation~\cite{adler2007gaussian}:
    \begin{equation}
        \mathbb{P}(
        Z > \mathbb{E}[Z] + u
        ) \leq \exp\left( \frac{-u^2}{2\sigma_*^2} \right)
    \end{equation}
   Combining these two results and rearranging yields the lemma.
\end{proof}

\begin{lemma}[Concentration of hypergeometric random variables (\cite{serfling1974probability} Corollary 1.1)]\label{lem:hypergeom-concentration}
    Let $X$ be a Hypergeometric random variable with population size $P$, sample size $k$, and mean $kp$. Then with probability at least $1-\beta$,
    \begin{equation}
        |X/k - p| \leq \sqrt{\frac{P-k}{P}} \sqrt{\frac{\log(2/\beta)}{k}}
    \end{equation}
\end{lemma}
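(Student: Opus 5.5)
This lemma is quoted from Serfling, so the plan is to reduce it to Serfling's sampling-without-replacement inequality and a union bound. Write $X=\sum_{t=1}^k Z_t$, where $Z_1,\ldots,Z_k$ are draws without replacement from a finite population of $P$ binary values (the indicators of ``success''), with population mean $p$. Each value lies in $[0,1]$, so the range of the population is at most $1$.

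Serfling's Corollary 1.1 gives the one-sided tail bound for the sample mean $\bar Z_k = X/k$:
\begin{equation*}
    \mathbb{P}\left(\bar Z_k - p \geq t\right) \leq \exp\left(-\frac{2kt^2}{1 - f_k^*}\right), \qquad f_k^* = \frac{k-1}{P}.
\end{equation*}
By symmetry, replacing each $Z_t$ with $1-Z_t$, the same bound holds for the lower tail. A union bound then gives
\[
\mathbb{P}(|X/k-p|\geq t)\leq 2\exp\left(-\frac{2kt^2}{1-(k-1)/P}\right).
\]
Setting the right-hand side equal to $\beta$ and solving for $t$ yields
\[
t=\sqrt{\frac{(1-(k-1)/P)\log(2/\beta)}{2k}}.
\]

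It remains to compare this with the stated bound $\sqrt{\frac{P-k}{P}}\sqrt{\frac{\log(2/\beta)}{k}}$. The only mismatch is $1-(k-1)/P$ versus $(P-k)/P$, which differ by $1/P$. The extra factor of $1/2$ under the square root absorbs this difference whenever
\[
\frac{P-k+1}{2P}\le\frac{P-k}{P},
\]
that is, whenever $P-k\geq 1$, or $k<P$. In the degenerate case $k=P$, we have $X=Pp$ deterministically, so $X/k-p=0$ and the claim holds trivially. Since the event $|X/k-p|\le t$ only becomes more likely as $t$ grows, the claimed bound holds with probability at least $1-\beta$.

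The main obstacle is quoting Serfling's finite-population correction correctly, namely the factor $1-f_k^*$ with $f_k^*=(k-1)/P$, and tracking the constants. The hypergeometric setting itself poses no difficulty, because Serfling's result is stated for exactly this kind of sampling without replacement from a bounded population.
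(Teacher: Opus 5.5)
Your proposal is correct and takes the same route as the paper, which offers no argument beyond citing Serfling's Corollary 1.1. Your union bound over the two tails, the solution for $t$, and the check that $\frac{P-k+1}{2P}\le\frac{P-k}{P}$ whenever $k<P$ (with $k=P$ trivial) supply the bookkeeping the citation leaves implicit, and they also show that the stated bound is loose by roughly a factor of $\sqrt{2}$.
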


A fundamental tool for achieving zCDP is the \textit{Gaussian mechanism}, which we use as a building block in our algorithms.

\begin{lemma}[$\psi$-zCDP Guarantee of the Gaussian Mechanism (\cite{bun2016concentrated} Proposition 16)]\label{lem:gaussian-mechanism}
  Let $q: \mathcal{X}^n \to \mathbb{R}$ be a sensitivity $\Delta$ query. Then the mechanism $M: \mathcal{X}^n \to \mathbb{R}$ defined by $M(x) \sim \mathcal{N}(q(x),~\Delta^2/(2\psi))$ satisfies $\psi$-zCDP.
\end{lemma}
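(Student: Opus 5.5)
The plan is to verify Definition \ref{def:zCDP} directly: compute the R\'enyi divergence between the two Gaussian output distributions in closed form, then use the sensitivity bound. Fix neighbouring inputs $x, x'$ and write $\mu = q(x)$, $\mu' = q(x')$, and $\sigma^2 = \Delta^2/(2\psi)$. Then $M(x) \sim \mathcal{N}(\mu,\sigma^2)$ and $M(x') \sim \mathcal{N}(\mu',\sigma^2)$, and by the definition of sensitivity $|\mu - \mu'| \leq \Delta$. So it suffices to show that for every $\alpha \in (1,\infty)$,
\begin{equation*}
D_\alpha\bigl(\mathcal{N}(\mu,\sigma^2) \,\Vert\, \mathcal{N}(\mu',\sigma^2)\bigr) = \frac{\alpha(\mu-\mu')^2}{2\sigma^2}.
\end{equation*}

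To compute this, I will use $D_\alpha(P\Vert Q) = \frac{1}{\alpha-1}\log \int p(t)^\alpha q(t)^{1-\alpha}\,dt$ with the Gaussian densities $p$ and $q$. The integrand equals $\frac{1}{\sqrt{2\pi}\sigma}\exp\bigl(-\frac{1}{2\sigma^2}[\alpha(t-\mu)^2 + (1-\alpha)(t-\mu')^2]\bigr)$. Since $\alpha + (1-\alpha) = 1$, the bracket expands to
\begin{equation*}
(t - \alpha\mu - (1-\alpha)\mu')^2 + \alpha(1-\alpha)(\mu-\mu')^2 .
\end{equation*}
The first term integrates against the normalising constant to exactly $1$, because it is the density of $\mathcal{N}(\alpha\mu+(1-\alpha)\mu', \sigma^2)$. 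Hence the integral is $\exp\bigl(\alpha(\alpha-1)(\mu-\mu')^2/(2\sigma^2)\bigr)$. Taking the logarithm and dividing by $\alpha-1$ gives the claimed identity. This completing-the-square step is the only genuine computation; the one point needing care is the sign of $\alpha(1-\alpha)$, which is negative for $\alpha>1$, so the exponent comes out positive.

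Finally, substituting $\sigma^2 = \Delta^2/(2\psi)$ gives
\begin{equation*}
D_\alpha\bigl(M(x)\,\Vert\, M(x')\bigr) = \frac{\alpha(\mu-\mu')^2 \psi}{\Delta^2} \leq \psi\alpha
\end{equation*}
for all $\alpha>1$ and all neighbouring $x,x'$, which is exactly $\psi$-zCDP. If $\Delta = 0$ the mechanism's output distribution does not depend on the input, so the bound holds trivially. I do not expect a real obstacle in this argument.
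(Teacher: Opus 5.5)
Your proof is correct. The completing-the-square computation $D_\alpha\bigl(\mathcal{N}(\mu,\sigma^2)\,\Vert\,\mathcal{N}(\mu',\sigma^2)\bigr) = \alpha(\mu-\mu')^2/(2\sigma^2)$, followed by substituting $\sigma^2 = \Delta^2/(2\psi)$ and using $|\mu-\mu'|\le\Delta$, is the standard argument. The paper gives no proof of its own and only cites Bun and Steinke, whose proof is this same R\'enyi-divergence calculation for equal-variance Gaussians.
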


In \Cref{sec:learning}, we use the following result of Feldman et al.~\cite{feldman2020private} to characterize the sample complexity of DP-SCO:

\begin{theorem}[Sample Complexity of DP-SCO (\cite{feldman2020private} Theorem 4.4)]\label{thm:feldman} Let $\mathcal{X} \times [0,1]$ be the domain of datasets, and $\mathcal{D}$ be a distribution over $\mathcal{X} \times \lbrace 0,1 \rbrace$. Let $S = ((x_1, y_1) \ldots, (x_n, y_n))$ be a dataset drawn i.i.d. from $\mathcal{D}$. Let $\mathcal{K} \subseteq \mathbb{R}^p$ be a convex set denoting the space of all models. Let $\ell: \mathcal{K} \times \mathcal{X} \times \lbrace 0,1 \rbrace \to \mathbb{R}$ be a loss function, which is $L$-Lipschitz, $\xi$-smooth, and convex in its first parameter. Then, if $\mathcal{K}$ has diameter at most $D$, the output $\hat \theta$ of Algorithm 2 in \cite{feldman2020private} satisfies $\psi$-zCDP and we have:
\begin{equation}
    \mathbb{E}[\mathcal{L}(\hat \theta)] \leq \min_{\theta^* \in \mathcal{K}} \mathcal{L}(\theta^*) + 10LD \left( \frac{1}{\sqrt{n}} + \frac{\sqrt{p}}{n\sqrt{2\psi}} \right)
\end{equation}
where $\mathcal{L}(\theta) = \mathbb{E}_{x,y \sim \mathcal{P}}[\ell(\theta,x,y)]$, provided that $\frac{D}{L}\min \left\lbrace \frac{4}{\sqrt{n}}, \frac{\sqrt{2\psi}}{\sqrt{p}} \right\rbrace \leq 2/\xi$.
\end{theorem}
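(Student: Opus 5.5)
The plan is to obtain the statement by re-running the analysis of Algorithm 2 of \cite{feldman2020private} (the phased, localized SGD procedure) with the Gaussian-mechanism accounting of Lemma \ref{lem:gaussian-mechanism} in place of their $(\varepsilon,\delta)$ accounting. The algorithm runs $T=\lceil \log_2 n\rceil$ phases. Phase $i$ uses a fresh, disjoint batch of $n_i = n/2^i$ samples and step size $\eta_i = \eta\, 4^{-i}$. It runs one pass of projected SGD on $\ell$ over $\mathcal{K}$, starting from the previous phase's output $\theta_{i-1}$, averages the iterates to get $\bar\theta_i$, and releases $\theta_i = \bar\theta_i + \zeta_i$ with $\zeta_i \sim \mathcal{N}(0,\sigma_i^2 I_p)$. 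I will take the base step size to be $\eta = \frac{D}{L}\min\{\frac{4}{\sqrt n}, \frac{\sqrt{2\psi}}{\sqrt p}\}$; the hypothesis of the theorem is then exactly $\eta \le 2/\xi$.

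\textbf{Privacy.} For convex, $\xi$-smooth, $L$-Lipschitz losses with $\eta_i \le 2/\xi$, each gradient step is non-expansive. Replacing one sample in the batch therefore moves the averaged iterate $\bar\theta_i$ by at most $2L\eta_i$ in $\ell_2$; this is the uniform stability of SGD. Setting $\sigma_i^2 = (2L\eta_i)^2/(2\psi)$, Lemma \ref{lem:gaussian-mechanism} (applied coordinatewise, equivalently to the $\ell_2$-sensitivity in $\mathbb{R}^p$) shows that phase $i$ is $\psi$-zCDP with respect to its own batch. Since the batches are disjoint, a replace-one change touches a single phase. Later phases only post-process $\theta_i$, so by parallel composition and post-processing the whole algorithm is $\psi$-zCDP.

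\textbf{Utility.} I will follow the localization argument. Let $\theta^*_0 = \theta^*$ and $\theta_0 \in \mathcal{K}$ be arbitrary. The standard one-pass SGD bound gives
\[
\mathbb{E}[\mathcal{L}(\bar\theta_i)] - \mathcal{L}(\theta_{i-1}^{\circ}) \le \frac{\mathbb{E}\|\theta_{i-1} - \theta_{i-1}^{\circ}\|^2}{2\eta_i n_i} + \frac{\eta_i L^2}{2},
\]
where $\theta_{i-1}^{\circ}$ is a suitable comparator. The distance $\|\theta_{i-1}-\bar\theta_{i-1}\|^2$ is controlled in expectation by $p\sigma_{i-1}^2$. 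Telescoping over phases, the first phase contributes $D^2/(\eta n) + \eta L^2$. Phase $i$ contributes $p\sigma_{i-1}^2/(\eta_i n_i) + \eta_i L^2$, which decays geometrically because $\eta_i n_i \propto 8^{-i}$ while $\sigma_{i-1}^2 \propto 16^{-i}$. Summing, the total excess risk is
\[
O\!\left(\frac{D^2}{\eta n} + \eta L^2 + \frac{p L^2 \eta}{\psi n}\right).
\]
Plugging in the chosen $\eta$ balances these terms to $LD\bigl(\tfrac{1}{\sqrt n} + \tfrac{\sqrt p}{n\sqrt{2\psi}}\bigr)$, up to a constant.

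\textbf{Expected obstacle.} The main difficulty will be bookkeeping the constants so that they come out at most $10$. This means tracking the factor $2$ in the stability bound, the $1/2$ from zCDP ($\Delta^2/2\psi$ versus the $\log(1/\delta)/\varepsilon^2$ scaling in \cite{feldman2020private}), and the geometric sums. My first attempt will be to copy the constant chain of their Theorem 4.4 line by line, substituting $\sqrt{2\psi}$ for $\varepsilon/\sqrt{\log(1/\delta)}$. I would then check that each geometric series is bounded by its first term times at most $4/3$ or $8/7$, which leaves room under the constant $10$.
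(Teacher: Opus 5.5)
Your plan takes the same route as the paper, but the constant $10$ in the statement does not follow from it. The paper gives no argument of its own: it imports Theorem 4.4 of \cite{feldman2020private} (Phased SGD) and rewrites $\varepsilon/\sqrt{\log(1/\delta)}$ as $\sqrt{2\psi}$.

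\textbf{What is sound.} Your privacy argument works. For $\eta_i \le 2/\xi$, one pass of projected SGD has $\ell_2$-sensitivity $2L\eta_i$, the batches are disjoint, and later phases are post-processing. Your utility skeleton also gives the right rate. The comparator $\bar\theta_{i-1}$ is independent of the phase-$i$ batch, so the one-pass SGD bound applies. Three small repairs are needed:
\begin{itemize}
\item You need the multivariate Gaussian mechanism, $D_\alpha(\mathcal{N}(\mu,\sigma^2 I_p)\Vert\mathcal{N}(\mu',\sigma^2 I_p)) = \alpha\lVert\mu-\mu'\rVert_2^2/(2\sigma^2)$. A coordinatewise use of Lemma \ref{lem:gaussian-mechanism} with worst-case per-coordinate sensitivity would compose to $p\psi$.
\item Since $\ell$ is only defined on $\mathcal{K}$, project each released $\theta_i$ onto $\mathcal{K}$. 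This is post-processing, and it only shrinks $\lVert\theta_i-\bar\theta_i\rVert$.
\item Add the omitted final term $L\,\mathbb{E}\lVert\zeta_T\rVert \le L\sqrt{p}\,\sigma_T \le 2LD/n^2$, which is what moves the bound from $\bar\theta_T$ to the released $\theta_T$.
\end{itemize}

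\textbf{Where it fails: the constant.} The step you defer to bookkeeping, getting the constant down to $10$, does not go through. Use your own parameters: $n_i = n2^{-i}$, $\eta_i = \eta 4^{-i}$, $\sigma_i^2 = (2L\eta_i)^2/(2\psi)$.
\begin{itemize}
\item The first phase contributes $D^2/(2\eta_1 n_1) = 4D^2/(\eta n)$, not $D^2/(\eta n)$.
\item Because $\eta_{i-1} = 4\eta_i$, phase $i\ge 2$ contributes $p\sigma_{i-1}^2/(2\eta_i n_i) = 16\cdot 2^{-i}\,pL^2\eta/(\psi n)$. This series has ratio $1/2$, so it sums to twice its first term, not $4/3$ or $8/7$ times it. The total is $8pL^2\eta/(\psi n)$.
\item With $\eta = \frac{D}{L}\sqrt{2\psi/p}$, these two contributions are $4$ and $16$ times $LD\sqrt{p}/(n\sqrt{2\psi})$.
\end{itemize}
Changing $\eta$ cannot fix this. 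By AM--GM, $\frac{4D^2}{\eta n} + \frac{8pL^2\eta}{\psi n} \ge 16\,\frac{LD\sqrt{p}}{n\sqrt{2\psi}}$ for every $\eta$. So no choice of base step size, and no line-by-line transcription of the $(\varepsilon,\delta)$ constant chain, brings this analysis under $10$ when $\sqrt{p}/(n\sqrt{2\psi}) \gg 1/\sqrt{n}$.

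What your argument actually proves is
\[
\mathbb{E}[\mathcal{L}(\theta_T)] - \min_{\theta\in\mathcal{K}}\mathcal{L}(\theta) \le \frac{5}{3}\cdot\frac{LD}{\sqrt{n}} + 20\cdot\frac{LD\sqrt{p}}{n\sqrt{2\psi}} + \frac{2LD}{n^2},
\]
that is, the stated bound with $10$ replaced by $20$. This is enough for every asymptotic comparison in \Cref{sec:learning}. To get the theorem as written, you would need a different phase schedule or a sharper analysis. Otherwise, the explicit factors $10LD$ carried into Corollaries \ref{cor:ila-sco-budget} and \ref{cor:ula-sco-budget} should be adjusted.
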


\section{Missing Proofs from \Cref{sec:prelim}}

\subsection{Proof of Lemma \ref{lem:ila-helper}}

Let $k' \leq k$ be given. As in the lemma statement, let $\mathcal{I}$ denote the set of individuals mistakenly included or excluded from our $k'$-allocation because of estimation error. Maximizing the normalized expected value of ILA can be represented as a linear program:
\begin{align}
    \max_x  \sum_{i=1}^P x_i (1-\eta_i) \quad s.t. \quad &\sum_{i=1}^P x_j = k' \\
    &0 \leq x_i \leq 1 \quad \forall i
\end{align}

Since the optimal solution for an integer $k'$ is to assign $x_{s(i)} = 1$ for all $i \leq k'$, we do not need to worry about non-integral solutions.

Let $x^*$ denote the optimal solution for the true $\eta$ vector, and $\tilde x^*$ be the optimal solution for the perturbed $\widehat\eta$ induced by our estimated probabilities. Then, letting $\langle \cdot, \cdot \rangle$ denote the standard inner product, we have:
\begin{align}
   \langle \tilde x^*, 1-\eta \rangle 
   &= \langle \tilde x^*, 1-\widehat\eta \rangle - \langle \tilde x^*, \eta - \widehat\eta \rangle \\
   &\geq \langle x^*, 1-\widehat\eta \rangle -  \langle \tilde x^*, \eta - \widehat\eta \rangle\\
   &= \langle x^*, 1-\eta\rangle - \langle x^*, \widehat\eta-\eta \rangle - \langle \tilde x^*, \eta - \widehat\eta \rangle \\
   &= \langle x^*, 1-\eta\rangle - \langle x^* - \tilde x^*, \widehat\eta - \eta \rangle \\
   &= \langle x^*, 1-\eta \rangle - \sum_{i=1}^P (x_i^* - \tilde x_i^*) (\widehat\eta_i - \eta_i) 
\end{align}

Subtracting $\langle \tilde x^*, 1-\eta \rangle$ and adding $\sum_{i=1}^P (x_i^* - \tilde x_i^*) (\widehat\eta_i - \eta_i)$ to both sides gives us that our regret is at most:
\begin{equation}
    \sum_{i=1}^P (x_i^* - \tilde x_i^*) (\eta_i - \widehat\eta_i) \leq \sum_{i \in \mathcal{I}} |\eta_i - \widehat\eta_i|
\end{equation}

From here, we have that in expectation, the regret of $x^*$ with respect to the optimal $k'$-allocation in hindsight (i.e. based on the realized welfares) is exactly $\sum_{i=1}^P x_i \eta_i = \sum_{i=1}^{k'} \eta_{s(i)} \eqqcolon \bar \eta _{k'}$.

Finally, the normalized regret with respect to the optimal $k$-allocation is then at most that quantity plus $(k-k')$.

\subsection{Proof of Lemma \ref{lem:ula-helper}}

In Shirali et al. \cite{shirali_allocation_2024} (Lemma 4.1), the authors prove that the value of ULA is lower-bounded by $\sum_{j=1}^{k/N} N_{s(j)}(1-\rho_{s(j)})$, implying that $\overline\regret(\ULA, w; k) \leq \sum_{j=1}^{k/N} N_{s(j)} \rho_{s(j)}$. An immediate generalization says that if we allocate aid to $x_j \in [0, N_j]$ individuals chosen uniformly at random in unit $j$, then our normalized regret is upper-bounded by $\sum_{j=1}^M x_j \rho_j$. This allows us to formulate ULA as a linear program:
\begin{align}
    \min_x \sum_{j=1}^M x_j \rho_j \quad s.t. \quad &\sum_{j=1}^M x_j = k \\
    &0 \leq x_j \leq N_j\quad \forall j
\end{align}

Let $x^*$ denote the optimal solution for the true unit profile vector $\rho$, and let $\tilde x^*$ denote the optimal solution for $\widetilde\rho$. Then:
\begin{align}
   \langle \tilde x^*, \rho \rangle 
   &= \langle \tilde x^*, \widetilde\rho \rangle + \langle \tilde x^*, \rho - \widetilde\rho \rangle \\
   &\leq \langle x^*, \widetilde\rho \rangle +  \langle \tilde x^*, \rho - \widetilde\rho \rangle\\
   &= \langle x^*, \rho\rangle + \langle x^*, \widetilde\rho-\rho \rangle + \langle \tilde x^*, \rho - \widetilde\rho \rangle \\
   &= \langle x^*, \rho\rangle + \langle x^* - \tilde x^*, \widetilde\rho - \rho \rangle \\
   &= \langle x^*, \rho \rangle + \sum_{j=1}^M (x_j^* - \tilde x_j^*) (\widetilde\rho_j - \rho_j) \\
   &\leq \langle x^*, \rho \rangle + \lVert x^* - \tilde x^*\rVert_p \lVert \widetilde\rho - \rho\rVert_q
\end{align}
where the first inequality follows from the optimality of $\langle \tilde x^*, \widetilde\rho \rangle$ and the last line holds for any $p, q \in [1, \infty]$ such that $1/p + 1/q = 1$ by H\"older's inequality. The last step to derive the general bound is to note that on a worst-case input where all individuals have welfare $w \in \lbrace 0, 1\rbrace$, the inequality $\langle x^*, \rho \rangle \geq \overline\regret(\ULA, w, k)$ is in fact tight.

For the specific choices of $p, q \in \lbrace1, 2, \infty \rbrace$, we argue as follows. Let $\vec N = (N, N, \ldots, N) \in \mathbb{R}^M$. By the triangle inequality, for any $p \in [1, \infty]$, we have the two following upper bounds: 
\begin{align}
     &\lVert x^* - \tilde x^* \rVert_p \leq \lVert x^* \rVert_p + \lVert \tilde x^* \rVert_p  \\
    &\lVert x^* - \tilde x^* \rVert_p = \lVert (x^* - \vec N) - (\tilde x^* - \vec N) \rVert_p \leq \lVert x^* - \vec N \rVert_p + \lVert \tilde x^* - \vec N \rVert_p
\end{align}

And so in particular, we can always choose the smaller of the two upper bounds. For $p=1$, we have $\lVert x^* \rVert_1 = k$ and $\lVert x^* - \vec N \rVert_1 = P-k$, giving a final symmetric upper bound of $2\min(k, P-k)$. For $p=2$, we have $\lVert x^* \rVert_2 \leq \sqrt{kN}$ because $x^*$ has at most $k/N$ components with squared magnitude $N^2$, and symmetrically $\lVert x^* - \vec N \rVert_2 \leq \sqrt{(P-k)N}$. Combining these results yields the desired bounds:
\begin{equation}
        \begin{cases}
           2\min(k,P-k) \max_j |\widetilde\rho_j - \rho_j| & p=1, q=\infty \\
           \sqrt{2\min(k,P-k)N \sum_{j=1}^M (\widetilde\rho_j - \rho_j)^2}   & p=2, q=2 \\
           N \sum_{j=1}^M |\widetilde\rho_j - \rho_j|  &p = \infty, q=1
        \end{cases}
    \end{equation}

\subsection{Proof of Lemma \ref{lem:random-helper}}

Let $X$ denote the number of low-welfare individuals selected by \rand. Then $\val(\rand) \geq \delta_w X$, which implies that $\max_{|\mathcal{J}|=k} \val(\mathcal{J}, w) - \val(\rand, w) \leq  \max_{|\mathcal{J}|=k} \val(\mathcal{J}, w) - \delta_w X \leq \delta_w k - \delta_w X$, and therefore that $\overline\regret(\rand, w; k) \leq k-X$. Observing that $X$ follows a hypergeometric distribution completes the proof.

\subsection{Proof of Lemma \ref{lem:gini-baseline}}

The proof proceeds by maximizing $G$ subject to equality constraints on $\bar \rho_M$ and $\bar\rho_K$. For simplicity, we ignore the constraint that $\rho_i \in [0,1]$; this can only increase the value of the objective function and so our final bound remains valid. Without this constraint, it is clear by inspection of the second representation of the Gini coefficient, $G \propto \sum_{1 \leq i \leq j \leq M} (\rho_{s(j)} - \rho_{s(i)})$, that it is maximized when $M-1$ units have the same welfare and a single unit has higher welfare. So, we set:
\begin{align}
    \rho_{s(i)} &= a\qquad\forall i \leq M-1 \\
    \rho_{s(M)} &= a + b \\
    \bar\rho_M &= a + \frac{b}{M} \\
    \bar\rho_K &= a \\
    \Delta_K &\coloneq \bar\rho_M - \bar\rho_K = \frac{b}{M}
\end{align}

Then, using the definition of the Gini coefficient, we have:
\begin{align}
    G = \frac{1}{M^2 \bar\rho_M} (M-1)b = \frac{1}{M^2\bar\rho_M}(M-1)\frac{Mb}{M} = \frac{1}{M\bar\rho_M}(M-1)\Delta_K \\
    \Rightarrow \Delta_K = \frac{GM\bar\rho_M}{M-1} > G\bar\rho_M
\end{align}

as desired.

\section{Missing Proofs from \Cref{sec:direct}}

\subsection{\texttt{PrivatePartialSums}}
\label{app:private-partial_sums}

There is an extensive body of work on the problem of privately computing partial sums of bounded data points, which is described in detail in ~\cite{pillutla2025correlated}. We use \texttt{PrivatePartialSums} to refer to the factorization-based algorithm studied in ~\cite{henzinger_almost_2023}, whose utility guarantees are summarized in~\cite[Theorem 2.7]{pillutla2025correlated}. Briefly, their algorithm is based on adding correlated Gaussian noise to each of the $n$ partial sums of the input sequence with maximum standard deviation:
\begin{equation}
    \frac{1}{\sqrt{\psi}} \left(1 + \frac{\log(n) + \gamma_E}{\pi} \right)
\end{equation}
where $\gamma_E\approx 0.577$ denotes Euler's constant, which, when combined with Lemma \ref{lem:gaussian-concentration}, yields the error term in Theorem \ref{thm:ila-direct-helper}.

\subsection{Proof of Lemma \ref{lem:ila-privacy}}

In \autoref{alg:ila-direct}, individual data first appears in Step 5 when executing \texttt{PrivatePartialSums}, which satisfies $\psi$-zCDP. Step 6 operates solely on the output of Step 5, so by post-processing this step also satisfies $\psi$-zCDP \cite{bun2016concentrated}. Lastly, by the Billboard lemma (Lemma \ref{lem:billboard}), the return statement of Step 7 satisfies Joint $\psi$-zCDP.

\subsection{Proof of Lemma \ref{lem:ila-direct-budget}}

This lemma follows from the utility guarantee of \texttt{PrivatePartialSums} (described above) along with Lemma \ref{lem:gaussian-concentration}, which together tell us that with probability at least $1-\beta/2$,
\begin{equation}
    \max_j |S_j - \tilde S_j| \leq \frac{1}{\sqrt{\psi}} \left(1 + \frac{\log(n) + \gamma_E}{\pi} \right) \left(\sqrt{\log\left(\frac{1+2s}{\theta}\right)} + \sqrt{\log\frac{2}{\beta}}\right)
\end{equation}
Conditioning on this event, $j^*$ is the index of the first bin where it is possible that $S_{j^*} \geq k$. In particular, this means that $S_{j^*-1} < k$. Because we define our intervals to have exclusive left endpoints, it follows that the final threshold $\ell_{j^*}$ will not cause us to go over budget.

\subsection{Proof of Theorem \ref{thm:ila-direct-helper}}

We begin by discretizing the interval $\mathscr{R} = [-s, 1+s]$ into $(1+2s)/\theta$ bins, $\lbrace (\ell_i, r_i] \rbrace_{i=1}^{(1+2s)/\theta}$ of uniform width $\theta$ (as a special case, the very first interval should be inclusive on both sides, i.e. $[-s,-s+ 1/\theta]$). Let $Y_{i,j} = \mathbb{I}[\hat w_j \in B_i]$, and define the partial sum $S_i = \sum_{q=1}^i\sum_{j=1}^n Y_{q,j}$. Using \texttt{PrivatePartialSums} (see Appendix \ref{app:private-partial_sums}), we can privately compute estimates of all of the $S_i$'s simultaneously with $\tilde{S}_i \sim \mathcal{N}(S_i, \sigma_i^2)$ and $\sigma_i^2 \leq \sigma_{(1+2s)/\theta}^2 \approx \frac{1}{\psi} \left(1 + \frac{\log((1+2s)/\theta) + \gamma_E}{\pi}\right)^2$ where $\gamma_E\approx 0.577$ denotes Euler's constant.

Using Lemma \ref{lem:gaussian-concentration}, we have that with probability at least $1-\beta/2$, 
$$
\max_i |\tilde{S}_i - S_i| \leq \sigma_{(1+2s)/\theta}(\sqrt{2\log((1+2s)/\theta)} + \sqrt{2\log(2/\beta)}) \eqqcolon t
$$ 
From here, we select $j^* = \min \lbrace j \mid \tilde{S}_j + t \geq k \rbrace$ and choose our threshold as $\ell_{j^*}$.

We analyze the chosen threshold. Firstly, conditioning on our high-probability bound for the error from the noise we add, we know that the interval $[-s, r_{j^*}]$ must contain at least $\tilde{S}_{j^*} - t$ points in our dataset. From the definition of $j^*$, it follows that it therefore contains least $k - 2t$ points. Therefore, the total number of individuals that we exclude is at most $2t$ plus the number of individuals who fell in the interval $(\ell_{j^*}, r_{j^*}]$.



By our assumptions, the probability that a given datapoint falls in a certain interval is at most $p$. So, we can stochastically upper-bound the count of each bin as a Gaussian random variable with mean at most $np$ and variance at most $np(1-p)$.
Once again using Lemma \ref{lem:gaussian-concentration}, we have that with probability at least $1-\beta/2$:
\begin{equation}
    \max_{1 \leq i \leq (1+2s)/\theta} S_i - S_{i-1} \leq np + \sqrt{2np}\left(\sqrt{\log((1+2s)/\theta)} + \sqrt{\log(2/\beta)} \right)
\end{equation}

With the convention that $S_0 = 0$. Finally, since treatment effects are $1$-Lipschitz in welfare by definition, the additional regret from choosing $s>0$ is at most $ks$. Combining the error terms from privately computing partial sums, choosing a conservative threshold, and additive noise with $s>0$ yields the theorem statement.

\subsection{Proof of Theorem \ref{thm:ula-direct}}

We specialize Lemma \ref{lem:ula-helper} to the case where each $\widetilde\rho_j - \rho_j$ is an independent Gaussian with variance $\sigma_j^2$. Define $\sigma^2 = \frac{1}{M} \sum_{j=1}^M \sigma_j^2$, $\bar\sigma = \frac{1}{M} \sum_{j=1}^M \sigma_j$, and $\sigma_*^2 = \max_j \sigma_j^2$. Then by Lemma \ref{lem:gaussian-concentration}, we have:
\begin{equation}
    \mathbb{E}[\max_j |\widetilde\rho_j - \rho_j]] = \mathbb{E}[\max_{j,\epsilon = \pm 1} \epsilon(\widetilde\rho_j - \rho_j)] \leq \sigma_* \sqrt{2\log(2M)}
\end{equation}
And by Jensen's inequality and additivity of variance, we have that:
\begin{equation}
    \mathbb{E}\left[ \sqrt{\sum_{j=1}^M (\widetilde\rho_j - \rho_j)^2} \right] \leq \sqrt{\mathbb{E}\left[ \sum_{j=1}^M (\widetilde\rho_j - \rho_j)^2\right]} = \sigma\sqrt{M}
\end{equation}
Finally, standard properties of truncated normals give us that:
\begin{equation}
    \mathbb{E}[|\widetilde\rho_j - \rho_j|] = \sigma_j \sqrt{\frac{2}{\pi}}
\end{equation}
So, for this special case of Gaussian error, our regret bounds become:

\begin{equation}
    \begin{cases}
           \sigma_* 2\min(k,P-k)  \sqrt{2\log(2M)} & p=1, q=\infty \\
           \sigma \sqrt{2\min(k,P-k)P}   & p=2, q=2 \\
           \bar\sigma P \sqrt{\frac{2}{\pi}} & p=\infty, q=1
    \end{cases}
\end{equation}

In general, $\bar\sigma \leq \sigma \leq \sigma_*$, but if all units have equal variance then $\bar\sigma = \sigma = \sigma_*$. Comparing the first two bounds, we see that $p=1$ is tighter when:
\begin{equation}
    \min(k, P-k) \leq \frac{\sigma^2P}{4\sigma_*^2\log(2M)}
\end{equation}
i.e. it should be preferred for very small or very large $k$, especially when variances are equal or close to equal across units. Comparing the second and third bounds, we see that $p=\infty$ is preferable when:
\begin{equation}
    \min(k, P-k) \geq \frac{\bar\sigma^2P}{\sigma^2 \pi}
\end{equation}
i.e. it should be preferred when $k$ is close to $P/2$, especially when variances are very unequal across units. For the special case where all $\sigma_j$ are equal, we get the following simplified regret bound:
\begin{equation}
    \sigma \cdot \min\left(2\min(k,P-k)\sqrt{2\log(2M)}, \sqrt{2\min(k, P-k)P}, P\sqrt{2/\pi}\right)
\end{equation}

Finally, all of the norms we consider are Lipschitz functions of Gaussian random variables, which implies that they are concentrated exponentially around their means. For $p=2,\infty$, the Lipschitz constant is 1, and so with probability at least $1-\beta$, we have $\lVert \widetilde\rho - \rho \rVert_p \leq \mathbb{E}[\lVert \widetilde\rho - \rho \rVert_p] + \sigma\sqrt{2\log(1/\beta)}$. For $p=1$, the Lipschitz constant is instead $\sqrt{M}$, and so we have a looser bound of $\lVert \widetilde\rho - \rho \rVert_1 \leq \mathbb{E}[\lVert \widetilde\rho - \rho \rVert_1] + \sigma\sqrt{2M\log(1/\beta)}$.

\section{Missing Proofs from \Cref{sec:sampling}}

\subsection{Proof of Theorem \ref{thm:ila-sample-budget}}

For the remainder of this section, we fix $k = B/c$ and define the ratio $\lambda = c_2/c$. Our goal is to select parameters $n,k'$ to minimize our regret relative to the best $k$-allocation, subject to the budget constraint that $ck' + c_2n = B \Rightarrow k' + \lambda n = k$.  We model this problem as an interactive game as follows:

\begin{enumerate}
    \item \textbf{Fix input.} An adversary chooses a distribution $\mathcal{D}$ and samples $w \in [0,1]^{P} \sim \mathcal{D}$
    \item \textbf{Sample.} We choose an index set $|\mathcal{I}| = n$ arbitrarily and get to observe $w_i$ for each $i \in \mathcal{I}$
    \item \textbf{Allocate.} Using this information, we output a new index set $|\mathcal{J}| = k' \coloneq k - \lambda n$ and suffer regret $\textsc{Regret}(\mathcal{J}, w; k)$
\end{enumerate}

Our goal is to derive minimax-optimal regret bounds for this game, where the regret of algorithm $A$ on input $w$ with problem size $P,k,\lambda$ is defined by: 
\begin{equation}
    \textsc{Regret}(A, w) = \max \left \lbrace \sum_{j \in \mathcal{J}} \tau_j ~\bigg|~ |\mathcal{J}| = k \right\rbrace - \sum_{i \in A(w)} \tau_i 
\end{equation}

We define $\mathcal{A}$ to be the set of possible deterministic algorithms for the problem with fixed input size $(P,k,\lambda)$ and $\mathcal{R}$ to be the set of randomized algorithms over $\mathcal{A}$. Then, for any specific distribution over inputs $\mathcal{D}$ and any specific randomized algorithm $R'$, we have the following chain of inequalities:
\begin{align}
    \min_{A \in \mathcal{A}} \mathbb{E}_{w \sim \mathcal{D}}  \textsc{Regret}(A, w) 
    &\leq \max_{\mathcal{D}'} \min_{A \in \mathcal{A}} \mathbb{E}_{w \sim \mathcal{D}'} \textsc{Regret}(A, w) \\
    &= \min_{R \in \mathcal{R}} \max_{w} \mathbb{E}_{A \sim R} \textsc{Regret}(A, w) \\
    &\leq \max_{w} \mathbb{E}_{A \sim R'} \textsc{Regret}(A, w)
\end{align}

The two inequalities follow from the definition of $\min$ and $\max$, while the equality follows from Yao's minimax principle~\cite{yao1977probabilistic}. From here, we produce nearly matching lower and upper bounds by constructing specific $\mathcal{D}$ and $R'$.

\subsubsection{Upper Bound}

Let $R$ be a randomized algorithm with a single parameter $\kappa \in [k']$ that operates as follows: first, choose $n$ indices $\mathcal{I}$ uniformly at random. Then, select the $\kappa$ smallest revealed elements. Choose the remaining $k' - \kappa$ elements by randomly sampling the unseen indices. Our approach will be to upper-bound the regret of this algorithm by counting the number of elements it selects that are not part of the optimal index set $\mathcal{J}^*$, which is sufficient to bound regret because each element must be in the bounded set $[0,1]$.

For a fixed $w$, let $w_{s(k)}$ denote the value of the $k$th smallest element of $w$, so that $\mathcal{J}^* = \lbrace s(i) \mid i \in [k] \rbrace$. Then, the probability that a randomly sampled index is in $\mathcal{J}^*$ is $\frac{k}{P}$. So, in expectation, $\mathcal{I}$ will contain $\frac{nk}{P}$ such values. Let $Y$ denote the number of such values, which follows a hypergeometric distribution. Then our expected regret is equal to:

\begin{align}
    &\mathbb{E}\left[ k - \min(\kappa, Y) - \frac{(k'-\kappa)(k-Y)}{P-n} \right] \\
    &\leq k - \min\left(\kappa, \frac{nk}{P} \right) + \mathbb{E}\left[\max\left(0, -Y + \frac{nk}{P} \right)\right] - \mathbb{E}\left[ \frac{(k'-\kappa)(k-Y)}{P-n}\right]
    \\
    &\leq k - \min\left(\kappa, \frac{nk}{P} \right) + \sqrt{\frac{nk(P-k)(P-n)}{4P^3}} - \frac{(k'-\kappa)(k-\frac{nk}{P})}{P-n} \\
    &\leq \max\left(k-\kappa, k\left(1 - \frac{n}{P} \right)\right) - \frac{(k - n\lambda - \kappa)(k-\frac{nk}{P})}{P-n} + \sqrt{\frac{\min(n, P-n)}{16}} \\
    &= \max\left(k-\kappa, k\left(1 - \frac{n}{P} \right)\right) - \frac{(k - n\lambda - \kappa)k}{P} + \sqrt{\frac{\min(n, P-n)}{16}}
\end{align}

This is a sum of a piece-wise linear function of $n,\kappa$ and a concave function of $n$ with linear constraints, and so it must obtain its minimum at an extreme point or an elbow. That is, we must either have $n \in \lbrace 0, \min(P, k/\lambda)\rbrace$ and $\kappa \in \lbrace \max(0, k'+n-P), \min(k', n) \rbrace$, or else $\kappa = \frac{kn}{P}$.

If $n=P$, then $\kappa$ must be equal to $k'$, in which case we obtain a regret bound of $k-k'$.
If $n = k/\lambda$, then $\kappa = 0$ and our regret is $k$.
If $n=0$, then $\kappa$ must be equal to $0$ as well, and we obtain a regret bound of $k - \frac{k^2}{P}$.
Finally, if $\kappa = \frac{kn}{P}$, then our regret becomes:
\begin{align}
    k-\frac{kn}{P} - \frac{(k-n\lambda-\frac{kn}{P})k}{P} + \sqrt{\frac{\min(n, P-n)}{16}}
\end{align}
which is once again a linear function of $n$. Choosing $n=0$ from this point recovers our earlier bound; the maximum feasible point corresponds to $n = \frac{Pk}{P\lambda + k}, \kappa = k' = \frac{k^2}{P\lambda +k}$. So, there are only three plausible minimizers for our regret, which correspond to sampling the entire population, allocating at random, or something in the middle:
\begin{equation}
    \begin{cases}
        \lambda P & n = P \\
        \frac{P\lambda k}{P\lambda+k}+ \sqrt{\frac{\min(Pk, P^2\lambda)}{16(P\lambda+k)}} & n = \frac{Pk}{P\lambda + k}\\
        k - \frac{k^2}{P} & n=0
    \end{cases}
\end{equation}
Sampling the entire population is clearly never optimal because $\frac{k}{P\lambda + k} < 1$. Choosing $n = \frac{Pk}{P\lambda + k}$ becomes preferable asymptotically when:
\begin{align}
    \frac{P\lambda k}{k+P\lambda} 
    &\leq 
    k - \frac{k^2}{P} 
    \\
    P^2 \lambda k
    &\leq
    k^2P + kP^2\lambda - k^3 - k^2P\lambda
    \\
    \lambda 
    &\leq 
    \frac{k^2(P-k)}{k^2P}
    \\
    \lambda 
    &\leq 
    1 - \frac{k}{P}
\end{align}
giving us a final upper bound of:
\begin{align}
    \begin{cases}
        \frac{P\lambda k}{P\lambda+k} + \sqrt{\frac{\min(Pk, P^2\lambda)}{16(P\lambda+k)}} & \lambda < \frac{P-k}{P} \\
        k - \frac{k^2}{P} & \lambda \geq \frac{P-k}{P}
    \end{cases}
\end{align}

\paragraph{Concentration of Regret} Whichever regime we find ourselves in, our upper-bound on regret is equal to a constant plus a 1-Lipschitz function of a Hypergeometric random variable $Y$. When $\lambda \geq \frac{P-k}{P}$, $Y$ has population size $P$, sample size $k$, and number of success $k$, and so we can use Lemma \ref{lem:hypergeom-concentration} to conclude that with probability at least $1-\beta$,
\begin{align}
    \textsc{Regret}(A, w)
    \leq \mathbb{E}[\regret] + \sqrt{\frac{P-k}{P}}\sqrt{k\log(1/\beta)} = \sqrt{\frac{k(P-k)}{P}\log(1/\beta)}
\end{align}

Symmetrically, when $\lambda < \frac{P-k}{P}$, $Y$ has population size $P$, sample size $n=\frac{Pk}{P\lambda+k}$, and number of successes $k$, and so with probability at least $1-\beta$,
\begin{align} 
\textsc{Regret}(A, w)
    &\leq \mathbb{E}[\regret] + \sqrt{\frac{P\lambda}{P\lambda+k}}\sqrt{\frac{Pk}{P\lambda+k}\log(1/\beta)} \\
    &= \mathbb{E}[\regret] + \frac{P}{P\lambda+k}\sqrt{\lambda k \log(1/\beta)}
\end{align}

\subsubsection{Lower Bound}

Let $\mathcal{D}$ be a uniform distribution over vectors of welfare scores $w \in \lbrace 0, 1 \rbrace^P$ with $\lVert w \rVert_1 = k$. Because we are uniform, it makes no difference which set $\mathcal{I}$ the algorithm chooses beyond the value of $n$. The regret of the algorithm is exactly the number of non-0s in their final output times the maximum treatment effect, and so their optimal strategy is to first select every 0 revealed by $\mathcal{I}$. Then, if they've chosen fewer than $k'$ elements, they should sample the remainder from the unseen indices (and like before, it doesn't matter what strategy they use for this). Let $Z$ be the Hypergeometric random variable representing the number of 0s found in our initial sample. Then we have:
\begin{align}
    \mathbb{E}[\textsc{Regret}(A, w)] &= k - \mathbb{E}\left[ \min(Z, k') + \mathbb{I}[Z<k']\frac{(k'-Z)(k'-Z)}{P-n} \right]
\end{align}

Here we note that under the condition that $k' \leq k - (P - n) \Rightarrow \lambda n \geq P-n \Rightarrow n \geq \frac{P}{1+\lambda}$, the pigeonhole principle ensures that $\mathbb{P}(Z \geq k') = 1$, in which case our regret is exactly $k-k' = \lambda n$. Assuming otherwise, we have:
\begin{align}
    & \mathbb{E}[\textsc{Regret}(A, w)]\\
    &\geq k -  \min\left(k',\frac{nk}{P}\right) - \mathbb{E}\left[\frac{(k'-Z)(k'-Z)}{P-n} \right] \\
    &= k - \min\left(k',\frac{nk}{P}\right) - \frac{1}{P-n}\left( k'^2 - 2k'\frac{nk}{P} + \frac{nk(P-k)(P-n)}{P^3} + \frac{(nk)^2}{P^2} \right) \\
    &\geq k - \min\left(k',\frac{nk}{P}\right) - \frac{1}{P^2(P-n)}\left( P^2k'^2 - 2Pnkk' + n^2k^2 \right) - \frac{n}{4P} \\
    &= k - \min\left(k',\frac{nk}{P}\right) - \frac{1}{P^2(P-n)}\left( Pk' - nk \right)^2 - \frac{n}{4P} \\
    &= k - \min\left(k',\frac{nk}{P}\right) - \frac{(k' - \frac{nk}{P})^2}{P-n} - \frac{n}{4P} \\
    &= \max\left(\lambda n, k\left(1 - \frac{n}{P} \right) \right) - \frac{(k - n(\lambda + \frac{k}{P}))^2}{P-n} - \frac{n}{4P}
\end{align}

We will show that this function is concave over the intervals $n \in [0, \frac{Pk}{P\lambda+k}]$ and $[\frac{Pk}{P\lambda + k}, \frac{P}{1+\lambda}]$ and therefore minimized at one of those three extreme points.

Over both intervals, the max is always equal to a linear function of $n$, and so it suffices to prove that the quadratic term is concave. Let $C = \lambda + \frac{k}{P}$. Then we have:

\begin{align}
    \frac{-(k - nC)^2}{P-n}
    &= \frac{-(k + (P-n-P)C)^2}{P-n} \\
    &=  \frac{-((P-n)C + (k-PC))^2}{P-n} \\
    &= \frac{-((P-n)C + P\lambda)^2}{P-n} \\
    &= -(P-n)C^2 - 2CP\lambda - \frac{P^2\lambda^2}{P-n}
\end{align}
Taking second derivative with respect to $n$, we get:
\begin{align}
    - \frac{2P^2\lambda^2}{(P-n)^3}
\end{align}
which is always negative, showing that our lower bound is a concave function of $n$ over the given intervals, as desired. We now evaluate our objective function at each critical value, ignoring the effectively-constant term $\frac{n}{4P}$ for simplicity:
\begin{equation}
    \begin{cases}
        k\left(1 - \frac{k}{P} \right) & n=0 \\
        \frac{P\lambda k}{P\lambda+k} & n=\frac{Pk}{P\lambda + k} \\
        \frac{P\lambda}{1+\lambda} & n=\frac{P}{1+\lambda}
    \end{cases}
\end{equation}

From the fact that $k \leq P$, we can immediately exclude $n = \frac{P}{1+\lambda}$. Choosing $n=\frac{Pk}{k+P\lambda}$ is preferable to random allocation when:
\begin{align}
    \frac{P\lambda k}{k+P\lambda} 
    &\leq 
    k - \frac{k^2}{P} 
    \\
    P^2 \lambda k
    &\leq
    k^2P + kP^2\lambda - k^3 - k^2P\lambda
    \\
    \lambda 
    &\leq 
    \frac{k^2(P-k)}{k^2P}
    \\
    \lambda 
    &\leq 
    1 - \frac{k}{P}
\end{align}

We conclude that the worst-case regret of any fixed-budget algorithm for ILA with sampling must be at least:
\begin{equation}
    \begin{cases}
        k\left(1-\frac{k}{P}\right) & \lambda \geq 1 - \frac{k}{P} \\
        \frac{(P\lambda - \frac{1}{4}) k}{P\lambda + k} & \lambda < 1 - \frac{k}{P}
    \end{cases}
\end{equation}
as desired.

\subsection{Proof of Lemma \ref{lem:inequality-variance}}

Consider $\rho$, the vector of true probabilities, and let $\sigma^2 = \frac{1}{M} \sum_{j=1}^M \rho_j(1-\rho_j)$. Our goal is to upper-bound $\sigma^2$ in terms of the Gini coefficient $G$. We accomplish this by maximizing $G$ subject to constraints on $\bar\rho$ and $\sigma^2$. I.E. we want to maximize:

\begin{align}
    f(\rho) \coloneq M^2 \bar\rho \cdot G = \sum_{1 \leq i < j \leq M} (\rho_{s(j)} - \rho_{s(i)}) = \sum_{k=1}^M (2k -M - 1) \rho_{s(k)}
\end{align}
subject to the constraints that:
\begin{align}
    \left[\frac{1}{M}\sum_{i=1}^M \rho_{s(i)}(1-\rho_{s(i)})\right] - \sigma^2 = 0 \\
    \frac{1}{M} \sum_{i=1}^M \rho_{s(i)} - \bar\rho = 0
\end{align}

Here, we ignore the requirement that $\rho_i \in [0,1]$. This can only increase the value of our objective function and therefore does not impact the validity of our final bound. Moving on, we define the Lagrangian:
\begin{equation}
    \mathcal{L}(x, \lambda_1, \lambda_2) = \sum_{k=1}^M (2k -M - 1) \rho_{s(k)} 
    + \lambda_1 \left(\left[\frac{1}{M}\sum_{i=1}^M \rho_{s(i)}(1-\rho_{s(i)})\right] - \sigma^2 \right) 
    + \lambda_2 \left( \frac{1}{M} \sum_{i=1}^M \rho_{s(i)} - \bar\rho \right) 
\end{equation}
Taking partial derivatives, we get:
\begin{align}
    \frac{\partial \mathcal{L}}{\partial \rho_{s(k)}} &= 2k - M - 1 + \frac{\lambda_1}{M}(1 - 2\rho_{s(k)}) + \frac{\lambda_2}{M} \\
    \frac{\partial \mathcal{L}}{\partial \lambda_1} &= \left[\frac{1}{M}\sum_{i=1}^M \rho_{s(i)}(1-\rho_{s(i)})\right] - \sigma^2  \\
    \frac{\partial \mathcal{L}}{\partial \lambda_2} &=  \frac{1}{M} \sum_{i=1}^M \rho_{s(i)} - \bar\rho \\
\end{align}
Setting all of these equal to 0 and rearranging the first one, we get:
\begin{equation}
    \rho_{s(k)} = \frac{M}{2\lambda_1}\left(2k - M - 1 + \frac{\lambda_2}{M}\right) + \frac{1}{2} 
    = \frac{M\left(2k - M - 1\right) + \lambda_2 + \lambda_1}{2\lambda_1} 
\end{equation}
Plugging in the second constraint then gives us:
\begin{align}
    \bar\rho &= \frac{1}{2\lambda_1} \left(\lambda_2 + \lambda_1 +\sum_{k=1}^M (2k-M-1)\right) \\
    2\lambda_1 (\bar\rho - 1/2) &= \lambda_2
\end{align}
And plugging this back into the above gives us:
\begin{equation}
    \rho_{s(k)} = 
    \frac{M\left(2k - M - 1\right) + 2\lambda_1(\bar\rho-1/2) + \lambda_1}{2\lambda_1} = \frac{M(2k-M-1)}{2\lambda_1} + \bar\rho
\end{equation}
i.e. the optimal unit profile vector is evenly spaced. Finally, plugging this in to the first constraint gives us:
\begin{align}
    \sigma^2 
    &= \frac{1}{M} \sum_{k=1}^M \left( \frac{M(2k-M-1)}{2\lambda_1} + \bar\rho\right) \left( 1 - \frac{M(2k-M-1)}{2\lambda_1} - \bar\rho\right) \\
    &= \bar\rho - \frac{1}{M}\sum_{k=1}^M \left(\frac{M(2k-M-1)}{2\lambda_1} + \bar\rho\right)^2 \\
    &= \bar\rho(1 - \bar\rho) - \frac{1}{M} \sum_{k=1}^M \left(\frac{M(2k-M-1)}{2\lambda_1}\right)^2 \\
    \Rightarrow 4\lambda_1^2(\sigma^2 - \bar\rho(1-\bar\rho))
    &= \frac{-1}{M}\sum_{k=1}^M \left(M(2k-M-1)\right)^2 \\
    &= \frac{-1}{3}M^2(M^2-1) \\
    \Rightarrow \lambda_1^2
    &= \frac{M^2(M^2-1)}{12(\bar\rho(1-\bar\rho) - \sigma^2)} \\
    \Rightarrow \lambda_1 &= 
    M \sqrt{\frac{M^2-1}{12(\bar\rho(1-\bar\rho) - \sigma^2)}}
\end{align}

This then implies that:
\begin{equation}
    \rho_{s(k)} = \bar\rho + \frac{\sqrt{3(\bar\rho(1-\bar\rho) - \sigma^2)}(2k-M-1)}{\sqrt{M^2-1}}
\end{equation}

Finally, we plug this into our objective function to get:

\begin{align}
    f(\rho) \coloneq M^2 \bar\rho \cdot G 
    &= \sum_{k=1}^M (2k -M - 1) \rho_{s(k)} \\
    &= \sum_{k=1}^M (2k -M - 1) \left( \bar\rho + \frac{\sqrt{3(\bar\rho(1-\bar\rho) - \sigma^2)}(2k-M-1)}{\sqrt{M^2-1}} \right) \\ 
    &= \sum_{k=1}^M (2k -M - 1)^2 \left(\frac{\sqrt{3(\bar\rho(1-\bar\rho) - \sigma^2)}}{\sqrt{M^2-1}} \right) \\
    &= M\sqrt{\frac{(M^2-1)(\bar\rho(1 - \bar\rho) - \sigma^2)}{3}}
\end{align}

Implying that:

\begin{equation}
    G = \sqrt{\frac{(M^2-1)(\bar\rho(1 - \bar\rho) - \sigma^2)}{3M^2\bar\rho^2}} \leq \sqrt{\frac{1 - \bar\rho}{3} - \frac{\sigma^2}{3\bar\rho^2} }
\end{equation}

and therefore that:

\begin{equation}
    \sigma^2 \leq \bar\rho(1-\bar\rho) - 3\bar\rho^2 \cdot G^2
\end{equation}

as desired. \qed

\subsection{Proof of Theorem \ref{thm:ula-sample}}

We begin by considering a simplification of the sampling problem with fixed $k$ and $n$ before proceeding to consider the fixed-budget variant.

Fix $k$ and $n$. By Lemma \ref{lem:gaussian-mechanism} and the central limit theorem for hypergeometric random variables~\cite{madow1948limiting}, we know that each component $\widetilde\rho_j$ of our private estimated unit profile vector is approximately Gaussian with variance:
\begin{equation}
    \sigma_j^2 = \frac{M^2}{2\psi n^2} + \frac{M(N-n/M)\rho_j(1-\rho_j)}{nN}
\end{equation}
Our goal is to bound the $\ell_p$ norm of this vector for $p \in \lbrace1, 2, \infty \rbrace$ so that we can apply Lemma \ref{lem:ula-helper}. To start, we have that:
\begin{equation}
    \sigma_* \coloneq\max_j \sigma_j^2 \leq \frac{M^2}{2\psi n^2} + \frac{M(N-n/M)}{4nN} = \frac{M^2}{2\psi n^2} + \frac{P-n}{4nN}
\end{equation}
and by Lemma \ref{lem:inequality-variance}, we know that:
\begin{equation}
    \sigma^2 \coloneq\frac{1}{M} \sum_{j=1}^M \sigma_j^2 \leq \frac{M^2}{2\psi n^2} + \frac{(P-n)(\bar\rho_M(1-\bar\rho_M)-3\bar\rho_M^2G^2)}{nN}
\end{equation}
Finally, Jensen's inequality then gives us that:
\begin{equation}
    \bar\sigma \coloneq \frac{1}{M} \sum_{j=1}^M \sigma_j \leq \sigma
\end{equation}
Applying Lemma \ref{lem:ula-helper}, our combined error from privacy and statistical estimation becomes:
\begin{equation}
    \begin{cases}
        \sqrt{\frac{M^2}{2\psi n^2} + \frac{P-n}{4nN}} \cdot 2\min(k, P-k)\sqrt{2\log(2M)} & p=1, q=\infty \\
        \sqrt{\frac{M^2}{2\psi n^2} + \frac{(P-n)(\bar\rho_M(1-\bar\rho_M)-3\bar\rho_M^2G^2)}{nN}} \cdot \sqrt{2\min(k, P-k)P} & p=2, q=2 \\
        \sqrt{\frac{M^2}{2\psi n^2} + \frac{(P-n)(\bar\rho_M(1-\bar\rho_M)-3\bar\rho_M^2G^2)}{nN}} \cdot P\sqrt{\frac{2}{\pi}} & p=\infty, q=1
    \end{cases}
\end{equation}

\subsection{Proof of Corollary \ref{cor:ula-sample-budget}}

We now consider optimizing regret subject to budgetary constraints. Let $k' = k - n\lambda \geq 0$ denote the actual number of individuals we treat. The simplest approach would be to imagine that we first decide on a $k$-allocation and then arbitrarily remove $n\lambda$ individuals, which can increase our normalized regret by at most $n\lambda$. If we instead compute a $k$-allocation and then randomly allocate aid to a $\frac{k-\lambda n}{k}$ portion of the individuals in each treated unit, then we obtain a tighter bound. Let $\tilde V$ denote the value of the private $k$-allocation, $\tilde V_\lambda$ be the value of the subsampled allocation, and $V^*$ the value of the optimal allocation. Then we have:
\begin{align}
    \tilde V_\lambda &= \left(1 - \frac{\lambda n}{k} \right) \tilde V \\
    V^* - \tilde V_\lambda &= V^* - \left(1 - \frac{\lambda n}{k} \right) \tilde V
\end{align}
So, our regret is maximized when $\tilde V$ reaches its minimum value, implying that our total expected regret is at most:
\begin{equation}
    \overline\regret(\widetilde\ULA_\lambda, w; k) \leq \overline\regret{\widetilde \ULA, w; k)} + \lambda n (1-\bar\rho_K)
\end{equation}

This is strictly better than the na\"ive upper bound of $\lambda n$ whenever $\bar\rho_K > 0$, although we note that $\bar\rho_K$ is not generally available prior to deciding on the value of $n$. We will therefore ignore all terms depending on $\bar\rho_M$, $\bar\rho_K$, and $G$ when optimizing parameters.

From here, we consider possible extreme point solutions. The choice of $n=0$ essentially reduces to random allocation in expectation, which we have analyzed elsewhere. Meanwhile, the choice of $n=P$ allows us to reduce to the Direct setting, but with an additional linear error term of $P\lambda(1-\bar\rho_K)$. Otherwise, we have the following simplified upper-bound:

\begin{equation}
    \lambda n  +
    C\sqrt{\frac{M^2}{2\psi n^2} +  \frac{P-n}{4nN}} \leq \lambda n + C \left(\sqrt{\frac{M^2}{2\psi n^2}} + \sqrt{\frac{P-n}{4nN}} \right)
\end{equation}
where
\begin{equation}
    C = 
    \min\begin{cases}
        2\min(k, P-k)\sqrt{2\log(2M)}  \\
        \sqrt{2\min(k, P-k)P} \\
        P\sqrt{\frac{2}{\pi}}
    \end{cases}
\end{equation}

We optimize each term independently. If the first term is larger, then taking derivative and setting equal to 0, we get:

\begin{align}
    \lambda + C\left( -\frac{\sqrt{M^2}}{n^2\sqrt{2\psi}} \right) &= 0 \\
    \lambda n^2 \sqrt{2\psi} &= CM \\
    n = \sqrt{\frac{CM}{\lambda \sqrt{2\psi}}}
\end{align}

In which case the objective becomes:
\begin{equation}
    \lambda n + \sqrt{\frac{\lambda CM}{\sqrt{2\psi}}}
\end{equation}

If the second term is larger, we additionally upper bound $\frac{P-n}{N} \leq \frac{P}{N} = M$, and get:

\begin{align}
    \lambda + \frac{C}{2}\sqrt{\frac{4n}{M}}\left( \frac{-M}{4n^2} \right) &= 0 \\
    4 \lambda  n^2 &= C\sqrt{Mn} \\
    16 \lambda^2 n^3 &= C^2 M \\
    n &=  \left(\frac{C^2M}{16\lambda^2} \right)^{1/3}
\end{align}

Plugging this into our objective gives us:

\begin{equation}
    \lambda n + \frac{C}{2} \sqrt{\frac{4^{2/3}M^{2/3}\lambda^{2/3}}{C^{2/3}}} = \left( \frac{C^{2}M\lambda}{2} \right)^{1/3}
\end{equation}

Finally, choosing $n = \max \left(\sqrt{\frac{CM}{\lambda \sqrt{2\psi}}}  , \left(\frac{C^2M}{16\lambda^2} \right)^{1/3}\right)$ guarantees a final regret bound of:
\begin{equation}
    \max \left(\sqrt{\frac{CM\lambda}{\sqrt{2\psi}}}  , \left(\frac{C^2M\lambda}{16} \right)^{1/3}\right) + \sqrt{\frac{CM \lambda }{\sqrt{2\psi}}} + \left( \frac{C^{2}M\lambda}{2} \right)^{1/3} 
    \leq 2 \sqrt{\frac{CM\lambda}{\sqrt{2\psi}}} + \frac{2^{2/3}}{4} \left(C^2M\lambda \right)^{1/3}
\end{equation}

Since $C = O(P)$, it follows that $n = O(P^{2/3})$, and so in fact $\frac{P-n}{N} \sim M$ and we haven't lost anything from the additional upper bound.

\section{Missing Proofs from \Cref{sec:learning}}

\subsection{Proof of Lemma \ref{lem:ila-sco}}

For each $i$, let $y_i$ denote the realized binary label, and $\hat \eta_i$ denote the probability predicted by our model. By applying Lemma \ref{lem:ila-helper}, we obtain a normalized regret bound with respect to optimal $k$-allocation in hindsight of:
\begin{equation}
    k-k' + k' \bar \eta_{k'} + \sum_{i \in \mathcal{I}} |y_i - \hat y_i| \leq k-k' + k' \bar \eta_{k'} + \sum_{i=1}^P |y_i - \hat y_i|
\end{equation}

Since $\mathbb{E}[|y_i - \hat y_i|]^2 \leq \mathbb{E}[(y_i - \hat y_i)^2] = \alpha$, it follows that $\mathbb{E}[|y_i - \hat y_i|] \leq \sqrt{\alpha}$, and so in expectation our regret against the Bayes' optimal ILA is bounded by $k-k' + k' \bar \eta_{k'} +P\sqrt{\alpha}$.

Since this upper-bound comes from a cap on the population error, we can immediately derive a high-probability version by constructing a high-probability bound on the quantity:
\begin{equation}
    \sum_{i=1}^P (y_i - \hat y_i)^2
\end{equation}
since these are bounded independent random variables in $[0,1]$, Hoeffding gives us that:
\begin{equation}
    \mathbb{P} \left( \sum_{i=1}^P (y_i - \hat y_i)^2 > P\alpha + t \right) \leq \exp\left( \frac{-2t^2}{P} \right)
\end{equation}
And so, with probability at least $1-\beta$, we have:
\begin{align}
    \sum_{i=1}^P (y_i - \hat y_i)^2 
    & \leq 
    P\alpha + \sqrt{\frac{P\log(1/\beta)}{2}}
\end{align}
Combined with the above, this gives a total regret of:
\begin{equation}
    k-k' + k'\bar \eta_{k'} + P\sqrt{\alpha + \sqrt{\frac{\log(1/\beta)}{2P}}} \leq k-k' + k'\bar \eta_{k'} +P\sqrt{\alpha} + P^{3/4}\sqrt{\log(1/\beta)/2}
\end{equation}

\subsection{Proof of Theorem \ref{thm:ila-sco-private}}

The theorem follows directly by combining Lemma \ref{lem:ila-sco} with the utility analysis of \autoref{alg:ila-direct}.

\subsection{Proof of Corollary \ref{cor:ila-sco-budget}}

 When $\widetilde\ILA$ is instantiated with Algorithm 2 of \cite{feldman2020private}, its asymptotic normalized regret grows like:
    \begin{equation}
        n\lambda + P\sqrt{\sigma^2 + E^* + 10LD \left( \frac{1}{\sqrt{n}} + \frac{\sqrt{p}}{n\sqrt{2\psi}} \right)}
    \end{equation}

\subsubsection{Generic Bounds}

To derive a bound that doesn't depend on using the unknown quantity $\mathcal{L}(\theta^*)$ when choosing parameters, we use that:

\begin{equation}
    \sqrt{\sigma^2 + E^* + 10LD \left( \frac{1}{\sqrt{n}} + \frac{\sqrt{p}}{n\sqrt{2\psi}} \right)} \leq \sqrt{\mathcal{L}(\theta^*)} + \sqrt{10LD \left( \frac{1}{\sqrt{n}} + \frac{\sqrt{p}}{n\sqrt{2\psi}} \right)}
\end{equation}

Supposing that $n$ is small so the privacy term dominates, we get:
\begin{equation}
    n\lambda + P\sqrt{\frac{10LD\sqrt{p}}{n\sqrt{2\psi}}} \eqqcolon n\lambda + \frac{A_1}{\sqrt{n}}
\end{equation}

Choosing $n = \left( \frac{A_1}{\lambda} \right)^{2/3}$ yields a final regret of:
\begin{equation}
    P\sqrt{\mathcal{L}(\theta^*)} + 2A_1^{2/3}\lambda^{1/3} = P\sqrt{\mathcal{L}(\theta^*)} +  2\lambda^{1/3} \left(\frac{P\sqrt{10LD\sqrt{p}}}{(2\psi)^{1/4}} \right)^{2/3} = O\left(P\sqrt{\mathcal{L}(\theta^*)} +  (LD\lambda/\sqrt{\psi})^{1/3}P^{2/3}p^{1/6} \right)
\end{equation}

And if instead $n$ is large, we get:

\begin{equation}
 n\lambda + P\sqrt{\frac{10LD}{\sqrt{n}}} \eqqcolon n\lambda + A_2 n^{-1/4}    
\end{equation}

Choosing $n = (A_2/\lambda)^{4/5}$ yields:
\begin{equation}
    P\sqrt{\mathcal{L}(\theta^*)} + 2A_2^{4/5}\lambda^{1/5} = P\sqrt{\mathcal{L}(\theta^*)} +  2\lambda^{1/5} \left( P\sqrt{10LD} \right)^{4/5} = O\left(P\sqrt{\mathcal{L}(\theta^*)} +  \lambda^{1/5}P^{4/5}(LD)^{2/5} \right)
\end{equation}

So, by choosing $n = \max\left( (A_1/\lambda)^{2/3}, (A_2/\lambda)^{4/5} \right)$, we can guarantee a worst-case regret bound of:
\begin{equation}
    P\sqrt{\mathcal{L}(\theta^*)} + 2\lambda^{1/3} \left(\frac{P\sqrt{10LD\sqrt{p}}}{(2\psi)^{1/4}} \right)^{2/3} + 2\lambda^{1/5} \left( P\sqrt{10LD} \right)^{4/5}
\end{equation}

\subsubsection{Instance-Dependent Bounds}

If we instead have access to a lower bound $\mathcal{L}(\theta^*) \geq L^\downarrow > 0$, then we can upper-bound our regret using the Taylor expansion:

\begin{equation}
    \lambda n + P\sqrt{\sigma^2 + E^* + 10LD \left( \frac{1}{\sqrt{n}} + \frac{\sqrt{p}}{n\sqrt{2\psi}} \right)} \leq \lambda n + P\sqrt{\mathcal{L}(\theta^*)} + \frac{10PLD \left( \frac{1}{\sqrt{n}} + \frac{\sqrt{p}}{n\sqrt{2\psi}} \right)}{2\sqrt{L^\downarrow}}
\end{equation}

Suppose that $n$ is small so that the privacy term dominates. Then we get:
\begin{equation}
    \lambda n + \frac{5PLD\sqrt{p}}{n\sqrt{2\psi L^\downarrow}} \eqqcolon \lambda n + \frac{A_3}{n}
\end{equation}
The right-hand side is optimized when $n = \sqrt{\frac{A_3}{\lambda}}$. Plugging into the objective function then gives:
\begin{equation}
    P\sqrt{\mathcal{L}(\theta^*)} + 2\sqrt{A_3\lambda} = P\sqrt{\mathcal{L}(\theta^*)} + 2\sqrt{\frac{5PLD\lambda\sqrt{p}}{\sqrt{2\psi L^\downarrow}}} = O\left(P\sqrt{\mathcal{L}(\theta^*)} + \frac{(PLD\lambda)^{1/2}p^{1/4}}{\psi^{1/4}(L^\downarrow)^{1/4}} \right)
\end{equation}
If instead $n$ is large so the statistical term dominates, then we get:
\begin{equation}
    \lambda n + P\sqrt{\mathcal{L}(\theta^*)} + \frac{5PLD}{\sqrt{n L^\downarrow}} \eqqcolon\lambda n + P\sqrt{\mathcal{L}(\theta^*)} + \frac{A_4}{\sqrt{n}}
\end{equation}
The right-hand-side is optimized when $n = \left( \frac{A_4}{\lambda} \right)^{2/3}$, and plugging into our objective function gives us:
\begin{equation}
    P\sqrt{\mathcal{L}(\theta^*)} + 2(A_4\lambda)^{2/3} = P\sqrt{\mathcal{L}(\theta^*)} +  2\left( \frac{5PLD\lambda}{\sqrt{L^\downarrow}} \right)^{2/3} = O\left(P\sqrt{\mathcal{L}(\theta^*)} + (PLD\lambda)^{2/3} (L^\downarrow)^{-1/3} \right)
\end{equation}

So, choosing $n = \max\left( \sqrt{A_3/\lambda}, (A_4/\lambda)^{2/3} \right)$ gives a combined regret bound of:
\begin{equation}
    P\sqrt{\mathcal{L}(\theta^*)} + 2\sqrt{\frac{5PLD\lambda\sqrt{p}}{ \sqrt{2\psi L^\downarrow}}} + 2\left( \frac{5PLD\lambda}{\sqrt{L^\downarrow}} \right)^{2/3}
\end{equation}

\subsection{Proof of Lemma \ref{lem:private-membership}}

Our approach is presented in \autoref{alg:ula-private-membership}. The intuition behind the algorithm is that non-private ULA is equivalent to ILA when each individual's welfare score is replaced with the $\rho_j$ score of their unit. So, given a $\psi_1$-zCDP estimate of the unit profile vector, we can repurpose \autoref{alg:ila-direct} with privacy paramter $\psi_2$ to compute a Joint DP allocation. The final allocation then satisfies Joint $(\psi_1 + \psi_2)$-zCDP by composition~\cite{bun2016concentrated} and Lemma \ref{lem:billboard}.

Two small adjustments must be made when estimating the unit profile vector. First, we must redefine our notion of adjacency to include unit membership, which increases the replace-one sensitivity of $\rho$ by a factor of $\sqrt{2}$. Secondly, because the sensitivity of $\rho_j$ scales with $|U_j|^{-1}$, which is now private, getting the additive noise scale as tight as possible requires some subtlety~\cite{kulesza2023mean}. For ease of exposition, we sidestep this point by assuming that the population of each unit is lower-bounded by $N$, and that this fact is publicly known. 

With those adjustments made, our regret decomposes into the baseline regret of ULA, the regret from estimating $\rho$, and the regret from \autoref{alg:ila-direct}. Combined, we get that in expectation, $\overline\regret(\widetilde \ULA, w; k)$ is at most:
\begin{equation}
    k\bar\rho_K + (1+o(1)) \left[ \frac{3\log^{3/2}P}{\pi\sqrt{\psi_2}} \right] + \frac{\sqrt{2}}{N\sqrt{\psi_1}} \cdot C
\end{equation}
Choosing $\frac{\psi_2}{\psi_1} = \left( \frac{3\log^{3/2}P}{\pi} \cdot \frac{N}{C\sqrt{2}} \right)^{2/3}$ then yields a combined regret bound of:
\begin{equation}
    k\bar\rho_K + \frac{1}{\sqrt{\psi}} \left[ \left(\frac{3}{\pi} \right)^{2/3} \log P + \left( \frac{C\sqrt{2}}{N} \right)^{2/3} \right]^{3/2} = k\bar\rho_K + O\left( \frac{C}{N\sqrt{\psi}} \right)
\end{equation}

\begin{remark}
    Under the condition that $|\widetilde\rho_i - \widetilde\rho_j| > \frac{2}{k\pi\sqrt{\psi_2}}$ for all $i \neq j$, then \autoref{alg:ula-private-membership} satisfies the property (shared by \autoref{alg:ula-direct}) that if any individual from a unit with estimated profile $\widetilde\rho_j$ receives treatment, then all individuals from units with estimated profile $\widetilde\rho_i < \widetilde\rho_j$ must receive treatment as well. When units are tied or nearly tied, then it is possible that \autoref{alg:ula-private-membership} will allocate aid to only a fraction of individuals across multiple units.
\end{remark}

\subsection{Proof of Lemma \ref{lem:ula-sco-baseline}}

From the proof of Lemma \ref{lem:ula-helper}, we know that the regret of ULA with estimated welfare scores is at most:
\begin{align}
    k\bar\rho_K +\sum_{j=1}^M (x_j^* - \tilde x_j^*) (\widetilde\rho_j - \rho_j)
    &\leq k\bar\rho_K + \sum_{j=1}^M |x_j^* - \tilde x_j^*| |\widetilde\rho_j - \rho_j| \\
    &\leq k\bar\rho_K + \sum_{j=1}^M N_j |\widetilde\rho_j - \rho_j|
\end{align}

From here, we have that:

\begin{align}
\mathbb{E}\left[\sum_{j=1}^M N_j |\widetilde\rho_j - \rho_j| \right]
&=
    \mathbb{E}\left[ \sum_{k=1}^{|\mathscr{P}|} \left|\sum_{i \in U_k} y_i - \hat y_i \right| \right] \\
    &= \mathbb{E}\left[ \sum_{k=1}^{|\mathscr{P}|} \left|\sum_{i \in U_k} 
        (y_i - \mathbb{E}[y_i|x_i]) - 
        (\hat y_i - \mathbb{E}[ y_i|x_i]) \right| \right] \\
    &\leq \mathbb{E}\left[ \sum_{k=1}^{|\mathscr{P}|} 
        \left|\sum_{i \in U_k} (y_i - \mathbb{E}[y_i|x_i]) \right| +
    \sum_{k=1}^{|\mathscr{P}|}
        \left| \sum_{i \in U_k} (\hat y_i - f(x_i)) \right| \right] \\
    &\leq \sqrt{P|\mathscr{P}| \sigma^2} + P\cdot\mathbb{E}\left[|\hat y - \mathbb{E}[y\mid x]| \right] \\
    &\leq \sqrt{P |\mathscr{P}|\sigma^2} + P\sqrt{\alpha - \sigma^2}
\end{align}

As desired.

\subsection{Proof of Theorem \ref{thm:ula-sco-private}}

The theorem follows directly from combining the regret terms from Lemma \ref{lem:ula-sco-baseline} and Lemma \ref{lem:private-membership}.

\subsection{Proof of Corollary \ref{cor:ula-sco-budget}}

The logic of the proof is essentially identical to the proof of Corollary \ref{cor:ila-sco-budget}, except that we replace all instances of $\mathcal{L}(\theta^*)$ with $E^*$ and replace the lower bound $L^\downarrow$ with $0<E^\downarrow \leq E^*$.


\end{document}